\definecolor{PineGreen}{rgb}{0.0,0.47,0.44}
\definecolor{MidnightBlue}{rgb}{0.1,0.1,0.44}
\definecolor{magenta}{rgb}{1.0,0.0,1.0}
\definecolor{org1}{rgb}{.92,.39,.21}
\definecolor{pur2}{rgb}{.53,.47,.7}
\definecolor{darkTeal}{HTML}{0A4F4D}
\definecolor{brickRed}{HTML}{AB2229}
\newtheorem{theorem}{Theorem}
\numberwithin{theorem}{section}
\newtheorem{proposition}[theorem]{Proposition}
\newtheorem{corollary}[theorem]{Corollary}
\theoremstyle{definition}
\newtheorem{definition}[theorem]{Definition}
\theoremstyle{remark}
\newtheorem{remark}[theorem]{Remark}
\newtheorem{example}[theorem]{Example}
\newcommand{\Gr}{\mathbf{Gr}}
\newcommand{\RR}{\mathbb{R}}
\newcommand{\QQ}{\mathbb{Q}}
\newcommand{\PP}{\mathbb{P}}
\newcommand{\CC}{\mathbb{C}}
\newcommand{\ZZ}{\mathbb{Z}}
\newcommand{\NN}{\mathbb{N}}
\newcommand{\jac}{\mathscr{J}}
\newcommand{\U}{\mathcal{U}}
\newcommand{\F}{\mathcal{F}}
\newcommand{\G}{\mathcal{G}}
\newcommand{\conv}{\mathrm{conv}}
\newcommand{\vol}{\mathrm{vol}}
\newcommand{\newt}{\mathrm{Newt}}
\newcommand{\supp}{\mathrm{Supp}}
\newcommand{\rank}{\mathrm{rank}}
\newcommand{\cI}{\mathcal{I}}
\newcommand{\cZ}{\mathcal{Z}}
\newcommand{\mCm}{\mathcal{Y}}
\newcommand{\cY}{\mathcal{Y}}
\newcommand{\Gm}{G}
\newcommand{\cQ}{\mathcal{Q}}
\newcommand{\be}{\begin{equation}}
\newcommand{\ee}{\end{equation}}
\newcommand{\ma}{\texttt{Mathematica}}
\title{\boldmath Symbol Alphabets from the Landau Singular Locus}
\author[a]{Christoph Dlapa,}
\author[b]{Martin Helmer,}
\author[a]{Georgios Papathanasiou\,}
\author[a]{and Felix Tellander\,}
\affiliation[a]{Deutsches Elektronen-Synchrotron DESY,\\ Notkestr.~85, 22607 Hamburg, Germany.}
\affiliation[b]{Department of Mathematics,\\ North Carolina State University, Raleigh, USA.}
\emailAdd{christoph.dlapa@desy.de}
\emailAdd{mhelmer@ncsu.edu}
\emailAdd{georgios.papathanasiou@desy.de}
\emailAdd{felix@tellander.se}
\preprint{DESY-23-048}
\abstract{We provide evidence through two loops, that rational letters of polylogarithmic Feynman integrals are captured by the Landau equations, when the latter are recast as a polynomial of the kinematic variables of the integral, known as the principal $A$-determinant. Focusing on one loop, we further show that all square-root letters may also be obtained, by re-factorizing the principal $A$-determinant with the help of Jacobi identities. We verify our findings by explicitly constructing canonical differential equations for the one-loop integrals in both odd and even dimensions of loop momenta, also finding agreement with earlier results in the literature for the latter case. We provide a computer implementation of our results for the principal $A$-determinants, symbol alphabets and canonical differential equations in an accompanying \ma\ file. Finally, we study the question of when a one-loop integral satisfies the Cohen-Macaulay property and show that for almost all choices of kinematics the Cohen-Macaulay property holds. Throughout, in our approach to Feynman integrals, we make extensive use of the Gel'fand, Graev, Kapranov and Zelevinski\u{\i} theory on what are now commonly called  GKZ-hypergeometric systems whose singularities are described by the principal $A$-determinant.

}
\begin{document} 
\maketitle
\flushbottom

\newpage

\renewcommand{\labelenumii}{\arabic{enumi}.\arabic{enumii}}

\section{Introduction}
Feynman integrals are central objects in theoretical physics, for example, their evaluation is central for the calculation of any scattering amplitude in high-energy physics \cite{Weinberg:1995mt}. This not only includes experiments using the Large Hadron Collider at CERN but also amplitudes in gravitational wave physics \cite{Bern:2019nnu,Dlapa:2022lmu} or the critical exponent in statistical field theory \cite{zinn2021quantum}.

Evaluation of these integrals is a challenging problem which has fostered two main lines of development: advanced numerical schemes have been developed for fast and accurate direct evaluation (see e.g.~\cite{Borowka:2017idc,Smirnov:2021rhf} or \cite{Borinsky:2020rqs,Borinsky:2023jdv}) and  sophisticated  analytical { computer tools have been developed for either direct evaluation or for understanding the analytic structure, see for example \cite{Lee:2013mka,Smirnov:2019qkx,vonManteuffel:2012np,Maierhofer:2018gpa,Panzer:2014caa,Duhr:2019tlz,Prausa:2017ltv,Gituliar:2017vzm,Meyer:2017joq,Dlapa:2020cwj,Lee:2020zfb,Belitsky:2022gba,Jantzen:2012mw,Peraro:2019svx}}. Focusing on the latter, the topic is classical and dates back to the seminal work of Landau \cite{Landau:1959fi}, Cutkosky \cite{Cutkosky:1960sp} and the $S$-matrix program of the 1960's \cite{Eden:1966dnq}. At that time it was understood by Regge that every Feynman integral is a solution to a system of partial differential equations (PDEs) of ``hypergeometric type'' \cite{Regge1968}, and a strong connection with the Japanese $D$-module school of Sato and Kashiwara was established \cite{Kashiwara1977,Kashiwara1978}. 

Later in the 1980's, a completely combinatorial description of a vast family of $D$-modules, to which all Feynman integrals belong, was given by Gel'fand and collaborators \cite{Gelfand1989}.  $D$-modules in this family are now commonly referred to as {\em GKZ-hypergeometric systems}. The singularities of these GKZ-hypergeometric systems are described by a polynomial known as the {\em principal $A$-determinant}. When the GKZ-hypergeometric system under consideration arises from a Feynman integral, the principal $A$-determinant then describes the kinematic singularities of the Feynman integral, { namely the values of the kinematic parameters of the integral, for which it may become singular.  The zero set of the principal $A$-determinant is commonly referred to in physics as the {\em Landau singular locus}, in other words it is} the solution to the Landau equations \cite{Landau:1959fi}, where the presence of kinematic singularities is formulated as a condition for the contour of integration to become trapped.  One important property of GKZ-hypergeometric systems is the {\em Cohen-Macaulay property}; when a GKZ-hypergeometric system has this property its {\em rank} is given by a simple combinatorial formula and series solutions to the system may be obtained in a much more straightforward manner.  
In the context of Feynman integrals, the rank of the GKZ-hypergeometric system bounds the number of {\em master integrals}, to be defined below.   
These connections have been rediscovered in recent years attracting a lot of interest, see e.g.~\cite{delaCruz:2019skx,Klausen:2019hrg,Klemm:2019dbm,Ananthanarayan:2022ntm}.

Parallel to this, analytic evaluation approaches using partial differential equations were also developed natively within the physics community \cite{Kotikov:1990kg,Remiddi:1997ny,Gehrmann:1999as}, culminating in what is now called \emph{canonical differential equations} \cite{Henn:2013pwa}. When a Feynman integral can be represented like this it can be expressed as a Chen iterated integral \cite{Chen1977} which, when the kernels are rational, further reduces to the well-studied class of multiple polylogarithms (MPLs) \cite{goncharov2001multiple,Goncharov2005}.

More concretely, these approaches are based on the solution of integration by parts identities (IBPs) \cite{Chetyrkin:1981qh}, namely linear relations between any set of Feynman integrals with integer propagator powers, for example those contributing to a given process. The master integrals alluded to before are precisely a finite basis $\vec{g}$ in this linear space, obtained by solving the identities in question. Derivatives of the master integrals may then be re-expressed in terms of this basis. Regularizing the infrared and ultraviolet divergences of the integrals by setting the dimension of the loop momenta to $D=D_0-2\epsilon$, the canonical transformation that greatly facilitates their solution is then a change of basis such that
\begin{equation}\label{eq:DE}
 d\vec{g}=\epsilon\, d\widetilde{M}\, \vec{g},
\end{equation}
where we have grouped all partial derivatives with respect to the independent kinematic variables of the integrals, $v_{i}$, into the total differential $d=\sum_i  dv_{i} \partial_{v_{i}}$. The matrix $\widetilde{M}$ is independent of $\epsilon$ and takes the simple dlog-form,
\begin{equation}
\label{eq:can-ansatz}
 \widetilde{M}\equiv \sum_{i} \tilde{a}_i\log W_i,
\end{equation}
where the $\tilde{a}_i$ are constant matrices and the $W_i$ carry all kinematic dependence and are called \emph{letters}. The set of all letters is called the \emph{alphabet}. Evidence suggests that a transformation to the first equality is always possible, however the particular form of the $\widetilde{M}$ matrix given by the second equality is expected to exist only when the master integrals live in the aforementioned class of MPLs. The latter class of integrals will be the focus of this paper.

The merit of the canonical differential equations~\eqref{eq:DE}-\eqref{eq:can-ansatz} for the integrals $\vec{g}$ is that they can now be easily solved as an expansion\footnote{We assume that the integrals are normalized such that they have uniform transcendental weight zero.} in $\epsilon$
\begin{equation}
 \vec{g}=\sum_{k=0}^{\infty}\epsilon^k\vec{g}^{(k)},
\end{equation}
where at each order $\vec{g}^{(k)}$ the \emph{symbol} $\mathcal{S}$~\cite{Goncharov:2010jf}, capturing the full answer up to transcendental constants, is given by
\begin{equation}
\label{eq:symbol}
 \mathcal{S}(\vec{g}^{(k)})=\sum_{i_1,\ldots,i_k=1}^n \tilde{a}_{i_k}\cdot\tilde{a}_{i_{k-1}}\cdots\tilde{a}_{i_1}\cdot\vec{g}^{(0)}\ W_{i_1}\otimes\cdots\otimes W_{i_k}.
\end{equation}

In practical applications, two major bottlenecks that one often encounters in the above procedure are solving the IBPs analytically so as to find a basis of master integrals, and finding the transformation that relates this to a new, canonical basis~\eqref{eq:can-ansatz}.
However, with knowledge of the letters $W_i$ it would be possible to avoid doing these two steps analytically by using the latter equation as an ansatz. The unknown coefficient matrices $\tilde{a}_i$ can then be fixed by matching the partial derivatives of the ansatz to multiple numerical evaluations of \eqref{eq:DE} derived through numeric IBP identities (over finite fields if necessary). Similar approaches have, for example, been used in~\cite{Abreu:2020jxa}. 

Motivated by the great potential benefits of this alternative route, in this paper we will open a new door to obtaining the symbol alphabet from the Landau equations, when recast in the form of the aforementioned principal $A$-determinant, before attempting to analytically evaluate the integrals. Many crucial results in theoretical physics have been obtained by analyzing the Landau equations, whose study has recently received renewed interest, see for example~\cite{Dennen:2015bet,Prlina:2018ukf,Collins:2020euz,Mizera:2021fap,Klausen:2021yrt,Mizera:2021icv,Hannesdottir:2021kpd,Correia:2021etg,Hannesdottir:2022xki,Lippstreu:2022bib,Berghoff:2022mqu} for an incomplete list. From eqs.~\eqref{eq:DE}-\eqref{eq:can-ansatz} it is evident that values of the kinematic variables where the letters $W_i$ vanish are potential branch points of Feynman integrals, and it is well known that these values are indeed captured by the Landau equations. However this information is in general not enough for fixing the entire functional form of the letters.

{ Quite remarkably, here we observe that the principal $A$-determinant of a Feynman integral, in the natural factorization it is endowed with as a function of its kinematic variables, coincides with the product of rational letters of the integral in question!} We will provide precise definitions in the following Sections, but let us give an idea of this identification for the one-loop `two-mass easy' box with all internal masses being zero, and additionally $p_2^2=p_4^2=0$, $p_1^2,p_3^2\neq 0$: 
\begin{equation*}
    \centering
     \begin{tikzpicture}[baseline=-\the\dimexpr\fontdimen22\textfont2\relax]
    \begin{feynman}
    \vertex (a);
    \vertex [right = of a] (b);
    \vertex [below = of b] (c);
    \vertex [below = of a] (d);
    \vertex [above left = of a] (x){\(p_1\)};
    \vertex [above right = of b] (y){\(p_4\)};
    \vertex [below right = of c] (z){\(p_3\)};
    \vertex [below left = of d] (w){\(p_2\)};
    \diagram*{
        (x)--[fermion](a), (y)--[fermion,color=gray](b), (z)--[fermion](c), (w)--[fermion,color=gray](d), (a) -- [edge label=\(\textcolor{black}{x_1}\),color=gray] (b) -- [edge label=\(\textcolor{black}{x_4}\),color=gray] (c) --[edge label=\(\textcolor{black}{x_3}\),color=gray] (d) -- [edge label=\(\textcolor{black}{x_2}\),color=gray](a),
    };
    \end{feynman}
    \end{tikzpicture}
\end{equation*}
The principal $A$-determinant of this Feynman integral is
\begin{equation}\label{eq:2meEA}
    \widetilde{E_A}=\underbrace{(p_1^2p_3^2-st)}_{\mathrm{type-I}}\underbrace{p_1^2p_3^2st(p_1^2+p_3^2-s-t)(p_3^2-t)(p_3^2-s)(p_1^2-t)(p_1^2-s)}_{\mathrm{type-II}}.
\end{equation}
where $s=(p_1+p_2)^2$ and $t=(p_1+p_4)^2$ are Mandelstam invariants. The 10 factors above in fact coincide with the 10 letters in the symbol alphabet of this diagram. In eq.~\eqref{eq:2meEA} we have also indicated whether each factor describes a type-I and type-II Landau singularity, associated to the entrapment of the integration contour at finite or infinite values of the loop momentum, respectively (as also reviewed in Section \ref{sec: Singularities and Principal A-determinant}). In much of the relevant literature there has been a tendency to focus on type-I singularities, however {here we wish to emphasize}  that for symbol alphabets the type-II singularities in general cannot be neglected.

We will also provide two-loop evidence of the connection between principal $A$~-determinants and rational letters, but in this paper we will mainly focus on further extracting letters containing square roots from them, which is well known that already appear in one-loop integrals. For these integrals, we will also prove that the Cohen-Macaulay property holds.

The paper is organized as follows. In Section~ \ref{sec:FeynmanGGKZ} we provide the necessary background for Feynman integrals and generalized hypergeometric systems. In particular we connect the Landau singularities to the principal $A$-determinant. In Section~\ref{sec: symbol alphabets} we restrict our focus to one-loop graphs and provide the full principal $A$-determinant, symbol alphabet and canonical differential equations for all graphs with generic kinematics. The process of {starting with the case of generic kinematics and} taking limits {to obtain} non-generic kinematics, as well as the connection to previous work, are also discussed. Explicit examples are provided in Section~\ref{sec: examples}. In Section~\ref{sec: normality} we prove that the Cohen-Macaulay property holds for one-loop graphs for almost all choices of kinematics (in fact we prove a stronger sufficient condition referred to as {\em normality}) and discuss the generalized permutohedron property. Finally in Section~\ref{sec: outlook} we provide conclusions and outlook. Our results on the principal $A$-determinant, symbol alphabet and differential equations have also been implemented in the \texttt{Mathematica} notebook \texttt{LandauAlphabetDE.nb} accompanying the version of this paper in the \texttt{arXiv}.

\medskip

\noindent {\bf Note added:} While this project was in the process of writing up, we became aware of the recent preprint~\cite{Jiang:2023qnl}, which overlaps in part with the results presented in subsection~\ref{subsec:SymbolLettersFormula}.

\section{Feynman integrals, Landau singularities and GKZ systems}\label{sec:FeynmanGGKZ}
In this Section we establish our conventions on Feynman scalar integrals, and review how they can be interpreted as GKZ hypergeometric systems when expressed in the Lee-Pomeransky representation. We also recall how a natural object in this framework, the principal $A$-determinant, captures the Landau singularities of these integrals, mostly following~\cite{Klausen:2021yrt}.

\subsection{Feynman integrals in the Lee-Pomeransky representation}\label{section:LeePomRep}
In this paper we consider one-particle irreducible Feynman graphs $G:=(E,V)$ with internal edge set $E$, vertex set $V$ and loop number $L=|E|-|V|+1$. Every edge $e\in E$ is assigned an arbitrary direction with which we define the \emph{incidence matrix} $\eta_{ve}$ of $G$ to satisfy $\eta_{ve}=1$ if $e$ ends at $v$, $-1$ if $e$ starts at $v$, and $0$ otherwise. The vertex set $V$ has the disjoint partition $V=V_{\mathrm{ext}}\sqcup V_{\mathrm{int}}$ where each vertex $v\in V_{\mathrm{ext}}$ is assigned an external incoming $d$-dimensional momenta  $p_v\in\RR^{1,d-1}$ with the mostly minus convention $(p_v)^2=(p_v^0)^2-(p_v^1)^2-\cdots$ and we put $p_v=0$ for all $v\in V_{\mathrm{int}}$. Using Feynman's causal $i\varepsilon$ prescription, scalar Feynman rules assigns the following integral to $G$:
\begin{equation}\label{eq: FeynIntProp}
    \cI=\left(\frac{e^{\gamma_E\epsilon}}{i\pi^{D/2}}\right)^L\lim_{\varepsilon\to 0^+}\int\prod_{e\in E}d^Dq_e\left(\frac{-1}{q_e^2-m_e^2+i\varepsilon}\right)^{\nu_e}\prod_{v\in V\setminus \{v_0\}}\delta^{(D)}\left(p_v+\sum_{e\in E}\eta_{ve}q_e\right)
\end{equation}
where $\gamma_E=-\Gamma'(1)\simeq 0.577$ is the Euler-Mascheroni constant, $\nu_e\in\ZZ$ are generalized propagator powers and $q_e$ is the total internal momenta flowing through the edge $e$. We are also employing dimensional regularization with $D:=D_0-2\epsilon$, and while the external and (integer part of) the internal momenta dimensions are usually taken to coincide, $D_0=d$, here we will distinguish between the two. Physically this can be thought of as restricting one set of momenta to lie in a subspace of the other, and is further justified by the alternative parametric representations of Feynman integrals, which we will get to momentarily. Momentum is conserved at each vertex $v\in V$, but only $|V|-1$ of these constraints are independent, we therefore remove an arbitrary vertex $v_0$ from $V$ in (\ref{eq: FeynIntProp}) to avoid imposing $\delta^{(D)}\left(\sum_{v\in V}p_v\right)$ explicitly. 

To evaluate the integrals we rewrite them in parametric form 
\begin{equation}\label{eq: FeynIntWL}
     \cI=e^{L\gamma_E\epsilon}\Gamma(\omega)\lim_{\varepsilon\to 0^+}\int_0^\infty\prod_{e\in E}\left(\frac{x^{\nu_e}dx_e}{x_e\Gamma(\nu_e)}\right)\frac{\delta(1-H(x))}{\U^{D/2}}\left(\frac{1}{\F/\U-i\varepsilon\sum_{e\in E}x_e}\right)^\omega
\end{equation}
where $\omega:=\sum_{e\in E}\nu_e-LD/2$ is the superficial degree of divergence, $H:\RR^{{|E|}}\to\RR_+$ is a homogeneous function of degree one and $x_e,\ e\in E$, are the \emph{Schwinger/Feynman parameters}. These are defined by the $\Gamma$-function identity
\begin{equation}
    \left(\frac{i}{q_e^2-m_e^2+i\varepsilon}\right)^{\nu_e}=\frac{1}{\Gamma(\nu_e)}\int_0^\infty\frac{dx_e}{x_e}\ x_e^{\nu_e}\exp\left[ix_e(q_e^2-m_e^2+i\varepsilon)\right]
\end{equation}
which is absolutely convergent when $\varepsilon>0$ and the real part of $\nu_e$ is positive, $\Re(\nu_e)>0$, but can be analytically continued to all $\nu_e\in\CC$. In the representation \eqref{eq: FeynIntWL} the information on the Feynman graph $G$ has been encoded in two homogeneous \emph{Symanzik polynomials} $\U$ and $\F$, of degree $L$ and $L+1$ in the integration variables, respectively. As is reviewed in e.g.~\cite{Weinzierl:2022eaz}, by virtue of the matrix tree theorem these are equal to
\begin{align}
    \mathcal{U}&=\sum_{\substack{~~~\,T {\rm \;a \; spanning} \\ {\rm tree \; of \; }G}}\;\;\;\prod_{e\not\in T}x_e,\label{eq:U_Symanzik}\\
    \mathcal{F}&=\F_m+\F_0= \mathcal{U}\sum_{e\in E}m_e^2x_e- \sum_{\substack{F {\rm \;a \; spanning} \\ {\rm 2-forest \; of \; }G}}\;\;\; p(F)^2\prod_{e\not\in F}x_e,\label{eq:F_Symanzik}
\end{align} 
where a spanning tree is a connected subgraph of $G$ which contains all of its vertices but no loops, and the spanning two-forest is defined similarly, but now has two connected components. For each spanning two-forest $F=(T,T')$ of $G$ we let $p(F)=\sum_{v\in T\cap V_{\rm ext}}p(v)$ denote the total momentum flowing through cut.

In this paper we will consistently think of the Feynman integral of a Feynman diagram $G$ in their parametric representation due to Lee and Pomeransky \cite{Lee2013}, that is we consider integrals:  
\begin{equation}
\label{eq:I-def}
   \cI=e^{L\gamma_E\epsilon}\frac{\Gamma(D/2)}{\Gamma(D/2-\omega)}\int_0^\infty\prod_{e \in E}\left(\frac{x_e^{\nu_e}dx_e}{x_e\Gamma(\nu_e)}\right)\frac{1}{\G^{D/2}}
\end{equation}
where
\begin{equation}
 \mathcal{G}=\mathcal{U}+\mathcal{F}   
\end{equation}
and the dependence on $i\varepsilon$ has been suppressed as it will not play a role in the rest of the paper.  Going from \eqref{eq:I-def} back to \eqref{eq: FeynIntWL} is done by inserting $1=\int_0^\infty\delta(t-H(x))dt$, re-scaling the variables $x_e\to tx_e,\ t>0$ and performing the $t$-integral.

When a Feynman integral is written in Lee-Pomeransky form it is a generalized hypergeometric integral \cite{delaCruz:2019skx,Klausen:2019hrg} of the form studied by Passare and collaborators \cite{Nilsson2013,Berkesch2014}. As a consequence it is also a solution to a generalized hypergeometric system of partial differential equations in the sense of Gel'fand, Graev, Kapranov and Zelevinski\u{\i} (GGKZ, commonly shortened to GKZ) \cite{Gelfand1986,Gelfand1987,Gelfand1989,Gelfand1990,Gelfand1993}. The singularities of these hypergeometric systems are described by the {\em principal $A$-determinant}, see \cite[\S3]{adolphson1994hypergeometric} and \cite[Chapter 9]{gelfand2008discriminants} or \cite[Theorem~1.36]{cattani2006three} or \cite[\S3]{Klausen:2021yrt}. We will define the principal $A$-determinant in Section \ref{sec: Singularities and Principal A-determinant} below; in the context of Feynman integrals { the zero set of the principal $A$-determinant contains all kinematic points} where the Feynman integral fails to be an analytic function. 

Using multi-index notation we may write the Lee-Pomeransky polynomial as $\G=\sum_{i=1}^rc_ix^{\alpha_i}$ with $c_i\neq 0$ and $\alpha_i\in \ZZ_{\ge 0}^{{|E|}}$ for all $i=1,\ldots,r$. We define the two matrices
\begin{align}
    A&:=\{1\}\times A_-=\begin{pmatrix}
        1&1&\cdots&1,\\
        \alpha_1&\alpha_2&\cdots&\alpha_r
    \end{pmatrix}\in\ZZ_{\ge 0}^{({|E|}+1)\times r}, \; {\rm and}\label{eq:AMatrix}\\
    \beta&:=\begin{pmatrix}
        -D/2, & -\nu_1 &, \ldots &, -\nu_{{|E|}}
    \end{pmatrix}^T\in\RR^{{|E|}+1},\label{eq:BetaVector}
\end{align}
where $A_-:=\mathrm{Supp}(\G)$ is the monomial {\em support} of $\G$, that is the matrix whose columns are the exponent vectors of the monomials appearing with non-zero coefficients in $\G$. From these two matrices the GKZ hypergeometric system can be defined as a left-ideal $H_A(\beta)$ in the Weyl algebra $W:=\QQ(\beta)[c_1,\ldots,c_r]\langle\partial_1,\ldots,\partial_r\rangle$ where $\partial_i$ denotes the partial differential operator associated to $c_i$ (cf. \cite{saito2013grobner}). The hypergeometric ideal $H_A(\beta)$ can be written as the sum of the two ideals
\begin{align}
    I_A&:=\left\langle\partial^u-\partial^v\ |\ u,v\in\ZZ_{\ge 0}^{r}\ \mathrm{s.t.}\ Au=Av\right\rangle, \;\;{\rm and}\label{eq:toricIdeal}\\
    Z_A(\beta)&:=\left\langle \Theta_i(c, \partial)\;|\; \Theta= A\cdot\begin{pmatrix}
        c_1\partial_1\\
        \vdots\\
        c_r\partial_r
    \end{pmatrix}-\beta\right\rangle,
\end{align}
where $\Theta$ is a vector containing ${|E|}+1$ polynomials. Note that a Feynman integral $\cI$ is annihilated by all polynomials in the left-ideal $H_A(\beta):=I_A+Z_A(\beta)$, i.e. $H_A(\beta)\bullet\cI=0$, hence, from an  analytic perspective,  $H_A(\beta)$ is a system of partial differential equations. We also note that by definition the ideal $I_A$ is in fact an ideal in the {\em commutative polynomial ring} $\QQ[\partial_1, \cdots,\partial_r]$,  (which we consider as a left ideal in the Weyl algebra) and has a finite generating set $I_A=\langle h_1, \dots h_\ell  \rangle$ with $h_i\in \QQ[\partial_1, \cdots,\partial_r]$. This ideal $I_A$ is a prime ideal whose finite generating set consists of binomials and is often referred to as a {\em toric ideal} as it gives the defining equations of the projective {\em toric variety} $$
X_A=\{ z\in \PP^{r-1} \; |\; h_1(z)=\cdots= h_\ell(z)=0 \}
$$
associated to the matrix $A$, see e.g.~\cite{eisenbud1996binomial}, \cite[II, Chapter~5]{gelfand2008discriminants}.

\subsection{Landau singularities and the principal $A$-determinant}\label{sec: Singularities and Principal A-determinant}
In going from the momentum space representation \eqref{eq: FeynIntProp} to the parametric representation \eqref{eq: FeynIntWL} one has the intermediate step
\begin{equation}
    \cI=\int \prod_{l=1}^L\frac{d^Dk_l}{i\pi^{D/2}}\int_0^\infty\prod_{e\in E}dx_e\frac{\delta(1-H(x))}{\cQ^{\sum_e\nu_e}}
\end{equation}
where the momentum integrals is over the $L$ independent loop momenta and
\begin{equation}
    \cQ=\sum_{e\in E}x_e(-q_e^2+m_e^2).
\end{equation}
The original Landau analysis \cite{Landau:1959fi} involves finding not only when $\cQ$ is zero but when it has a stationary point so that the integration contour becomes pinched between poles of the integrand. These conditions are expressed in the \emph{Landau equations}
\begin{equation}
    \begin{cases}
        \cQ=0\\
        \frac{\partial}{\partial k_l}\cQ=0,\ \forall l=1,\ldots,L.
    \end{cases}
\end{equation}
If all loop momenta are kept finite the solutions are called type-I singularities while if all loop momenta are infinite it is referred to as a type-II singularity (first observed by Cutkosky \cite{Cutkosky:1960sp}). In general some loop momenta can be finite while some pinch at infinity giving a \emph{mixed} type-II singularity. However, at one-loop there are no such mixed singularities as there is only one loop momentum. The Landau equations have also been studied in the parametric representation \eqref{eq: FeynIntWL}, see for example \cite{Polkinghorne:1960a,Polkinghorne:1960b}. After introducing some machinery from the GKZ-formalism, we will see that solutions of the Landau equations in the Lee-Pomeransky formalism are associated with the vanishing of the principal $A$-determinant (cf. Definition~\ref{def: principal A-determinant}).

Following the notation in \cite{gelfand2008discriminants} we let $A=\{a_1,\ldots,a_n\}\subset\ZZ^{k-1}$ be a set of lattice points that generates $\ZZ^{k-1}$ and let $\CC^A$ denote the finite dimensional $\CC$-vector space of all Laurent polynomials with support $A$, meaning all Laurent polynomials which can be formed from the momomials $x^{a_1}, \dots, x^{a_n}$, i.e.~$\CC^A:=\{\sum_{i=1}^nc_ix^{a_i}\;|\;c_i\in \CC\}$.

Let $\cZ_0(A)\subset(\CC^A)^k$ be the set of polynomials $(f_1(x),\ldots,f_k(x))$ for which there is $x$ in the algebraic torus $(\CC^*)^{k-1}$ satisfying $f_1(x)=\cdots f_k(x)=0$, i.e.
\begin{equation}
    \cZ_0(A):=\left\{(f_1,\ldots,f_k)\in(\CC^A)^k\,|\, \mathbf{V}\left(f_1,\ldots,f_k\right)\neq\emptyset\ \mathrm{in}\ (\CC^*)^{k-1}\right\}.
\end{equation}
The closure $\cZ(A)$ of $\cZ_0(A)$ is an irreducible hypersurface in $(\CC^A)^{k}$ over the rational numbers.
\begin{definition}[$A$-resultant]
    Since $\cZ(A)$ is an irreducible hypersurface there is an irreducible polynomial $R_A$ in the coefficients of $f_1,\ldots,f_k$ with integer coefficients that is unique up to sign. This polynomial is called the $A$-\emph{resultant}.
\end{definition}

A special case of the $A$-resultant is when $f_k=f$ and $f_i=x_i\partial f/\partial x_i$ for $i=1,\ldots,k-1$:
\begin{definition}[Principal $A$-determinant]\label{def: principal A-determinant}
    The special $A$-resultant
    \begin{equation}
        E_A(f):=R_A\left(x_1\frac{\partial f}{\partial x_1},\ldots,x_{k-1}\frac{\partial f}{\partial x_{k-1}},f\right)
    \end{equation}
    is called the \emph{principal $A$-determinant}.
\end{definition}
A major result in this field \cite[Chapter~10]{gelfand2008discriminants} is that the principal $A$-determinant can be written as a product of $A$\emph{-discriminants}.
Let $\nabla_0\subset\CC^A$ be the set defined as
\begin{equation}
    \nabla_0:=\left\{f\in\CC^A\ |\ \mathbf{V}\left(f,\frac{\partial f}{\partial x_1},\ldots,\frac{\partial f}{\partial x_k}\right)\neq\emptyset\ \mathrm{for}\ x\in(\CC^*)^k\right\}
\end{equation}
and denote by $\nabla_A$ the Zariski closure of $\nabla_0$.
\begin{definition}[$A$-discriminant]
If $\nabla_A\subset\CC^A$ has codimension 1, then the $A$-\emph{discriminant} is the irreducible polynomial $\Delta_A(f)$ in the coefficients $c_i$ of $f$ that vanishes on $\nabla_A$. If codim$\nabla_A>1$ we put $\Delta_A=1$.
\end{definition}

Again writing $X_A$ for the toric variety associated to $A$, we have that the projectivization of $\nabla_A$ is the projective dual of $X_A$, see e.g.~\cite{gelfand2008discriminants}. The faces $\Gamma\subset\conv(A)$ induce a stratification of $X_A$ with strata $X(\Gamma)$ and projective duals $\nabla_{A\cap\Gamma}$; by $A\cap \Gamma$ we mean the matrix consisting of all columns of $A$ which are also contained in the face $\Gamma$. For a variety $X(\Gamma)\subset X_A$ we denote the multiplicity of $X_A$ along $X(\Gamma)$ as $\mathrm{mult}_{X(\Gamma)}X_A$, \cite[Definition 3.15, Chapter 5]{gelfand2008discriminants}, and we have the following factorization theorem.
\begin{theorem}[Prime factorization, Theorem 1.2, Chapter 10 of \cite{gelfand2008discriminants}]\label{thm:EAfact}
Let $Q=\conv(A)$, then the principal $A$-determinant is the polynomial
\begin{equation}\label{eq:EaFact}
    E_A(f)=\pm\prod_{\Gamma\subseteq Q}\Delta_{A\cap\Gamma}(f)^{\mathrm{mult}_{X(\Gamma)}X_A}
\end{equation}
where $\Delta_{A\cap\Gamma}(f):=\Delta_{A\cap\Gamma}(\left.f\right|_\Gamma)$ and $\left.f\right|_\Gamma$ is the coordinate restriction of $f$ supported on $\Gamma$.
\end{theorem}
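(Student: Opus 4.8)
The plan is to read off both the irreducible factors of $E_A$ and their exponents from the geometry of the projective toric variety $X_Q$ attached to $Q=\conv(A)$, on which every $f\in\CC^A$ defines a hypersurface $Z_f$; this is natural because, by Definition~\ref{def: principal A-determinant}, $E_A$ is a specialization of the $A$-resultant. The argument would split into three stages: (i) the set-theoretic statement $\{E_A(f)=0\}=\bigcup_{\Gamma\subseteq Q}\nabla_{A\cap\Gamma}$; (ii) the conclusion that the codimension-one loci among the $\nabla_{A\cap\Gamma}$ are exactly the irreducible components of $\{E_A=0\}$, so that $E_A=\pm\prod_\Gamma\Delta_{A\cap\Gamma}^{m_\Gamma}$ for some $m_\Gamma\in\ZZ_{\ge 1}$; and (iii) the identification $m_\Gamma=\mathrm{mult}_{X(\Gamma)}X_A$.

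For (i), the key point is that unwinding the definition of the $A$-resultant shows $E_A(f)=0$ precisely when $Z_f$ fails to meet the torus stratification of $X_Q$ transversally: there is a face $\Gamma$ and a point of the orbit $O_\Gamma\subseteq X_Q$ at which $f|_\Gamma$, together with all of its logarithmic derivatives along $O_\Gamma$, vanishes. For $\Gamma=Q$ this is the defining condition of $\nabla_0$; for a proper face it says that $Z_f$ is singular at, or tangent to, the boundary stratum. Since $Z_f\cap O_\Gamma$ is cut out inside the $(\dim\Gamma)$-dimensional torus $O_\Gamma$ exactly by $f|_\Gamma\in\CC^{A\cap\Gamma}$, non-transversality along $O_\Gamma$ is equivalent to $f|_\Gamma\in\nabla_{A\cap\Gamma}$; running over all faces yields the decomposition. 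One inclusion uses a one-parameter toric degeneration $f_t$ of $f$ toward $\Gamma$ inside $X_Q$ to see that a singularity of $f|_\Gamma$ forces $E_A(f_t)\to 0$; the reverse uses that any torus critical point of $f$ has, in the compactification $X_Q$, a limit on some orbit $O_\Gamma$, whose equation is $f|_\Gamma$.

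For (ii)--(iii)---the substantive part---I would compute the order of vanishing of $E_A$ transverse to a fixed $\nabla_{A\cap\Gamma}$ by a local analysis at a generic $f_0\in\nabla_{A\cap\Gamma}$. Choosing a one-parameter family $f_t\in\CC^A$ meeting $\nabla_{A\cap\Gamma}$ transversally at $f_0$, one evaluates $\mathrm{ord}_{t=0}E_A(f_t)$ by writing $E_A$ as the $A$-resultant of $(x_1\partial_1 f,\ldots,x_{k-1}\partial_{k-1}f,f)$ and tracking, via counts of Bernstein--Kushnirenko type, how the common torus solutions of this overdetermined system migrate onto $O_\Gamma$ as $t\to 0$; this order equals the local intersection number of $Z_{f_0}$ with $O_\Gamma$ computed inside $X_Q$, which---through the projective-duality description $\nabla_A=X_A^\vee$ and the definition of the stratum multiplicity---is $\mathrm{mult}_{X(\Gamma)}X_A$. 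An equivalent and arguably cleaner route presents $E_A(f)$ as the determinant, in the GKZ sense, of the Koszul-type complex built from the operators $x_j\partial_j f$ and $f$ on the homogeneous coordinate ring of $X_Q$; filtering that complex by the torus orbits makes its determinant factor over the associated graded, with each orbit $O_\Gamma$ contributing the analogous complex for $A\cap\Gamma$---hence, by induction on $\dim\Gamma$, a power of $\Delta_{A\cap\Gamma}$, the exponent being a lattice Euler-characteristic count that evaluates to $\mathrm{mult}_{X(\Gamma)}X_A$.

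The hard part is precisely step (iii): matching the algebraic multiplicity with which the polynomial $E_A$ is divisible by $\Delta_{A\cap\Gamma}$ to the geometric multiplicity $\mathrm{mult}_{X(\Gamma)}X_A$ of $X_A$ along the orbit closure. This requires controlling not merely that torus critical points degenerate onto lower-dimensional orbits, but with what multiplicity, which forces a genuine understanding of the local structure of $X_Q$ along its boundary strata (or, in the complex-theoretic approach, a nontrivial Euler-characteristic computation on the graded pieces). As an independent check on the exponents I would invoke the identification of the Newton polytope of $E_A$ with the secondary polytope $\Sigma(A)$: multiplicativity of Newton polytopes under products gives $\sum_\Gamma m_\Gamma\,\mathrm{Newt}(\Delta_{A\cap\Gamma})=\Sigma(A)$, a combinatorial constraint that must reproduce the multiplicities $\mathrm{mult}_{X(\Gamma)}X_A$.
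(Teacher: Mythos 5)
The paper does not prove this theorem: it is quoted verbatim as Theorem~1.2 of Chapter~10 of Gelfand--Kapranov--Zelevinsky (as the bracketed attribution in the statement itself indicates), and is then used as a black box---the discussion that follows merely explains how to exploit the factorization computationally (find the faces $\Gamma$, compute the lattice-index multiplicities, obtain the $A$-discriminants by elimination). So there is no in-paper argument for you to have matched or diverged from; the relevant comparison is against the published proof in the GKZ book.

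With that understood, your sketch does track the broad GKZ strategy---stratify the toric variety by torus orbits, identify the codimension-one components of $\{E_A=0\}$ with the duals $\nabla_{A\cap\Gamma}$, and compute exponents through the Cayley--Koszul (determinant of a complex) presentation of $E_A$---but step~(iii) is a genuine gap as written, and you flag it yourself. Neither the Bernstein--Kushnirenko ``migration'' count nor the ``lattice Euler characteristic'' is actually carried out, and this is exactly where the theorem lives: the GKZ proof hinges on a delicate analysis of the determinant of the Cayley--Koszul complex under a one-parameter degeneration adapted to the face $\Gamma$, together with a direct identification of the resulting order of vanishing with the lattice index defining $\mathrm{mult}_{X(\Gamma)}X_A$. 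Announcing that ``filtering the complex by orbits makes its determinant factor over the associated graded'' is the hard technical content, not a reduction of it. Two smaller points: in step~(i) the nontrivial inclusion is $\{E_A=0\}\subseteq\bigcup_\Gamma\nabla_{A\cap\Gamma}$, and a torus critical point of $f$ merely ``having a limit on some orbit $O_\Gamma$'' does not by itself produce a critical point of $f|_\Gamma$ on that orbit without controlling the rate of degeneration; and the proposed cross-check via $\newt(E_A)=\Sigma(A)$ does not in general pin down the exponents, since a Minkowski decomposition into prescribed summands need not determine the multiplicities uniquely.
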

Calculating the  principal $A$-determinant $E_A(f)$ now comes down to three steps: calculating all the faces $\Gamma$ of $Q=\conv(A)$, calculating the multiplicities $\mathrm{mult}_{X(\Gamma)}X_A$, which are lattice indices \cite[Theorem 3.16, Chapter 5]{gelfand2008discriminants} and can be computed via integer linear algebra \cite[
Remark~2.2]{helmer2018nearest}, and finally calculating the $A$-discriminants $\Delta_{A\cap\Gamma}(f)$ which can be obtained via elimination (which can be accomplished using Gr\"{o}bner basis, see e.g.~\cite{CLO}).

In our discussion we are primarily interested in the zero set of the principal $A$-determinant $E_A(f)$, hence we may neglect the exponents appearing in  \eqref{eq:EaFact}, as these do not change the zero set. To this end we make the following definition of a {\em reduced principal $A$-determinant}, which is the unique polynomial (up to constant) that corresponds to the zero set of $E_A(f)$.\footnote{Mathematically speaking, the exponents in eq.~\eqref{eq:EaFact} do contain important information; in the context of toric geometry this e.g. pertains to the local geometry in the neighbourhood of a singular point on the toric variety. The physical meaning of these exponents in the Feynman integral context is something that certainly merits further exploraton. As a first step in this direction, we have computed these exponents for several examples of generic 1-loop graphs, finding that they are always equal to 1. We defer a more detailed investigation to future work.}

\begin{definition}[Reduced Principal $A$-determinant]Let $Q=\conv(A)$, then the reduced principal $A$-determinant is the polynomial
\begin{equation}
            \widetilde{E_A}(f)=\prod_{\Gamma\subseteq Q}\Delta_{A\cap\Gamma}(f)
\end{equation}where $\Delta_{A\cap\Gamma}(f):=\Delta_{A\cap\Gamma}(\left.f\right|_\Gamma)$ and $\left.f\right|_\Gamma$ is the coordinate restriction of $f$ supported on $\Gamma$.\label{def:reducedEA}
\end{definition}
Using the {\em homogenized} Lee-Pomeransky polynomial ${\G}_h:=\U x_0+\F$ and $A=\supp(\G_h )$\,\footnote{This is equivalent to using $A=\{1\}\times\supp(\U+\F)$, see Section \ref{sec: normality}.} in the definition of $E_A(\G_h)$ we can understand $E_A(\G_h)$ as a polynomial in the coefficients $c_i$ whose zeros correspond to coefficients such that
\begin{equation}
    \G_h=0,\ \mathrm{and\ either}\ x_i=0\ \mathrm{or}\ \frac{\partial \G_h}{\partial x_i}=0\ \forall\ i=0,\ldots,{|E|}\ \mathrm{in}\ (\CC^*)^{{|E|}+1}.
\end{equation}
Written in this way this is the ``third representation'' of the Landau equations in \cite[\S2.2]{Eden:1966dnq} with some important differences. In \cite{Eden:1966dnq} they define the singularities only in terms of the $\F$-polynomial. Using the full Lee-Pomeransky polynomial, not only do we get all the Landau singularities from the $\F$-polynomial (the type-I singularities), but also the singularities only depending on external kinematics (type-II singularities) and the mixed type-II singularities. For a recent discussion on the relation between discriminants and Landau singularities see e.g.~\cite{Klausen:2021yrt,Mizera:2021icv}. In short we get all possible singularities by using the full Lee-Pomeransky polynomial.

Using the prime factorization theorem it is easy to associate each of the non-trivial discriminants appearing in the factorization to certain type of singularities, at least for generic kinematics at one-loop where contracting edges of the graph $G$ is the equivalent to going to faces of $\newt(\G)$, as also discussed in~\cite{Klausen:2021yrt}:
\begin{itemize}
    \item $\Delta_A(\G)$ is the type-II singularity for the full graph.
    \item $\Delta_{A\cap x_i=0}(\G|_{x_i=0})$ is the type-II singularity for the sub-graph with edge $i$ contracted.
    \item $\Delta_{A\cap\newt(\F)}(\F)$ is the type-I singularity for the full graph (i.e.~the leading Landau singularity).
    \item $\Delta_{A\cap\newt(\F)\cap x_i=0}(\F|_{x_i=0})$ is the type-I singularity for the sub-graph with edge $i$ contracted.
    \item $\Delta_{A\cap\Gamma}(\G|_\Gamma)$ with $\Gamma$ having vertices both in $\supp(\U)$ and $\supp(\F)$ are mixed singularities.
\end{itemize} 
Multiple edges can be contracted to get singularities for even smaller sub-graphs. Subtleties with these identifications can appear for specific kinematic configurations, as we will illustrate in subsection~\ref{subsec:Boxes} in the concrete example of the box with three offshell external and all massless internal legs. Nevertheless, in subsection~\ref{sec: limiting procedure} we will also show that the principal $A$-determinant of a non-generic graph may be obtained as a limit of that of a generic graph, such that subtleties of this sort ultimately do not matter.

One of the main themes of this paper is to show how the principal $A$-determinant can be used to determine symbol letters. The argument for why this holds follows from the fact that every GKZ-hypergeometric system can be written as a system of first order differential equations, in this context this system is called the \emph{Pfaffian system} and can be calculated using Gr\"{o}bner basis methods \cite{saito2013grobner}. The coefficients of this system depend rationally on the kinematic variables but polynomially on the parameters in the $\beta$-vector, and hence also on the dimensional regulator $\epsilon$. The singular locus of the Pfaffian system is the product of its denominators, which may a priori be larger than the principal $A$-determinant. However if it is possible to find a transformation that brings it into an $\epsilon$-factorized form \eqref{eq:DE}, then the singular locus of the Pfaffian system coincides with the principal $A$-determinant. As shown in~\cite{Lee:2014ioa,Lee:2017oca}, this is because the monodromy group has been normalized. Especially in the case when the denominators of the canonical differential equation factorize into linear factors, these factors correspond to the symbol letters and are also the factors of the principal $A$-determinant.

For our purposes the principal $A$-determinant is important since it describes the singularities of a GKZ system $H_\beta(A)$, but before concluding this Section let us also mention its relation to the triangulations of the polytope $\conv(A)=\newt(f)$ and the Chow form of the toric variety $X_A$. The relation between these three objects manifests as the following three polytopes coinciding
\begin{equation}\label{eq: a determinant secondary chow}
    \newt(E_A(f))\simeq\Sigma(A)\simeq\mathrm{Ch}(X_A).
\end{equation}
Here $\Sigma(A)$ denotes the \emph{secondary polytope} of $A$, that is, the polytope that encodes the regular subdivisions of $A$, and whose  vertices particularly correspond to the regular triangulations of $A$. Two vertices of $\Sigma(A)$ are connected by an edge if and only if the two triangulations are related by a modification along a circuit in $A$. The similarity between $\Sigma(A)$ and the exchange graph (more precisely, the cluster polytope) of a \emph{cluster algebra} \cite{Fomin1} is too striking to ignore. Specifically, if $A\subset\RR^2$ is the set of vertices of a convex $n$-gon, then the secondary polytope $\Sigma(A)$ is the $n-$th associahedron, which is indeed isomorphic to the cluster polytope of the $A_{n-3}$ cluster algebra. Since $\newt(E_A)\simeq\Sigma(A)$,  our approach for extracting symbol alphabets from the principal $A$-determinant $E_A(\mathcal{G})$ offers promise for providing a first-principle derivation and extension of the intriguing cluster-algebraic structures observed in a wealth of different Feynman integrals~\cite{Chicherin:2020umh,He:2021eec,He:2022tph}, following similar observations in the context of scattering amplitudes in $\mathcal{N}=4$ super Yang-Mills theory~\cite{Golden:2013xva,Drummond:2017ssj}, see also the recent review~\cite{Papathanasiou:2022lan}.

Finally, the third polytope in eq.~(\ref{eq: a determinant secondary chow}), $\mathrm{Ch}(X_A)$, is the Chow polytope. This is the weight polytope of the Chow form $R_{X_A}$, which describes all the $(n-k-1)-$dimensional projective subspaces in $\PP^{n-1}$ that intersect $X_A$. Here $X_A$ is a toric variety of dimension $k-1$ and degree $d$. Let $\mathcal{B}=\bigoplus\mathcal{B}_m$ be the homogeneous coordinate ring of $\Gr(n-k,n)$, then $R_{X_A}\in\mathcal{B}_d$. Again this seems to point towards cluster algebras since the homogeneous coordinate ring of a Grassmanian comes with a natural cluster algebra structure \cite{Scott}.

\section{One-loop principal $A$-determinants and symbol letters}\label{sec: symbol alphabets}
The goal of this Section is to present a formula to compute the symbol alphabet of one-loop Feynman graphs from their principal $A$-determinant. At this loop order the relevant principal  $A$-determinant is in turn computed via determinant calculations as described in subsection \ref{subsec:OneLoopPricADet} below. The resulting formulas for the symbol letters of generic Feynman integrals, where all masses and momenta squared are nonzero and different from each other, are then given in subsection \ref{subsec:SymbolLettersFormula}. These formulas are then verified by directly constructing the corresponding canonical differential equations in subsection~\ref{sec:DE}, and by comparing with earlier results in the literature, whenever available. Finally, in subsection~\ref{sec: limiting procedure} we provide evidence that the principal $A$-determinant and symbol alphabets of non-generic graphs may be obtained from the generic ones by a limiting process.

\subsection{Matrix representation of one-loop principal $A$-determinants}\label{subsec:OneLoopPricADet}
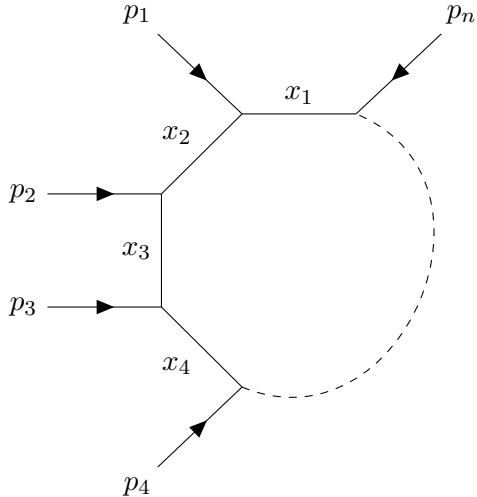
\begin{figure}
    \centering
    \begin{tikzpicture}[baseline=-\the\dimexpr\fontdimen22\textfont2\relax]
    \begin{feynman}
    \vertex (a);
    \vertex [below left = of a] (b);
    \vertex [below  = of b] (c);
    \vertex [below right = of c](d);
    \vertex [right = of a] (e);
    \vertex [above left = of a](p1){\(p_1\)};
    \vertex [left = of b](p2){\(p_2\)};
    \vertex [left = of c](p3){\(p_3\)};
    \vertex [below left = of d](p4){\(p_4\)};
    \vertex [above right = of e](pe){\(p_{{n}}\)};
    \diagram*{
        (a)--[edge label'=\(x_2\)](b)--[edge label'=\(x_3\)](c)--[edge label'=\(x_4\)](d),(a)--[edge label=\(x_1\)](e), (d)--[scalar, half right](e),
        (p1)--[fermion](a),
        (p2)--[fermion](b),
        (p3)--[fermion](c),
        (p4)--[fermion](d),
        (pe)--[fermion](e),
    };
    \end{feynman}
    \end{tikzpicture}
    \caption{Generic one-loop Feynman diagram with ${n}$ external legs.}
    \label{fig:one-loop feynman}
\end{figure}

Let us start by further specializing the mostly general discussion of  Section \ref{sec: Singularities and Principal A-determinant} to one-loop graphs as shown and labeled in Figure~\ref{fig:one-loop feynman}. In this case the number of external legs coincides with the number of internal legs and with the number of vertices, and for simplicity from this point onwards we will denote this number with
\begin{equation}
n=|E|=|V|\,.
\end{equation}

For one-loop graphs $\U$ has degree one and $\F$ has degree two, so the homogenized Lee-Pomeransky polynomial $\G_h=\U x_0+\F$ has degree two. For degree two polynomials the discriminant and hence also the principal $A$-determinant calculation, which we will consider in this Section, can be made very simple.

Let $f$ be a homogeneous polynomial of degree two. The homogeneity means that the vanishing of all partial derivatives implies the vanishing of $f$, i.e.
\begin{equation}
    \frac{\partial f}{\partial x_1}=\cdots=\frac{\partial f}{\partial x_k}=0\Longrightarrow f=0.
\end{equation}
And since $f$ has degree two, all partial derivatives are linear homogeneous functions. The definition of $\nabla_0$ now reduces to requiring that the vanishing locus of the partial derivatives be non-empty with $x$ in the torus $(\CC^*)^k$. 

Writing the zero set of the partial derivatives as
\begin{equation}\label{eq:Jacobian_Coefficent_Mat}
    \begin{pmatrix}
    \frac{\partial f}{\partial x_1}\\ \vdots \\ \frac{\partial f}{\partial x_k}
    \end{pmatrix}=:
    \jac(f)\begin{pmatrix}
    x_1\\ \vdots\\ x_k
    \end{pmatrix}=\mathbf{0},
\end{equation}
where $\jac(f)$ is the coefficient matrix associated to the Jacobian of $f$, we get that
\begin{equation}\label{eq:discrimDet_punc}
    \overline{\mathbf{V}\left(\frac{\partial f}{\partial x_1},\ldots,\frac{\partial f}{\partial x_k}\right)\neq\emptyset\ \mathrm{for}\ x\in\CC^k\setminus\{\mathbf{0}\}}\Longleftrightarrow\ \det(\jac(f))=0.
\end{equation}
Note that this is an equivalence for $x$ in the punctured affine space $\CC^k\setminus\{\mathbf{0}\}$ and not for $x$ in the torus $(\CC^*)^k$, we will return to this important point shortly.

For the Lee-Pomeransky polynomial $\G$ the coefficient matrix $\jac(\G)$ coincides with the \emph{modified Cayley matrix} as defined by Melrose in \cite{Melrose:1965kb}. To begin with, we define the \emph{Cayley matrix} $Y$ to be the ${n}\times {n}$-matrix with elements
\begin{equation}\label{eq:CayleyElements}
    Y_{ii}=2m_i^2\,,\qquad Y_{ij}=m_i^2+m_j^2-p(F_{ij})^2
\end{equation}
where 
\begin{equation}
p(F_{ij})^2 =(p_i+\ldots + p_{j-1})^2\equiv s_{ij-1}=s_{ji-1}\,,    
\end{equation}
is the total momenta flowing through the two-forest $F_{ij}$ obtained from $G$ by removing edges $i$ and $j$. We have also expressed the latter in terms of the Mandelstam invariants in the labelling of Figure~\ref{fig:one-loop feynman}, where cyclicity of the indices modulo $n$ is implied. For concreteness, in what follows we will choose conventions where these $n(n-1)/2$ cyclic Mandelstam invariants do not contain $p_n$. In terms of the coefficient matrix of the Jacobian,  $Y=\jac(\F)$. From the Cayley matrix the \emph{modified Cayley matrix} $
\cY$ is simply obtained by decorating it with a row and column  in the following manner:\footnote{{The modified Cayley matrix also appears naturally in the embedding space formalism~\cite{Dirac:1936fq} as applied to one-loop integrals, see for example~\cite{Abreu:2017ptx}. The compactification of the integration domain, which can be easily carried out in this formalism, also treats type I and II Landau singularities of one-loop integrals on the same footing.}}
\begin{equation}\label{eq:mCayley}
    \cY:=\begin{pmatrix}
        0&1&1&\cdots &1\\
        1&Y_{11}&Y_{12}&\cdots &Y_{1{n}}\\
        1&Y_{12} & Y_{22}&\cdots &Y_{2{n}}\\
        \vdots&\vdots&\vdots &&\vdots \\
        1&Y_{1{n}}&Y_{2{n}} &\cdots&Y_{{n}{n}}
    \end{pmatrix}.
\end{equation}
The determinant of both the Cayley and modified Cayley matrix can be understood as \emph{Gram determinants}, defined in general as
\begin{equation}\label{eq:Gram_general}
 G(k_1,\ldots,k_m;l_1,\ldots,l_m)\equiv \det\left(k_i \cdot l_j\right)=  \det\begin{pmatrix}
        k_1\cdot l_1 & k_1\cdot l_2&\cdots&k_1\cdot l_{{m}}\\
        k_1\cdot l_2 & k_2\cdot l_2 &\cdots&k_2\cdot l_{{m}}\\
        \vdots & \vdots && \vdots\\
        k_1\cdot l_{{m}}&k_2 \cdot l_{{m}}&\cdots&k_{{m}}\cdot l_m 
    \end{pmatrix}\,,
\end{equation}
and further abbreviated as
\begin{equation}\label{eq:Gram_general_kk}
G(k_1,\ldots,k_m)\equiv  G(k_1,\ldots,k_m;k_1,\ldots,k_m)\,,   
\end{equation}
when the two sets of momenta coincide. The determinant of the Cayley matrix is up to a proportionality factor the Gram determinant of the internal momenta restricted to be on their mass shell $q_i^2=m_i^2$, 
\begin{equation}
    \det(Y)=(-2)^{{n}} G(q_1,\ldots q_n)\,.
\end{equation}
Similarly, the determinant of the modified Cayley matrix is proportional to the Gram determinant of any ${n}-1$ of the external momenta, for example
\begin{equation}\label{eq:cYtoG}
    \det(\cY)=-2^{{n}-1}G(p_1,\ldots,p_{n-1})\,,
\end{equation}
namely it is independent of the internal masses. In what follows, we will call this particular determinant the Gram determinant of a Feynman graph/integral, and use the term general Gram determinant otherwise.

The relation between $\det(Y)$, $\det(\cY)$ and general $m\times m$ Gram determinants is helpful since  the latter can be written as certain $(m+2)\times(m+2)$ \emph{Cayley-Menger} determinants according to\small
\begin{equation}
    G(k_1,\ldots,k_m)=\frac{(-1)^{m+1}}{2^m}\det\begin{pmatrix}
        0&1&1&1&\cdots&1 &1\\
        1&0&(k_1-k_2)^2&(k_1-k_3)^2&\cdots&(k_1-k_m)^2 &k_1^2\\
        1&(k_1-k_2)^2&0&(k_2-k_3)^2&\cdots&(k_2-k_m)^2 &k_2^2\\
        1&(k_1-k_3)^2&(k_2-k_3)^2&0&\cdots&(k_3-k_m)^2 &k_3^2\\
        \vdots&\vdots&\vdots&\vdots&&\vdots&\vdots\\
        1&(k_1-k_m)^2&(k_2-k_m)^2&(k_3-k_m)^2&\cdots&0 &k_m^2\\
        1&k_1^2&k_2^2&k_3^2&\cdots&k_m^2&0
    \end{pmatrix}.
\end{equation}\normalsize
Cayley-Menger determinants have the important property that they are irreducible for $n\ge 3$, see \cite{DAndrea}. This means that for a homogeneous polynomial $f$ of degree two a sufficient condition for
\begin{equation}
    \Delta_A(f)=\det(\jac(f))
\end{equation}
is that $\det(\jac(f))$ can be understood as a Cayley-Menger determinant with $n\ge 3$. The irreducibility of $\det(\jac)$ removes the need for  the distinction between the algebraic torus and punctured affine space meaning that the determinant is equal to the discriminant; that is in this case we can replace $x\in\CC^k\setminus\{\mathbf{0}\}$ with $x\in (\CC^*)^k$ in \eqref{eq:discrimDet_punc}. An example illustrating how this identification can fail when the determinant is reducible is given in Example \ref{ex:discriminant_not_determinant}.


For massive one-loop graphs with generic kinematics the result in \cite{DAndrea} means that the Cayley determinant for ${n}\ge 3$ and the Gram determinant for ${n}\ge 4$ are all irreducible and thus coincide with the expected discriminant. For the few cases not covered by the theorem, explicit calculations confirm that the discriminant and determinant coincide. 

We will not only be interested in the determinant of the modified Cayley matrix but also its minors. Let $1\le k<{n}+1$ be an integer, then the general minor of order ${n}+1-k$ is denoted as
\begin{equation}\label{eq:MinorNotation}
    \cY\begin{bmatrix}
        i_1&i_2&\cdots&i_k\\
        j_1&j_2&\cdots&j_k
    \end{bmatrix},\qquad 1\le i_1<i_2<\cdots i_k\le {n},\ 1\le j_1<j_2<\cdots j_k\le {n}
\end{equation}
where $I=\{i_1,i_2,\cdots,i_k\}$ and $J=\{j_1,j_2,\ldots,j_k\}$ denotes the rows resp, columns removed from $\cY$. If both $I$ and $J$ are empty we recover the full determinant $\det(\cY)=\cY\begin{bmatrix}
    \cdot\\ \cdot
\end{bmatrix}$. It will sometimes also be convenient to introduce the complementary notation where we index the rows and columns kept in the minor, this is signified by a parenthesis $(\cdot)$ instead of square brackets $[\cdot]$. Let $\mathcal{E}=\{1,2,\ldots,{n}+1\}$, then
\begin{equation*}
    \cY\begin{bmatrix}
        I\\J
    \end{bmatrix}=\cY\begin{pmatrix}
        \mathcal{E}\setminus I\\ 
        \mathcal{E}\setminus J
    \end{pmatrix}.
\end{equation*}

The identification between discriminants and subgraphs in Section \ref{sec: Singularities and Principal A-determinant} can now be carried out for determinants as well:
\begin{itemize}
    \item $\Delta_A(\G)=\det(\cY)$,
    \item $\Delta_{A\cap x_i=0}(\G)=\cY\begin{bmatrix}
        i+1\\i+1
    \end{bmatrix}$,
    \item $\Delta_{A\cap\newt(\F)}(\F)=\det(Y)=\cY\begin{bmatrix}
        1\\1
    \end{bmatrix}$,
    \item $\Delta_{A\cap\newt(\F)\cap x_i=0}=Y\begin{bmatrix}
        i\\i
    \end{bmatrix}=\cY\begin{bmatrix}
        1&i+1\\
        1&i+1
    \end{bmatrix}$,
\end{itemize}
For one-loop graphs with massive and generic kinematics this means that all non-trivial discriminants can be expressed as determinants, making the calculation much easier. Especially, the reduced principal $A$-determinant now has a simple expression as a product of minors of $\cY$:
\begin{equation}\label{eq:1LoopEA}
\widetilde{E_A}(\G)=\cY\begin{bmatrix}
    \cdot\\ \cdot
\end{bmatrix} \prod_{i=1}^{{n}+1}\mCm\left[\begin{array}{c}
i\\
i
\end{array}\right]
\ldots
\prod_{i_{{n}-1}>\ldots>i_1=1}^{{n}+1}\mCm\left[\begin{array}{c}
i_1\ldots i_{{n}-1}\\
i_1\ldots i_{{n}-1}
\end{array}\right]
\prod_{i=2}^{{n}+1}\mCm_{ii}\,.
\end{equation}
In other words, $\widetilde{E_A}$ is equal to the product of all diagonal $k$-dimensional minors of $\mCm$ with $k=1,\ldots,n+1$ (the largest one corresponding to the determinant of the entire matrix), except for the $\mCm_{11}$ element, which is zero. Given that minors of $\mCm$ where the first row and column has (not) been removed correspond to the Cayley (Gram) determinant of the Feynman graph, or its subgraphs where certain edges have been contracted, the above formula also has the following interpretation: The principal $A$-determinant of a generic 1-loop $n$-point graph is the product of the Cayley and Gram determinants of the graph and all of its subgraphs. From these considerations, we may also easily derive that the total number of factors in eq.\eqref{eq:1LoopEA}, i.e. the total number of kinematic-dependent Cayley  and Gram determinants of the Feynman graph and all of its subgraphs, is
 \begin{equation}
2^{n+1}-n-2\,,     
 \end{equation}
namely 1, 4, 11, 26, 57 and 120 factors for $n=1,\ldots,6$. In this counting we exclude not only the element $\mCm_{11}=0$ but also all two-dimensional minors including the first row and column, which always yield $-1$.

\subsection{One-loop symbol alphabets}\label{subsec:SymbolLettersFormula}

In this subsection we will show how to calculate the symbol alphabet of a one-loop graph by appropriately re-factorizing its principal $A$-determinant~\eqref{eq:1LoopEA}. Note that we are going to distinguish the two cases for the sum of loop integration dimension and number of external legs $D_0+{n}$ being odd, respectively, even.

\paragraph{Jacobi identities.} As has been seen in the example presented in the introduction, individual factors in the natural factorization of the principal $A$-determinant gives us all the rational letters for a Feynman graph. It is well known, however, that letters containing square roots quickly become unavoidable, even at one-loop. In order to construct the letters containing square roots we will use Jacobi determinant identities to re-factorize the factors of the principal $A$-determinant~\eqref{eq:1LoopEA}, namely of minors of the modified Cayley matrix $\cY$, in pairs. Jacobi determinant identities have played an important role in many areas of mathematics and physics, including the computation of volumes of spherical simplices~\cite{aomoto_1977}, as well as the solution theory of integrable systems~\cite{hirota_2004}.

There will be two types of identities that we need; identities containing the full determinant $\cY\left[\begin{smallmatrix} 
    \cdot\\ \cdot
\end{smallmatrix}\right]$ (of a given graph and of its subgraphs) and those containing the determinant of the Cayley sub-matrix $\cY\left[\begin{smallmatrix}
    1\\ 1
\end{smallmatrix}\right]$. The identities containing the full determinant are
\begin{equation}\label{eq:JacobiIdOdd}
\begin{array}{l}
    \cY\begin{bmatrix}
        \cdot\\ \cdot
    \end{bmatrix}\cY\begin{bmatrix}
        1&i\\1&i
    \end{bmatrix}=\cY\begin{bmatrix}
        i\\i
    \end{bmatrix}\cY\begin{bmatrix}
        1\\1
    \end{bmatrix}-\cY\begin{bmatrix}
        i\\1
    \end{bmatrix}^2\,,\\
     \cY\begin{bmatrix}
        \cdot\\ \cdot
    \end{bmatrix}\cY\begin{bmatrix}
        i&j\\i&j
    \end{bmatrix}=\cY\begin{bmatrix}
        i\\i
    \end{bmatrix}\cY\begin{bmatrix}
        j\\j
    \end{bmatrix}-\cY\begin{bmatrix}
        i\\j
    \end{bmatrix}^2\,,\quad i\ge 2\,,
    \end{array}
\end{equation}
where in the first line we have simply separated out the $i=1$ case of the second line for later convenience. As we will get back to shortly, these identities are relevant in the case when $D_0+{n}$ is odd as it is the modified Cayley determinant that contributes to the leading singularity, { see also eq.~\eqref{eq:leading-sing} and the discussion around it in the next section.}

The identities containing the determinant of the Cayley submatrix are
\begin{equation}\label{eq:JacobiIdEven}
    \begin{array}{l}
        \cY\begin{bmatrix}
        \cdot\\ \cdot
    \end{bmatrix}\cY\begin{bmatrix}
        1&i\\1&i
    \end{bmatrix}=\cY\begin{bmatrix}
        i\\i
    \end{bmatrix}\cY\begin{bmatrix}
        1\\1
    \end{bmatrix}-\cY\begin{bmatrix}
        i\\1
    \end{bmatrix}^2\,,\\
     \cY\begin{bmatrix}
        1\\ 1
    \end{bmatrix}\cY\begin{bmatrix}
        1&i&j\\1&i&j
    \end{bmatrix}=\cY\begin{bmatrix}
        1&j\\1&j
    \end{bmatrix}\cY\begin{bmatrix}
        1&i\\1&i
    \end{bmatrix}-\cY\begin{bmatrix}
        1&j\\1&i
    \end{bmatrix}^2\,,
    \end{array}
\end{equation}
and these will in turn be relevant when $D_0+{n}$ is even, as it is the Cayley determinant that yields the leading singularity.   Note that the first identity in each case is the same.
\paragraph{Symbol letters.} 
{ We now introduce a procedure for obtaining not only the rational but also the square-root letters of one-loop Feynman integrals as follows: We assume that the latter are produced  by applying Jacobi determinant identities of the form
\begin{equation}\label{eq:JacobiSqRt}
    p\cdot q=f^2-g=(f-\sqrt{g})(f+\sqrt{g})\,,
\end{equation}
where
\begin{enumerate}
    \item $p$ and $q$ are both factors of the principal $A$-determinant, i.e.\ rational letters given by symmetric minors of the modified Cayley matrix.
    \item The square-root letters ${f\pm\sqrt{g}}$ thus obtained contain the leading singularity of the Feynman integral considered in their second term.
\end{enumerate}
 This procedure is motivated by the interpretation of one-loop integrals as volumes of spherical simplices~\cite{Davydychev:1998fk}, and by the role Jacobi identities have played in their computation. Particularly the second assumption adopts a pattern that has been observed not only in one-, but also many two-loop computations. In the next subsection, we will validate the correctness of these assumptions by explicitly constructing the canonical differential equations of one-loop integrals, as well as compare with the existing results in the literature. While the precise identities and assumptions may differ beyond one loop, we expect that a similar re-factorization methodology should still apply.

Given that the rational letters at the very left of eq.~\eqref{eq:JacobiSqRt} are already contained in the rational alphabet, the genuinely new square-root letter that will arise from the procedure we have described will be the ratio of the factors at the very right of the same equation,
\begin{equation}
    \frac{f-\sqrt{g}}{f+\sqrt{g}}\,.
\end{equation}
Furthermore, the subset of Jacobi identities chosen by the assumptions we have stated above will be precisely eqs.~\eqref{eq:JacobiIdOdd} and \eqref{eq:JacobiIdEven} for $D_0+{n}$ odd and even, respectively. The first and second line of these equations then yields  $n$ letters of the first type,}
\begin{equation}\label{eq: letter one-contracted}
    W_{1,\ldots,(i-1),\ldots,{n}}=
    \begin{cases}
        \dfrac{\cY\begin{bmatrix}
        i\\1
    \end{bmatrix}-\sqrt{-\cY\begin{bmatrix}
            \cdot\\ \cdot
        \end{bmatrix}\cY\begin{bmatrix}
        1&i\\1&i
    \end{bmatrix}}}{\cY\begin{bmatrix}
        i\\1
    \end{bmatrix}+\sqrt{-\cY\begin{bmatrix}
            \cdot\\ \cdot
        \end{bmatrix}\cY\begin{bmatrix}
       1&i\\1&i
    \end{bmatrix}}},\qquad &D_0+{n}\ \mathrm{odd,}\\{}\\
    \dfrac{\cY\begin{bmatrix}
        i\\1
    \end{bmatrix}-\sqrt{\cY\begin{bmatrix}
        i\\i
    \end{bmatrix}\cY\begin{bmatrix}
        1\\1
    \end{bmatrix}}}{\cY\begin{bmatrix}
        i\\1
    \end{bmatrix}+\sqrt{\cY\begin{bmatrix}
        i\\i
    \end{bmatrix}\cY\begin{bmatrix}
        1\\1
    \end{bmatrix}}},\qquad &D_0+{n}\ \mathrm{even}.
    \end{cases}
\end{equation}
and $n(n-1)/2$ letters of the second type,
\begin{equation}\label{eq: letter two-contracted}
    W_{1,\ldots,(i-1),\ldots,(j-1),\ldots,{n}}=
    \begin{cases}
    \dfrac{\cY\begin{bmatrix}
        i\\j
    \end{bmatrix}-\sqrt{-\cY\begin{bmatrix}
        \cdot\\ \cdot
    \end{bmatrix}\cY\begin{bmatrix}
        i&j\\i&j
    \end{bmatrix}}}{\cY\begin{bmatrix}
        i\\j
    \end{bmatrix}+\sqrt{-\cY\begin{bmatrix}
        \cdot\\ \cdot
    \end{bmatrix}\cY\begin{bmatrix}
        i&j\\i&j
    \end{bmatrix}}},\qquad &D_0+{n}\ \mathrm{odd},\\{}\\
    \dfrac{\cY\begin{bmatrix}
        1&j\\1&i
    \end{bmatrix}-\sqrt{-\cY\begin{bmatrix}
        1\\1
    \end{bmatrix}\cY\begin{bmatrix}
        1&i&j\\1&i&j
    \end{bmatrix}}}{\cY\begin{bmatrix}
        1&j\\1&i
    \end{bmatrix}+\sqrt{-\cY\begin{bmatrix}
        1\\1
    \end{bmatrix}\cY\begin{bmatrix}
        1&i&j\\1&i&j
    \end{bmatrix}}},\qquad &D_0+{n}\ \mathrm{even},
    \end{cases}
\end{equation}
respectively, where we remind the reader that $\cY$ denotes the modified Cayley matrix~\eqref{eq:mCayley} of the $n$-point one-loop integral, and that minors of the latter, obtained by removing some of its rows and columns, have been defined in eq.~\eqref{eq:MinorNotation}.
 
Finally, the full determinant and the minor $\cY\begin{bmatrix}
    1\\1
\end{bmatrix}$ should also in principle appear as rational letters of the  $n$-point graph. However, as we will come back to in the next subsection, these in fact always come as a rational combination, 
\begin{equation}\label{eq: rational letter}
        W_{1,2,\ldots,{n}}=\dfrac{\cY\begin{bmatrix}
        \cdot\\ \cdot
        \end{bmatrix}}{\cY\begin{bmatrix}
            1\\1
        \end{bmatrix}}\,.
\end{equation}
\begin{remark}
    It is interesting to note that this rational function has an intrinsic meaning in the GKZ approach to Feynman integrals: It is the value the Lee-Pomeransky polynomial attains on its critical point. If we solve $\partial\G/\partial x_i=0$ for all $i=1,\ldots,n$ and evaluate $\G$ at this unique point we get this fraction up to a numerical factor. This is a special case of a general property of the principal $A$-determinant \cite[Theorem 1.17, Chapter 10]{gelfand2008discriminants}.   
\end{remark}

So far, we have obtained $1+n(n+1)/2$ letters of the $n$-point graph for general $n$. To obtain the remaining letters, we apply the above expressions to each of the subgraphs of the $n$-point graph, obtained by contracting some of its edges. That is, we replace $\cY$ with the modified Cayley matrix of the subgraph in question on the right-hand side of eqs.\eqref{eq: letter one-contracted}-\eqref{eq: rational letter}, and we only keep the indices labeling its uncontracted edges on the left-hand side\footnote{Our labeling conventions for the letters also reveal dihedral relations among them, e.g. $W_{1,2}$ and $W_{2,3}$ are related by a cyclic shift. Note however that due to our choice of conventions, these relations may come with additional minus signs or inversions.}.  The correct $\cY$-matrix for a subgraph is obtained from the original one by removing the rows and columns corresponding to the contracted edges. This process terminates once the letters for the tadpoles have been calculated. 

In the ancillary \texttt{Mathematica} file accompanying the version of this paper on the \texttt{arXiv}, we provide code for generating the complete $n$-point alphabet in principle for any $n$. As a benchmark, the runtime for $n=6$ is at the order of a minute on a laptop computer.
\begin{remark}
    In the $n=1$ case, or equivalently the tadpole graph, the modified Cayley matrix is too small to allow for any Jacobi identities, and hence only the rational letter of type \eqref{eq: rational letter} is present. Similarly, for $n=2$ or the bubble graph there are no letters of the type $W_{(i),(j)}$. 
\end{remark}
\begin{remark}\label{remark: 2 kallen triangle}
    For $n=3$ or the triangle graph and even loop integration dimension $D_0$, one would expect $\binom{3}{2}=3$ letters of the type $W_{(i),(j),k}$:
    \begin{equation}
    \frac{a-b+c-\sqrt{\lambda}}{a-b+c+\sqrt{\lambda}},\qquad \frac{a+b-c-\sqrt{\lambda}}{a+b-c+\sqrt{\lambda}},\qquad \frac{a-b-c-\sqrt{\lambda}}{a-b-c+\sqrt{\lambda}}
\end{equation}
    where $\lambda:=\lambda(a,b,c)$ is the fully symmetric Källén function\,, 
\be\label{eq:Kaellen}
\lambda(a,b,c)=a^2+b^2+c^2-2ab-2bc-2ac\,.
\ee
However, these have the multiplicative relation
    \begin{equation}
    \frac{a-b+c-\sqrt{\lambda}}{a-b+c+\sqrt{\lambda}}\cdot \frac{a+b-c-\sqrt{\lambda}}{a+b-c+\sqrt{\lambda}}=\frac{a-b-c-\sqrt{\lambda}}{a-b-c+\sqrt{\lambda}}\,.
\end{equation}
    Hence one need only include two out of the three so as to obtain a multiplicatively independent set, and here and in the attached ancillary file we will in particular choose them to be $W_{(i),(j),k}$ and  $W_{i,(j),(k)}$. Note that the special cases of letters we have discussed pertain also  to tadpole, bubble (and for $D_0$ even triangle) subgraphs of $n\ge4$ graphs, hence the use of generic letter indices. 
\end{remark}
From the above formulas, we can also obtain a closed formula for the total number of letters of an $n$-point graph: the latter has $\binom{n}{m}$ $m$-point subgraphs, $m=1,\ldots n$, and each of them yields
\begin{equation}
    \frac{m(m+1)}{2}+\delta_{m\ge 4}+\delta_{m,3}\delta_{D_0,\text{odd}}\,,
\end{equation}
letters, where $\delta$ denotes the Kronecker delta function, with some abuse of notation to describe the cases where it equals 1 when its index is greater than a given integer value, or odd (and zero otherwise). Therefore in total the number of letters is
\begin{equation}\label{eq:LetterCounts}
|W|= \begin{cases}
2^{n-3}\left(n^2+3n+8\right)-\frac{1}{6}\left(n^3+5n+6\right)\,,    \qquad &D_0\ \mathrm{even},\\{}\\
2^{n-3}\left(n^2+3n+8\right)-\frac{1}{2}\left(n^2+n+2\right)\,,\qquad &D_0\ \mathrm{odd},,
    \end{cases}   
\end{equation}
that is, $|W|=1,5,18,57,166,454,1184$ and $|W|=1,5,19,61, 176,474,1219$ for $n=1,\ldots,7$ when $D_0$ even and odd, respectively. These counts correspond to the total number of multiplicatively independent letters of the generic $n$-point 1-loop integral shown in Figure~\ref{fig:one-loop feynman}, provided that $n\le d+1$, e.g.~$n\le 5$ when the external momenta live in $d=4$ dimensions. This restriction ensures that all Mandelstam invariants appearing the letters may be treated as independent variables, given that no more than $d$ vectors can be linearly independent in $d$ dimensions.

For $n> d+1$, after choosing the first $d$ momenta of the $n$-point integral as our basis in the space of external kinematics, expressing the remaining momenta in terms of them, and dotting these linear relations with the basis vectors, we may express them as polynomial relations between the Mandelstam invariants,
\begin{equation}\label{eq:GramConstraintsLargen}
G(p_1,\ldots,p_d,p_i)=0\,,\quad i=d+1,\ldots,n-1\,,
\end{equation}
where $G$ is the Gram determinant, defined in eq.~\eqref{eq:Gram_general}. In subsection~\ref{sec: limiting procedure} we will provide evidence that the principal $A$-determinant and hence also the alphabet of any $n$-point 1-loop graph, obeying additional restrictions such as~\eqref{eq:GramConstraintsLargen}, or such that any of the masses and Mandelstam invariants become equal to each other or vanish, may be obtained as limits of the generic cases, eqs.~\eqref{eq:1LoopEA} and \eqref{eq: letter one-contracted}-\eqref{eq: rational letter}, respectively. In these cases the limits will introduce additional multiplicative dependence among the letters, hence our generic formulas will provide a spanning set for the alphabet, and the counts~\eqref{eq:GramConstraintsLargen} will correspond to upper bounds. As we will also see in the examples presented in Section~\ref{sec: examples}, a basis within this spanning set may be found immediately in any kinematic parametrization that rationalizes all resulting letters, or e.g. with the help of \texttt{SymBuild}~\cite{Mitev:2018kie} even in the presence of square roots.

\subsection{Verification through differential equations and comparison with literature}\label{sec:DE}
In this Section we show how the letters actually appear in the canonical differential equations. For our basis $\vec{g}$ of pure master integrals as defined in eq.~\eqref{eq:DE}, we use the fact that any integral in $D-2$ dimensions of loop momenta may be expressed as a linear combination of integrals in $D$ dimensions and vice-versa, with the help of dimensional recurrence relations \cite{Tarasov:1996br,Lee:2009dh}. As these relations are merely a change of basis, this implies that different cases of spaces of master integrals are simply distinguished by (the integer part of) $D$ being even or odd, and that within each case we may choose our basis $\vec{g}$ to consist of integrals with different $D$. To be more concrete, we need to introduce some further notation: We write the integrals of \eqref{eq:I-def} in $D=D_0-2\epsilon$ dimensions and for $a_i=1$ as $\mathcal{I}_E^{(D_0)}$, where the set $E$ indicates the edges of the corresponding graph. E.g.~$\mathcal{I}_{134}^{(2)}$ denotes the triangle integral in $D=2-2\epsilon$ dimensions that is obtained when the second propagator of the box integral $\mathcal{I}_{1234}^{(2)}$ is removed.

We then take the basis $\vec{g}$ to consist of the following canonical integrals:
\begin{equation}
\label{eq:can-int-notation}
 \begin{aligned}
     \mathcal{J}_{i_1\ldots i_k}=
     \begin{cases}
      \dfrac{\epsilon^{\left\lfloor\frac{k}{2}\right\rfloor}\mathcal{I}^{(k)}_{i_1\ldots i_k}}{j_{i_1\ldots i_k}} & \text{for } k+D_0 \text{ even},\\
      & \\
      \dfrac{\epsilon^{\left\lfloor\frac{k+1}{2}\right\rfloor}\mathcal{I}^{(k+1)}_{i_1\ldots i_k}}{j_{i_1\ldots i_k}} & \text{for } k+D_0 \text{ odd}\,,
     \end{cases}
 \end{aligned}
\end{equation}
where the leading singularities are
\begin{equation}
\label{eq:leading-sing}
 j_{i_1\cdots i_k}=\begin{cases}
  2^{-\frac{k}{2}+1}\Bigg[(-1)^{\left\lfloor\frac{k}{2}\right\rfloor}\cY\begin{pmatrix}
        i_1+1& i_2+1&\cdots &i_k+1\\
        i_1+1& i_2+1&\cdots &i_k+1
    \end{pmatrix}\Bigg]^{-1/2},& \text{for } k+D_0 \text{ even}\,,\\
  2^{-\frac{k+1}{2}+1}\Bigg[(-1)^{\left\lfloor\frac{k+1}{2}\right\rfloor}\cY\begin{pmatrix}
        1&i_1+1& i_2+1&\cdots &i_k+1\\
        1&i_1+1& i_2+1&\cdots &i_k+1
    \end{pmatrix}\Bigg]^{-1/2},& \text{for } k+D_0 \text{ odd}\,.
 \end{cases}
\end{equation}
 These integrals have been observed to be pure integrals~\cite{Abreu:2017mtm,Chen:2022fyw} for $D_0$ even, see also~\cite{Spradlin:2011wp,Arkani-Hamed:2017ahv,Bourjaily:2019exo} for earlier results with $\epsilon=0$. Note that both the overall sign of the above equation, as well as the choice of branch for the square root, are a matter of choice of convention. As is discussed in e.g.~\cite{Bourjaily:2019exo}, the former stems from the fact that as multidimensional residues, leading singularities are intrinsically dependent on the orientation of the integration contour; whereas the latter stems from the fact that while scalar Feynman integrals are positive definite in the Euclidean region, their pure counterparts need not be. In eq.~\eqref{eq:leading-sing}, we have fixed this freedom, together with the overall kinematic independent normalization that we are also free to choose, such that the differential equations take a convenient form. This choice of pure basis also specifies how square roots of products of modified Cayley (sub)determinants should be replaced by the product of the square roots of the (sub)determinants in question in our basis of letters~\eqref{eq: letter one-contracted}-\eqref{eq: rational letter}.

{ By explicit computation up to $n=10$, we observe that} the differential equations for the canonical integrals are then as follows: for even ${n}+D_0$ we have\footnote{If we are interested in $D=D_0-2\epsilon$, with $D_0$ being an odd integer, the case of even and odd ${n}$ is exchanged compared to the even-dimensional case. This can easily be seen in Baikov representation~\cite{Baikov:1996iu} where the integrals on the maximal cut are roughly equal to $I\sim G^{({n}-D)/2}C^{(D-1-{n})/2}$, where $G$ is the Gram determinant, and $C$ is the Cayley determinant which is obtained from setting all integration variables to zero in the Baikov polynomial~\cite{Flieger:2022xyq}.}
\begin{equation}\label{eq: CDE D0+E even}
    \begin{aligned}
        d\mathcal{J}_{1\ldots{n}}=&\epsilon\ d\log W_{1\ldots{n}}\ \mathcal{J}_{1\ldots{n}}\\
        &+\epsilon\sum_{1\leq i\leq {n}}(-1)^{i+\left\lfloor\frac{{n}}{2}\right\rfloor}d\log W_{1\ldots(i)\ldots{n}}\ \mathcal{J}_{1\ldots\widehat{i}\ldots{n}}\\
    &+\epsilon\sum_{1\leq i<j\leq {n}}(-1)^{i+j+\left\lfloor\frac{{n}}{2}\right\rfloor}d\log W_{1\ldots(i)\ldots(j)\ldots{n}}\ \mathcal{J}_{1\ldots\widehat{i}\ldots\widehat{j}\ldots{n}},
    \end{aligned}
\end{equation}
and for odd ${n}+D_0$\,,
\begin{equation}\label{eq: CDE D0+E odd}
    \begin{aligned}
        d\mathcal{J}_{1\ldots{n}}=&\epsilon\ d\log W_{1\ldots{n}}\ \mathcal{J}_{1\ldots{n}}\\
        &+\epsilon\sum_{1\leq i\leq {n}}(-1)^{i+\left\lfloor\frac{{n}+1}{2}\right\rfloor}d\log W_{1\ldots(i)\ldots{n}}\ \mathcal{J}_{1\ldots\widehat{i}\ldots{n}}\\
    &+\epsilon\sum_{1\leq i<j\leq {n}}(-1)^{i+j+\left\lfloor\frac{{n}+1}{2}\right\rfloor}d\log W_{1\ldots(i)\ldots(j)\ldots{n}}\ \mathcal{J}_{1\ldots\widehat{i}\ldots\widehat{j}\ldots{n}},
    \end{aligned}
\end{equation}
where the hat indicates that the index is omitted, and $\lfloor x\rfloor$ is the floor function. Note that we chose to present our formulas in such a way that the latter is irrelevant for to case of even $D_0$. The matrix $\widetilde{M}$ encoding the above differential equations according to eq.~\eqref{eq:can-ansatz}, together with our choice of basis integrals~\eqref{eq:can-int-notation} and associated leading singularities~\eqref{eq:leading-sing}, may also be easily generated in principle for any $n$ with the code provided in the attached ancillary file.

As mentioned in the introduction, the knowledge of the letters allows us to derive the differential equations without the need of any analytic computation. The formulas in \eqref{eq: CDE D0+E even} and \eqref{eq: CDE D0+E odd} are based on following this procedure up to $n=10$.\footnote{In practice, to derive the differential equations, one needs to use dimensional recurrence relations to translate all integrals of different dimensions into a common dimension. We stress however, that this choice does not affect the result for the differential equations.} This confirms our prediction for the alphabet of one-loop integrals up to this number of external legs. The IBP reduction was done with a combination of FIRE6 \cite{Smirnov:2019qkx} and LiteRed \cite{Lee:2013mka} by choosing values in a finite field for the $v_i$.

For generic kinematics and masses, it is also easy to determine the leading boundary vector $\vec{g}^{(0)}$ in eq.~\eqref{eq:symbol}: The integrals in \eqref{eq:can-int-notation}  are normalized by powers of $\epsilon$ to be of uniform transcendental weight zero. However, it is easy to see that all integrals except $\mathcal{I}_1^{(2)}$ are finite in integer dimensions. Therefore, only the tadpoles have vanishing weight zero contribution.\footnote{We thank the referee for pointing this out.} In summary, we find with our normalization
\begin{equation}
 \vec{g}^{(0)}=(\underbrace{1,\ldots,1}_{{n}},\underbrace{0,\ldots,0}_{2^{{n}}-1-{n}})^T.
\end{equation}
This in principle allows us to compute the symbol at any order in $\epsilon$ from eq.~\eqref{eq:symbol}.


Finally, let us compare our findings with earlier results in the literature. For $D_0$ even, { explicit expressions for} the canonical differential equations and symbol alphabets of finite $n$-point one-loop graphs were first derived in~\cite{Abreu:2017mtm}, based on the diagrammatic coaction~\cite{Abreu:2017enx}. The latter decomposes  { any} one-loop Feynman integral into simpler building blocks, mirroring the coaction of the multiple polylogarithmic functions, that these integrals evaluate to, but conjecturally holds to all orders in the dimensional regularization parameter $\epsilon$. In this decomposition also cut integrals appear, where some of the propagators have been placed on their mass shell, and very interestingly it was found that these are restricted to a small subset where all, or all but one, or all but two propagators have been cut. 

Based on the Baikov representation of Feynman integrals, { a similar analysis of canonical differential equations and symbol alphabets, also working out the divergent cases in more detail, was carried out in~\cite{Chen:2022fyw}}. The later and original analyses agree on the form of the letters associated to the maximal cut, and in order to avoid redundant letters coming from individual determinant factors of the principal $A$-determinant, for our rational letters~\eqref{eq: rational letter} we have chosen those ratios that coincide with them (up to immaterial constant normalization factors). For the square-root letters, we will compare with~\cite{Chen:2022fyw}, as the apparent presence of up to five different square roots in some of the letters of~\cite{Abreu:2017mtm} renders this comparison more complicated. In the former reference, and in the orientation where the uncut propagators are those with momenta ($p_{n-1}$ and) $p_{n}$ for the letters associated to the (next-to-)next-to-maximal cut, their building blocks are the following general Gram determinants as defined in eqs.~\eqref{eq:Gram_general}-\eqref{eq:Gram_general_kk},
\be
\begin{aligned}
\mathcal{K}_n&\equiv G(p_1,\ldots,p_{n-1})\,,\quad \tilde G_n\equiv G(l,p_1\ldots,p_{n-1})\,,\\
\tilde B_n&\equiv G(l,p_1,\ldots,p_{n-2};p_{n-1},p_1,\ldots,p_{n-2})\,,\\C_n&\equiv G(p_1,\ldots p_{n-2};p_1,\ldots,p_{n-3},p_{n-2}+p_{n-1})\,,\\
D_n&\equiv G(l,p_1,\ldots p_{n-2};l,p_1,\ldots p_{n-3},p_{n-2}+p_{n-1})\,,
\end{aligned}
\ee
where the loop momentum
\be\label{eq:CMY_BuildingBlocks}
\begin{aligned}
l^2&=    m_1^2\,,\\
l\cdot p_i&=\frac{m_{i+1}^2-p_i^2-m_i^2}{2}-\sum_{j=1}^{i-1}p_i\cdot p_j\,,
\end{aligned}
\ee
is evaluated as a function of the Baikov variables when the latter are set to zero. As a consequence of~\eqref{eq:cYtoG},
\be
\cY\begin{bmatrix}
        \cdot\\ \cdot
    \end{bmatrix}=-2^{n-1}\mathcal{K}_n\,,
\ee
whereas we find that the remaining determinants in~\eqref{eq:CMY_BuildingBlocks} are related to minors of our modified Cayley matrix $\cY$ as follows,
\begin{equation}
\begin{aligned}
    \cY\begin{bmatrix}
        1\\ 1
    \end{bmatrix}&=\det(Y)=2^n \tilde G_n\,,
&    \cY\begin{bmatrix}
        1\\ n+1
    \end{bmatrix}&=-(-2)^{n-1} \tilde B_n\,,\\
    \cY\begin{bmatrix}
        n\\ n+1
    \end{bmatrix}&=-2^{n-2} \tilde C_n\,,&    \cY\begin{bmatrix}
        1&n\\1&n+1
    \end{bmatrix}&=2^{n-1} \tilde D_n\,.\\
    \end{aligned}
\end{equation}
From these identifications, it follows straightforwardly that up to immaterial overall signs and inversions, for $D_0$ even the next-to-maximal and next-to-next-to-maximal cut letters in question correspond to our $W_{1,\ldots,n-1,\ldots,(n)}$ and $W_{1,\ldots n-2,(n-1),\ldots,(n)}$, respectively, and similarly the differential equations~\eqref{eq: CDE D0+E even}-\eqref{eq: CDE D0+E odd} agree with those of~\cite{Abreu:2017mtm,Chen:2022fyw}. To the best of our knowledge, the odd $D_0$ case has not appeared before. In any case, we find it pleasing that our formulas are expressed in terms of a single quantity associated to each one-loop graph, its modified Cayley matrix, making it easy to keep track of both its Landau singularities and contributions of subgraphs from different minors of the matrix.
\subsection{Limits of principal $A$-determinants and alphabets}\label{sec: limiting procedure}

So far, we have only considered the generic one-loop $n$-point Feynman integrals shown in Figure~\ref{fig:one-loop feynman}, where all $m_i^2,p^2_j$ are nonvanishing and different from each other. In this subsection, we will provide strong evidence that the principal $A$-determinant and symbol alphabet of any one-loop integral, i.e. also including configurations where different scales are equal to each other or set to zero, may be obtained as limits of the generic ones, eqs.~\eqref{eq:1LoopEA} and~\eqref{eq: letter one-contracted}-\eqref{eq: rational letter}, respectively.

To this end, we will first prove that the principal $A$-determinant $\widetilde{E_A}$ has a well-defined limit when any $m_i^2,p^2_j\to 0$,  namely that this limit is unique regardless of the order or relative rate with which we send these parameters to zero. As is argued in~\cite{Eden:1966dnq}, see also~\cite{Klausen:2021yrt}, this limit of $\widetilde{E_A}$ is defined by removing any of its factors that vanishes as its parameters approach their prescribed values. This reflects the fact that Feynman integrals converge in the Euclidean region for certain choices of propagator powers and dimension of loop integration, and hence they cannot be singular for all values of their kinematic parameters. In other words, assuming a single parameter $x$ of $\widetilde{E_A}$ takes the limiting value $a$, the  limit of the former is
defined as
\begin{equation}\label{eq:EA_limit}
\lim_{x\to a}\widetilde{E_A}=\left.\frac{\partial^l \widetilde{E_A}}{\partial x^l}\right|_{x=a}\ne 0\,,\,\,\text{with }\,\,\left.\frac{\partial^{l'} \widetilde{E_A}}{\partial x^{l'}}\right|_{x=a}= 0\,\, \text{for}\,\, l'=0,\ldots,l-1\,, 
\end{equation}
also including the possibility $l=0$, and with an obvious generalization to the multivariate case. However the fact that this limit is uniquely defined in multivariate limits is highly nontrivial, as can be seen in the following example of the triangle Cayley determinant with elements as shown in eq.~\eqref{eq:CayleyElements} for $n=3$, in the  limit $p_1^2\to 0,p_2^2\to 0,s_{12}=p_3^2\to 0$:  Denoting this Cayley determinant as $Y_3$, its Taylor expansion in the limit is
\be\label{eq:Y3p1p2p3}
Y_3=0+ 2\sum_{i=1}^3 p_i^2(m_i^2-m_{i-1}^2)(m_{i+1}^2-m_{i-1}^2)+\mathcal{O}(p_j^2p_k^2)\,,
\ee
with $j,k=1,2,3$, $m_{j+3}\equiv m_j$. Clearly, the value of $Y_3$ in the limit \emph{does} depend on the order with which we set the three momenta-squared to zero. The upshot of our analysis will be that \emph{while the limits of individual factors in~$\widetilde{E_A}$ do depend on the rate of approach to the limit, the limit of~$\widetilde{E_A}$ as a whole does not, since different rates of approach produce factors that are already contained in it}.

We start by noting that the only Gram (sub)determinant of a graph with $n\ne 3$ that vanishes as $p^2_i\to 0$ is that of a bubble (sub)graph with momentum $p_i$, $\Gm(p_i)= p_i^2$. By the above definition, the latter gives no nontrivial contribution in the limit. Particularly when $n= 3$ we also have a vanishing $\Gm(p_1^2,p_2^2)\propto\lambda(p_1^2,p_2^2,p_3^2)$, where $\lambda$ is the fully symmetric Källén function defined in eq.~\eqref{eq:Kaellen}. In particular we have $\lambda(0,0,p^2)=p^4$, and therefore this does not give any nontrivial contribution to the limit~\eqref{eq:EA_limit} either. For $n> 3$ we have sums of external momenta as arguments of the triangle subgraph, such that the corresponding Källén function does not vanish. Taking into account the fact that Gram determinants are independent of the masses, this exhausts the analysis of their potential ambiguities in the limit we are considering.

By the same token, it is possible to show that the only Cayley (sub)determinants of any graph that vanish as $m_i^2\to 0$ are the diagonal $m_i^2$ elements of the matrix, also giving no contribution to the limit. Finally, we consider the behavior of the Cayley (sub)determinants when both $m_i^2, p^2_i$ vanish. The only vanishing (sub)determinants in this case are the Cayley determinants of bubble and triangle (sub)graphs. The former are proportional to $\lambda(m_i^2,m_{i+1}^2,p_i^2)$, and by the previous analysis we deduce that they do not affect the limit. For the triangle Cayley determinant $Y_3$, we have already shown in~\eqref{eq:Y3p1p2p3} that it vanishes as $p_1^2,p_2^2,p_3^3\to 0$, and by examining all possibilities we similarly find that up to dihedral images, the only other minimum codimension limit for which the same is true is the limit $p_1^2,m_1^2,m_2^2\to0$. The Taylor expansion around the latter yields
\be\label{eq:Y3p1m1m2}
Y_3=0+ 2p_1^2\prod_{i=1}^2(m_3^2-p_{i+1}^2)+2(p_2^2-p_3^2)\sum_{i=1}^2 (-1)^i m^2_i (m_3^2-p^2_{i+1})+\mathcal{O}(m_j^2m_k^2)+\mathcal{O}(m_j^2p_1^2)+\mathcal{O}(p_1^4)\,,
\ee
with $j,k=1,2$, respectively. We see that in both cases the three nonvanishing derivatives depend on the remaining variables and are not equal to each other, such that the value of $Y_3$ depends on the relative rate with which we approach the limit, and causing a potential ambiguity for the limit of the principal $A$-determinant as a whole. Very interestingly, however, the factor that $Y_3$ contributes in all different rates of approach is already contained in $\widetilde{E_A}$ in the limit. Specifically,
\be
\lim_{p_1^2\to0}G(p_1,p_2)\to -\frac{1}{4}\lambda(p_2^2,p_3^2,0)=-\frac{1}{4}\left(p_2^2-p_3^2\right)^2\,,
\ee
accounts for one of the Taylor expansion coefficients in~\eqref{eq:Y3p1m1m2}, and two-dimensional Cayley subdeterminants $\lambda(p_i^2,m_3^2,0)$ of $Y_3$ account for the rest. Similarly,  the Cayley subdeterminants $\lambda(m_i^2,m_j^2,0)$ take care of all the Taylor expansion coefficients in~\eqref{eq:Y3p1p2p3}.

Along the same lines, inspecting all higher-codimension $m_i^2, p^2_i\to 0$ limits where $Y_3$ vanishes reveals that they are all contained in the two aforementioned codimension-3 limits. Thus they are covered by the previous analysis, and this concludes the proof that the principal $A$-determinant has a well-defined limit for any subset of $m_i^2, p^2_i\to 0$.

The analysis we have carried out may of course also be repeated in any other multivariate limit, including restrictions on the dimension of external kinematics of the type~\eqref{eq:GramConstraintsLargen}. For example, $\widetilde{E_A}$ remains well-defined for any codimension-two limit where the Cayley determinant of any bubble subgraph of an $n$-point graph vanishes. As we will also discuss in the next Section, in all cases we have considered this unambiguous limit of the principal $A$-determinant of the generic one-loop graph to any specific kinematic configuration also matches the direct computation of the principal $A$-determinant of the non-generic graph. These findings suggest that, quite remarkably, the space of solutions of the Landau equations for one-loop Feynman integrals smoothly interpolates between different kinematic configurations related by limits, even though the differential equations these integrals obey, be it of hypergeometric or canonical type, diverge and may not have well-defined such limits.

This in turn supports the expectation that also the symbol alphabet of non-generic Feynman integral may be correctly obtained as a limit of the generic one, based on the re-factorization we have exhibited, of the principal $A$-determinant in terms of the alphabet. In the next Section we will explicitly confirm this expectation in several examples, further corroborating earlier observations made in~\cite{Chen:2022fyw}. { These are also in line with the observation that the diagrammatic coaction~\cite{Abreu:2017enx,Abreu:2017mtm} reduces correctly in the limits of generic to non-generic and possibly divergent graphs. The conjectured equivalence of this coaction with the coaction on the polylogarithms these integrals evaluate to, is then also an implicit conjecture about the relation of their alphabets.}

\section{Examples}\label{sec: examples}
In this Section we present a series of examples illustrating the formulas for the principal $A$-determinant, symbol alphabet and canonical differential equations of a generic $n$-point one-loop Feynman integral presented in Section \ref{sec: symbol alphabets}, as well as the limiting procedure for obtaining the first two for any non-generic integral,  for various values of $n$. We will consider several of these integrals around both even and odd dimensions of loop momenta $D_0$. 
\subsection{Bubble graph}
Let us start with the $n=2$ or bubble integral illustrated below. 
\begin{equation*}
     \centering
    \begin{tikzpicture}[baseline=-\the\dimexpr\fontdimen22\textfont2\relax]
    \begin{feynman}
    \vertex (a);
    \vertex [right = of a] (b);
    \vertex [right = of b] (c);
    \vertex [right = of c] (d);
    \diagram* {
	    (a) --[fermion, edge label=\(p\)] (b) -- [half left,edge label=\({x_1}\)](c) -- [anti fermion, edge label=\(-p\)](d), (c) -- [half left,edge label=\({x_2}\)](b),
};
\end{feynman}
\end{tikzpicture}
\end{equation*}
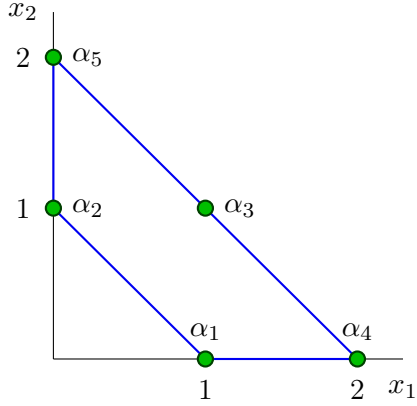
\begin{figure}
    \centering
    \begin{tikzpicture}%
	[scale=2.000000,
	back/.style={loosely dotted, thin},
	edge/.style={color=blue!95!black, thick},
	facet/.style={fill=blue!95!black,fill opacity=0.8},
	vertex/.style={inner sep=2pt,circle,draw=green!25!black,fill=green!75!black,thick}]
%
%

\draw (0,0) -- (0,2.3);
\draw (0,0) -- (2.3,0);
\draw (-0.2,2.3) node          {$x_2$};
\draw (-0.2,1) node          {1};
\draw (-0.2,2) node          {2};
\draw (2.3,-0.2) node          {$x_1$};
\draw (1,-0.2) node          {1};
\draw (2,-0.2) node          {2};
\coordinate (0.00000, 1.00000) at (0.00000, 1.00000);
\coordinate (0.00000, 2.00000) at (0.00000, 2.00000);
\coordinate (1.00000, 0.00000) at (1.00000, 0.00000);
\coordinate (2.00000, 0.00000) at (2.00000, 0.00000);
\draw[edge] (0.00000, 1.00000) -- (0.00000, 2.00000);
\draw[edge] (0.00000, 1.00000) -- (1.00000, 0.00000);
\draw[edge] (0.00000, 2.00000) -- (2.00000, 0.00000);
\draw[edge] (1.00000, 0.00000) -- (2.00000, 0.00000);
\node[vertex,label=0:\(\alpha_3\)] at (1.00000, 1.00000)     {};
\node[vertex,label=0:\(\alpha_2\)] at (0.00000, 1.00000)     {};
\node[vertex,label=0:\(\alpha_5\)] at (0.00000, 2.00000)     {};
\node[vertex,label=\(\alpha_1\)] at (1.00000, 0.00000)     {};
\node[vertex,label=\(\alpha_4\)] at (2.00000, 0.00000)     {};
\end{tikzpicture}
    \caption{The Newton polytope of the Lee-Pomeransky polynomial $\G$ of the bubble. The edges $(a_1a_4)$ and $(a_2a_5)$ correspond to its tadpole or $n=1$ subgraphs, and the edges $(a_1a_2)$ and $(a_4a_5)$ are the Newton polytopes of the first and second Symanzik polynomials $\mathcal{U}$ and $\mathcal{F}$, respectively. The former does not contribute to $\widetilde{E_A}$.}
    \label{fig: newton polytope bubble}
\end{figure}

The Lee-Pomeransky polynomial for the latter in generic kinematics is
\begin{equation}\label{eq:bubbleExample}
    \G=x_1+x_2+(m_1^2+m_2^2-p^2)x_1x_2+m_1^2x_1^2+m_2^2x_2^2\,,
\end{equation}
so the $A$-matrix is given by
\begin{equation}
    A=\begin{pmatrix}
        1&1&1&1&1\\
        1&0&1&2&0\\
        0&1&1&0&2
    \end{pmatrix}.
\end{equation}
As in Section \ref{sec:FeynmanGGKZ} we let $\alpha_i$ denote the exponents in the $\G$-polynomial. With this we can write the reduced principal $A$-determinant as
\begin{align}
    \widetilde{E_A}(\G)&=\Delta_{\alpha_4}\Delta_{\alpha_5}\Delta_{\alpha_4\alpha_5}\Delta_{\alpha_1\alpha_2\alpha_4\alpha_5}\nonumber\\
    &=m_1^2m_2^2(p^4+m_1^4+m_2^4-2p^2m_1^2-2p^2m_2^2-2m_1^2m_2^2)p^2\,,
\end{align}
where we momentarily index the discriminants with the vertices in the corresponding face. The final expression as a polynomial of the variables $m_1^2,m_2^2,p^2$ indeed agrees with the general expression~\eqref{eq:1LoopEA} in terms of the minors of the modified Cayley matrix,
\begin{equation}
    \cY=\begin{pmatrix}
    0&1&1\\
    1&2m_1^2&m_1^2+m_2^2-p^2\\
    1 & m_1^2+m_2^2-p^2 & 2m_2^2
    \end{pmatrix}.
\end{equation}
Applying our general formulas for the alphabet, eqs.~\eqref{eq: letter one-contracted}-\eqref{eq: rational letter} to this case, we obtain five letters; one for each possible tadpole subgraph and three for the bubble itself. In even space-time dimension the letters are:
\begin{align}
    W_1&=\frac{\cY\begin{bmatrix}
        3\\3
    \end{bmatrix}}{\cY\begin{bmatrix}
        1&3\\1&3
    \end{bmatrix}}=\frac{-1}{2m_1^2},\qquad W_2=\frac{\cY\begin{bmatrix}
        2\\2
    \end{bmatrix}}{\cY\begin{bmatrix}
        1&2\\1&2
    \end{bmatrix}}=\frac{-1}{2m_2^2},\qquad W_{12}=\frac{\cY\begin{bmatrix}
    \cdot\\ \cdot
    \end{bmatrix}}{\cY\begin{bmatrix}
    1\\ 1
    \end{bmatrix}}=\frac{2p^2}{\lambda(p^2,m_1^2,m_2^2)},\nonumber\\
    W_{(1)2}&=\frac{\cY\begin{bmatrix}
        2\\1
    \end{bmatrix}-\sqrt{\cY\begin{bmatrix}
        2\\ 2
    \end{bmatrix}\cY\begin{bmatrix}
        1\\ 1
    \end{bmatrix}}}{\cY\begin{bmatrix}
        2\\1
    \end{bmatrix}+\sqrt{\cY\begin{bmatrix}
        2\\ 2
    \end{bmatrix}\cY\begin{bmatrix}
        1\\ 1
    \end{bmatrix}}}=\frac{-m_1^2 + m_2^2 + p^2-\sqrt{\lambda(p^2,m_1^2,m_2^2)}}{-m_1^2 + m_2^2 + p^2+\sqrt{\lambda(p^2,m_1^2,m_2^2)}},\\
   W_{1(2)}&=\frac{\cY\begin{bmatrix}
       3\\ 1
   \end{bmatrix}-\sqrt{\cY\begin{bmatrix}
       3\\3
   \end{bmatrix}\cY\begin{bmatrix}
       1\\ 1
   \end{bmatrix}}}{\cY\begin{bmatrix}
       3\\ 1
   \end{bmatrix}+\sqrt{\cY\begin{bmatrix}
       3\\3
   \end{bmatrix}\cY\begin{bmatrix}
       1\\ 1
   \end{bmatrix}}}=\frac{-m_1^2 + m_2^2 - p^2-\sqrt{\lambda(p^2,m_1^2,m_2^2)}}{-m_1^2 + m_2^2 - p^2+\sqrt{\lambda(p^2,m_1^2,m_2^2)}},\nonumber
\end{align}
where we recall that $\lambda$ denotes the K\"all\'en function\,,
\begin{equation}
    \lambda(p^2,m_1^2,m_2^2)=p^4+m_1^4+m_2^4-2p^2m_1^2-2p^2m_2^2-2m_1^2m_2^2.
\end{equation}
While the dependence of the bubble on few kinematic variables does not leave much room for nontrivial limits, we have checked that in all codimension-1 limits where a momentum or mass squared vanishes, or two of them are set equal to each other, the above alphabet reduces to harmonic polylogarithms~\cite{Remiddi:1999ew} as expected.

Apart from the bubble alphabet, let us also present the canonical differential equation for the corresponding master integrals. In this case, the choice of basis \eqref{eq:can-int-notation} specializes to
\begin{equation}
    \begin{aligned}
        \mathcal{J}_1&=\frac{\epsilon\,\mathcal{I}_1^{(2)}}{j_1}, &       \mathcal{J}_2&=\frac{\epsilon\,\mathcal{I}_2^{(2)}}{j_2}, &
        \mathcal{J}_{12}&=\frac{\epsilon\,\mathcal{I}_{12}^{(2)}}{j_{12}},
    \end{aligned}
\end{equation}
with the leading singularities being
\begin{equation}
    \begin{aligned}
        j_1^{-1}&=\sqrt{-\cY\begin{bmatrix}
        3\\ 3
    \end{bmatrix}}=1, &       j_2^{-1}&=\sqrt{-\cY\begin{bmatrix}
        2\\ 2
    \end{bmatrix}}=1, &
        j_{12}^{-1}&=\sqrt{-\cY\begin{bmatrix}
        1\\ 1
    \end{bmatrix}}=\sqrt{\lambda(p^2,m_1^2,m_2^2)}.
    \end{aligned}
\end{equation}
Using the expression for the canonical differential equation in eq.~\eqref{eq: CDE D0+E even}, the matrix \eqref{eq:can-ansatz} is given by
\begin{equation}
    \widetilde{M}=\left(
\begin{array}{ccc}
 w_1 & 0 & 0 \\
 0 & w_2 & 0 \\
 -w_{\text{1(2)}} & w_{\text{(1)2}} & w_{12} \\
\end{array}
\right),
\end{equation}
where $w=\log W$.

At odd space-time dimension of loop momentum $D_0$, the rational letters $W_1,\ W_2$ and $W_{12}$ are the same as in the even $D_0$ case, whereas the square-root letters  become
\begin{align}
    W_{(1)2}&=\frac{\cY\begin{bmatrix}
        2\\ 1
    \end{bmatrix}-\sqrt{-\cY\begin{bmatrix}
        \cdot\\ \cdot
    \end{bmatrix}\cY\begin{bmatrix}
        1&2\\ 1&2
    \end{bmatrix}}}{\cY\begin{bmatrix}
        2\\ 1
    \end{bmatrix}+\sqrt{-\cY\begin{bmatrix}
        \cdot\\ \cdot
    \end{bmatrix}\cY\begin{bmatrix}
        1&2\\ 1&2
    \end{bmatrix}}}=\frac{-m_1^2+m_2^2+p^2-\sqrt{4p^2m_2^2}}{-m_1^2+m_2^2+p^2+\sqrt{4p^2m_2^2}}\\
    W_{1(2)}&=\frac{\cY\begin{bmatrix}
        3\\ 1
    \end{bmatrix}-\sqrt{-\cY\begin{bmatrix}
        \cdot\\ \cdot
    \end{bmatrix}\cY\begin{bmatrix}
        1&3\\ 1&3
    \end{bmatrix}}}{\cY\begin{bmatrix}
        3\\ 1
    \end{bmatrix}+\sqrt{-\cY\begin{bmatrix}
        \cdot\\ \cdot
    \end{bmatrix}\cY\begin{bmatrix}
        1&3\\ 1&3
    \end{bmatrix}}}=\frac{-m_1^2+m_2^2-p^2-\sqrt{4p^2m_1^2}}{-m_1^2+m_2^2-p^2+\sqrt{4p^2m_1^2}}
\end{align}

\begin{figure}
    \centering
    \includegraphics[width=0.5\textwidth]{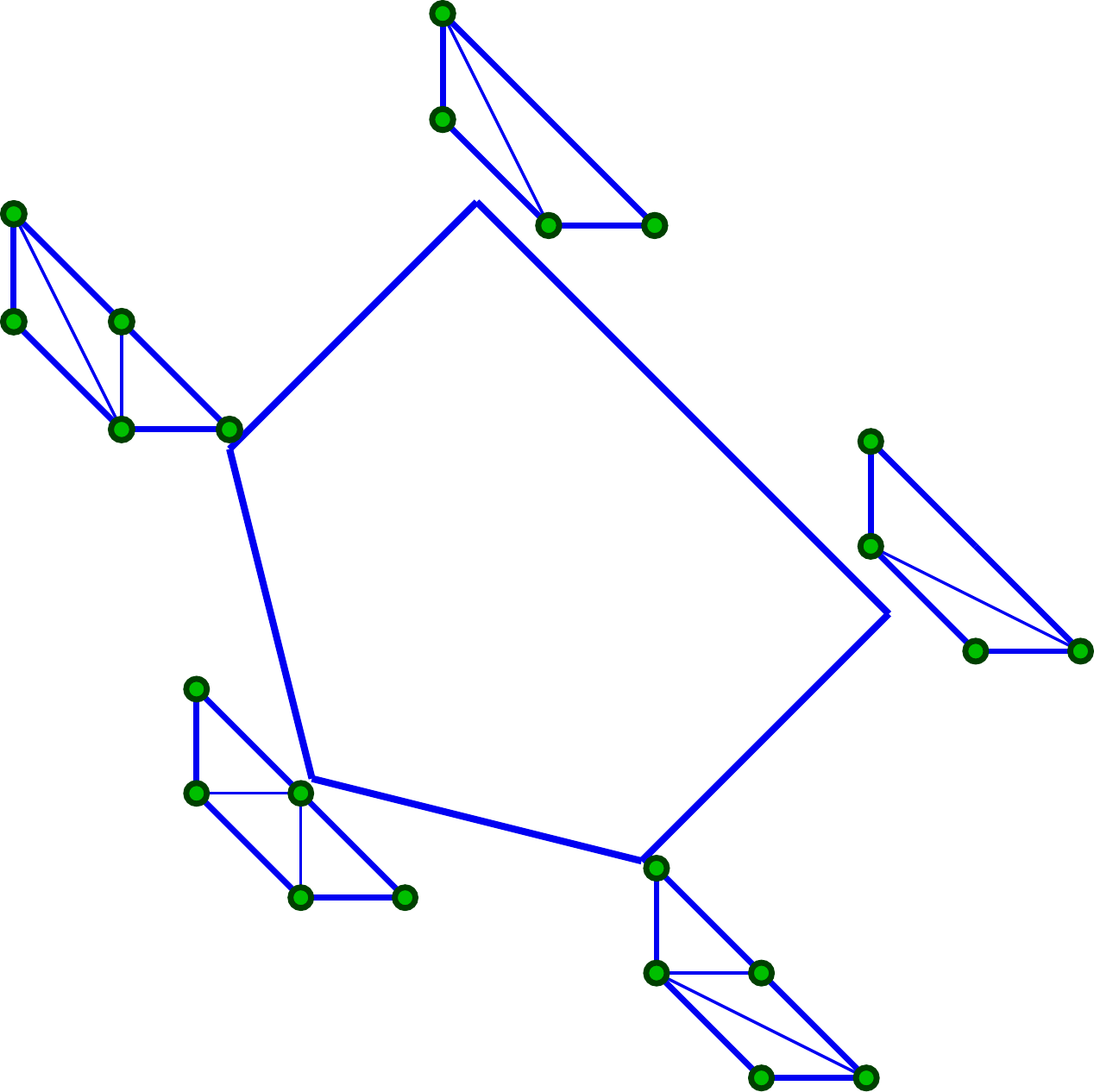}
    \caption{Secondary polytope and regular triangulations of $\newt(\G)$ for the 1-loop bubble graph, with $\G$ as in \eqref{eq:bubbleExample}. It is isomorphic to the $A_2$ cluster polytope.
}
    \label{fig: second polytope bubble}
\end{figure}
For the basis of master integrals and leading singularities we obtain
\begin{equation}
    \begin{aligned}
        \mathcal{J}_1&=\frac{\mathcal{I}_1^{(1)}}{j_1}, &       \mathcal{J}_2&=\frac{\mathcal{I}_2^{(1)}}{j_2}, &
        \mathcal{J}_{12}&=\frac{\epsilon\mathcal{I}_{12}^{(3)}}{j_{12}},
    \end{aligned}
\end{equation}
\begin{equation}
    \begin{aligned}
        j_1^{-1}&=\sqrt{\frac{\cY\begin{bmatrix}
        1 & 3\\ 1 & 3
    \end{bmatrix}}{2}}=\sqrt{m_1^2}, &       j_2^{-1}&=\sqrt{\frac{\cY\begin{bmatrix}
        1& 2\\ 1& 2
    \end{bmatrix}}{2}}=\sqrt{m_2^2}, &
        j_{12}^{-1}&=\sqrt{-2\cY\begin{bmatrix}
        \cdot\\ \cdot
    \end{bmatrix}}=2\sqrt{p_1^2}.
    \end{aligned}
\end{equation}
Putting this together in the differential equation matrix we get
\begin{equation}
    \widetilde{M}=\left(
\begin{array}{ccc}
 w_1 & 0 & 0 \\
 0 & w_2 & 0 \\
 -w_{\text{1(2)}} & w_{\text{(1)2}} & w_{12} \\
\end{array}
\right).
\end{equation}
\begin{remark}
The Newton polytope of $\G$ for the massive bubble is a trapezoid (see Figure \ref{fig: newton polytope bubble}) with five regular triangulations, meaning that the secondary polytope is a pentagon, see Figure \ref{fig: second polytope bubble}. At the same time, the alphabet of the bubble graph may be expressed in terms of the variables of the $A_2$ cluster algebra, whose cluster polytope is also a pentagon. We find this match striking, even though the Newton polytope of $\G$ for the $n$-point graph is $n$-dimensional, and hence the correspondence with cluster algebras, which typically triangulate two-dimensional surfaces, it not as straightforward for $n>2$.
\end{remark}
\subsection{Triangle graphs}
In our next example, we consider the $n=3$ or triangle Feynman graph illustrated in the diagram below. 
\begin{equation*}
    \centering
    \begin{tikzpicture}[baseline=-\the\dimexpr\fontdimen22\textfont2\relax]
\begin{feynman}
\vertex (e){\(p_1\)};
\vertex [right = of e] (a);
\vertex [below right = of a] (b);
\vertex [above right = of a] (c);
\vertex [below right = of b] (e1){\(p_2\)};
\vertex [above right = of c] (e2){\(p_3\)};
\diagram* {
	(e) -- [fermion] (a) -- [edge label'=\(x_2\)] (b) -- [edge label'=\(x_3\)] (c) --[edge label'=\(x_1\)] (a),
	(e1) -- [fermion] (b),
	(e2) -- [fermion] (c),
};
\end{feynman}
\end{tikzpicture}
\end{equation*}

The two Symanzik polynomials of this graph with generic kinematics are
\begin{align}
    \U=&x_1+x_2+x_3,\nonumber\\ \F=&(m_1^2+m_2^2-p_1^2)x_1x_2+(m_1^2+m_2^3-p_3^2)x_1x_3+(m_2^2+m_3^2-p_2^2)x_2x_3\nonumber\\
    &+m_1^2x_1^2+m_2^2x_2^2+m_3^2x_3^2
\end{align}
which with $\G=\U+\F$ gives the $A$-matrix:
\begin{equation*}
    A=\begin{pmatrix}
       1&1&1&1&1&1&1&1&1\\
       1&0&0&1&1&0&2&0&0\\
       0&1&0&1&0&1&0&2&0\\
       0&0&1&0&1&1&0&0&2
    \end{pmatrix}.
\end{equation*}
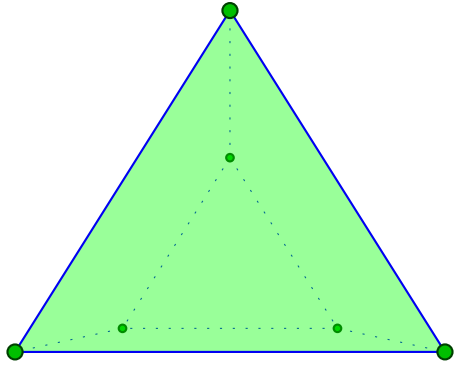
\begin{figure}[h]
    \centering
    \begin{tikzpicture}%
	[x={(-0.707068cm, -0.156292cm)},
	y={(0.707146cm, -0.156185cm)},
	z={(-0.000065cm, 0.975284cm)},
	scale=2.000000,
	back/.style={loosely dotted, thin},
	edge/.style={color=blue!95!black, thick},
	facet/.style={fill=green,fill opacity=0.400000},
	vertexout/.style={inner sep=2pt,circle,draw=green!25!black,fill=green!75!black,thick},
    vertexback/.style={inner sep=1pt,circle,draw=green!25!black,fill=green!75!black,thick}]
%
%

\coordinate (0.00000, 0.00000, 1.00000) at (0.00000, 0.00000, 1.00000);
\coordinate (0.00000, 0.00000, 2.00000) at (0.00000, 0.00000, 2.00000);
\coordinate (0.00000, 1.00000, 0.00000) at (0.00000, 1.00000, 0.00000);
\coordinate (0.00000, 2.00000, 0.00000) at (0.00000, 2.00000, 0.00000);
\coordinate (1.00000, 0.00000, 0.00000) at (1.00000, 0.00000, 0.00000);
\coordinate (2.00000, 0.00000, 0.00000) at (2.00000, 0.00000, 0.00000);
\draw[edge,back] (0.00000, 0.00000, 1.00000) -- (0.00000, 1.00000, 0.00000);
\draw[edge,back] (0.00000, 1.00000, 0.00000) -- (0.00000, 2.00000, 0.00000);
\draw[edge,back] (0.00000, 1.00000, 0.00000) -- (1.00000, 0.00000, 0.00000);
\draw[edge,back] (0.00000, 0.00000, 1.00000) -- (0.00000, 0.00000, 2.00000);
\draw[edge,back] (0.00000, 0.00000, 1.00000) -- (1.00000, 0.00000, 0.00000); 
\draw[edge,back] (1.00000, 0.00000, 0.00000) -- (2.00000, 0.00000, 0.00000);
\node[vertexback] at (0.00000, 1.00000, 0.00000)     {};
\node[vertexback] at (0.00000, 0.00000, 1.00000)     {};
\node[vertexback] at (1.00000, 0.00000, 0.00000)     {};
\fill[facet] (2.00000, 0.00000, 0.00000) -- (0.00000, 0.00000, 2.00000) -- (0.00000, 2.00000, 0.00000) -- cycle {};

\draw[edge] (0.00000, 0.00000, 2.00000) -- (0.00000, 2.00000, 0.00000);
\draw[edge] (0.00000, 0.00000, 2.00000) -- (2.00000, 0.00000, 0.00000);
\draw[edge] (0.00000, 2.00000, 0.00000) -- (2.00000, 0.00000, 0.00000);


\node[vertexout] at (0.00000, 0.00000, 2.00000)     {};
\node[vertexout] at (0.00000, 2.00000, 0.00000)     {};

\node[vertexout] at (2.00000, 0.00000, 0.00000)     {};
\end{tikzpicture}
    \caption{The Newton polytope of the Lee-Pomeransky polynomial $\G$ of the massive triangle, with two parallel triangular faces, and three trapezoidal faces from bubble subgraphs.}
    \label{fig: newton polytope triangle}
\end{figure}The Newton polytope of $\G$ is displayed in Figure \ref{fig: newton polytope triangle} where there are three facets looking like Figure \ref{fig: newton polytope bubble}. These facets are obtained from $\newt(\G)$ by intersecting it with one of the coordinate hyperplanes $x_i=0$. This of course corresponds to contracting edge $i$ in the underlying Feynman graph, giving us a bubble graph. The discriminant factors making up the principal $A$-determinant $E_A$ as described in \eqref{eq:EaFact} are:
\begin{align*}
    \Delta_A&=p_1^4+p_2^4+p_3^4-2p_1^2p_2^2-2p_1^2p_3^2-2p_2^2p_3^2,\\
    \Delta_{A\cap x_1=0}&=-p_2^2,\\
    \Delta_{A\cap x_2=0}&=-p_3^2,\\
    \Delta_{A\cap x_3=0}&=-p_1^2,\\
    \Delta_{A\cap\newt(\F)}&=-m_1^2 m_2^2 p_1^2 + m_1^2 m_3^2 p_1^2 + m_2^2 m_3^2 p_1^2 - m_3^4 p_1^2 - 
 m_3^2 p_1^4 - m_1^4 p_2^2 + m_1^2 m_2^2 p_2^2 + m_1^2 m_3^2 p_2^2\\
 &- 
 m_2^2 m_3^2 p_2^2 + m_1^2 p_1^2 p_2^2 + m_3^2 p_1^2 p_2^2 - m_1^2 p_2^4 + 
 m_1^2 m_2^2 p_3^2 - m_2^4 p_3^2 - m_1^2 m_3^2 p_3^2 + m_2^2 m_3^2 p_3^2\\
 &+ 
 m_2^2 p_1^2 p_3^2 + m_3^2 p_1^2 p_3^2 + m_1^2 p_2^2 p_3^2 + m_2^2 p_2^2 p_3^2 - 
 p_1^2 p_2^2 p_3^2 - m_2^2 p_3^4,\\
    \Delta_{A\cap\newt(\F)\cap x_1=0}&=p_2^4+m_2^4+m_3^2-2p_2^2m_2^2-2p_2^2m_3^2-2m_2^2m_3^2,\\
    \Delta_{A\cap\newt(\F)\cap x_2=0}&=p_3^4+m_1^4+m_3^4-2p_3^2m_1^2-2p_3^2m_3^2-2m_1^2m_3^2,\\
    \Delta_{A\cap\newt(\F)\cap x_3=0}
    &=p_1^4+m_1^4+m_2^4-2p_1^2m_1^2-2p_1^2m_2^2-2m_1^2m_2^2,\\
    \prod_{\mathrm{vertices}}\Delta_v&=m_1^2m_2^2m_3^2.
\end{align*}
Again, all these discriminants, calculated directly from the GKZ approach, correspond to minors of the modified Cayley matrix
\begin{equation}
    \cY=\begin{pmatrix}
    0&1&1&1\\
    1&2m_1^2&(m_1^2+m_2^2-p_1^2)&(m_1^2+m_2^3-p_3^2)\\
    1&(m_1^2+m_2^2-p_1^2)&2m_2^2&(m_2^2+m_3^2-p_2^2)\\
    1&(m_1^2+m_2^3-p_3^2)&(m_2^2+m_3^2-p_2^2)&2m_3^2
    \end{pmatrix}\,,
\end{equation}
and up to overall factors we have the identification
\begin{alignat*}{2}
&\Delta_A\to\cY\begin{bmatrix}
    \cdot\\ \cdot
\end{bmatrix},\qquad &&\Delta_{A\cap x_i=0}\to\cY\begin{bmatrix}
    i+1\\i+1
\end{bmatrix},\\
&\Delta_{A\cap\newt(\F)}\to\cY\begin{bmatrix}
    1\\ 1
\end{bmatrix},\qquad 
&&\Delta_{A\cap\newt(\F)\cap x_i=0}\to\cY\begin{bmatrix}
    1&i+1\\ 1&i+1
\end{bmatrix}\,, 
\end{alignat*}
whereas the masses correspond the diagonal elements of $\cY$. As predicted by eq.~\eqref{eq:LetterCounts}, for $D_0$ even there are 18 multiplicatively independent letters, out of which 12 correspond to bubble subgraphs and are thus obtained by relabeling the formulas of the previous subsection. The remaining six letters are
\begin{align}
    W_{123}&=\frac{\cY\begin{bmatrix}
        \cdot\\ \cdot
    \end{bmatrix}}{\begin{bmatrix}
        1\\ 1
    \end{bmatrix}},\label{eq:W123}\\
    W_{(1)23}&=\frac{\cY\begin{bmatrix}
        2\\ 1
    \end{bmatrix}-\sqrt{-\cY\begin{bmatrix}
        \cdot\\ \cdot
    \end{bmatrix}\cY\begin{bmatrix}
        1&2\\1&2
    \end{bmatrix}}}{\cY\begin{bmatrix}
        2\\ 1
    \end{bmatrix}+\sqrt{-\cY\begin{bmatrix}
        \cdot\\ \cdot
    \end{bmatrix}\cY\begin{bmatrix}
        1&2\\1&2
    \end{bmatrix}}}\qquad \mathrm{plus\ cyclic}\ (1)\to(2)\to(3)\,,\\
    W_{(1)(2)3}&=\frac{\cY\begin{bmatrix}
        2\\3
    \end{bmatrix}-\sqrt{\cY\begin{bmatrix}
        \cdot\\ \cdot
    \end{bmatrix}}}{\cY\begin{bmatrix}
        2\\3
    \end{bmatrix}+\sqrt{\cY\begin{bmatrix}
        \cdot\\ \cdot
    \end{bmatrix}}}\qquad \mathrm{plus\ cyclic}\ (1)(2)\to(2)(3)\,.
\end{align}
In this case the basis integrals of eq.~\eqref{eq:can-int-notation} read
\begin{equation}
    \begin{aligned}
        \mathcal{J}_1&=\frac{\epsilon\,\mathcal{I}_1^{(2)}}{j_1}, &       \mathcal{J}_2&=\frac{\epsilon\,\mathcal{I}_2^{(2)}}{j_2}, & \mathcal{J}_3&=\frac{\epsilon\,\mathcal{I}_3^{(2)}}{j_3}, \\
        \mathcal{J}_{12}&=\frac{\epsilon\,\mathcal{I}_{12}^{(2)}}{j_{12}}, & \mathcal{J}_{13}&=\frac{\epsilon\,\mathcal{I}_{13}^{(2)}}{j_{13}}, &
        \mathcal{J}_{23}&=\frac{\epsilon\,\mathcal{I}_{23}^{(2)}}{j_{23}}, \\
        \mathcal{J}_{123}&=\frac{\epsilon^2\mathcal{I}_{123}^{(4)}}{j_{123}}, 
    \end{aligned}
\end{equation}
with leading singularities
\begin{equation}
    \begin{aligned}
        j_1^{-1}&=\sqrt{-\cY\begin{bmatrix}
        3&4\\ 3&4
    \end{bmatrix}}=1, &       j_2^{-1}&=\sqrt{-\cY\begin{bmatrix}
        2&4\\ 2&4
    \end{bmatrix}}=1, \\
    j_3^{-1}&=\sqrt{-\cY\begin{bmatrix}
        2&3\\ 2&3
    \end{bmatrix}}=1, &
        j_{12}^{-1}&=\sqrt{-\cY\begin{bmatrix}
        1 & 4\\ 1&4
    \end{bmatrix}}=\sqrt{\lambda(p_1^2,m_1^2,m_2^2)},\\
    j_{13}^{-1}&=\sqrt{-\cY\begin{bmatrix}
        1 & 3\\ 1&3
    \end{bmatrix}}=\sqrt{\lambda(p_3^2,m_1^2,m_3^2)},&
    j_{23}^{-1}&=\sqrt{-\cY\begin{bmatrix}
        1 & 2\\ 1&2
    \end{bmatrix}}=\sqrt{\lambda(p_2^2,m_2^2,m_3^2)},\\
    j_{123}^{-1}&=2\sqrt{\cY\begin{bmatrix}
        \cdot\\ \cdot
    \end{bmatrix}}=2\sqrt{\lambda(p_1^2,p_2^2,p_3^2)}.
    \end{aligned}
\end{equation}
Putting it all together we get the differential equation matrix
\begin{equation}
\widetilde{M}=\left(
\begin{array}{ccccccc}
 w_1 & 0 & 0 & 0 & 0 & 0 & 0 \\
 0 & w_2 & 0 & 0 & 0 & 0 & 0 \\
 0 & 0 & w_3 & 0 & 0 & 0 & 0 \\
 -w_{\text{1(2)}} & w_{\text{(1)2}} & 0 & w_{12} & 0 & 0 & 0 \\
 -w_{\text{1(3)}} & 0 & w_{\text{(1)3}} & 0 & w_{13} & 0 & 0 \\
 0 & -w_{\text{2(3)}} & w_{\text{(2)3}} & 0 & 0 & w_{23} & 0 \\
 -w_{\text{1(2)(3)}} & w_{\text{(1)(2)3}}+w_{\text{1(2)(3)}} & -w_{\text{(1)(2)3}} & -w_{\text{12(3)}} & w_{\text{1(2)3}} & -w_{\text{(1)23}} & w_{123} \\
\end{array}
\right).
\end{equation}

For the case of odd $D_0$ there are now 19 letters, where again 12 of them are obtained from the odd $D_0$ bubble by relabeling. Out of the remaining seven, the rational letter $W_{123}$ is the same as in the even $D_0$ case~\eqref{eq:W123}, and the rest are
\begin{align}
    W_{(1)23}&=\frac{\cY\begin{bmatrix}
        2\\ 1
    \end{bmatrix}-\sqrt{\cY\begin{bmatrix}
        2\\2
    \end{bmatrix}\cY\begin{bmatrix}
        1\\1
    \end{bmatrix}}}{\cY\begin{bmatrix}
        2\\ 1
    \end{bmatrix}+\sqrt{\cY\begin{bmatrix}
        2\\2
    \end{bmatrix}\cY\begin{bmatrix}
        1\\1
    \end{bmatrix}}},\qquad \mathrm{plus\ cyclic}\ (1)\to(2)\to(3),\\
    W_{(1)(2)3}&=\frac{\cY\begin{bmatrix}
        1&2\\
        1&3
    \end{bmatrix}-\sqrt{-\cY\begin{bmatrix}
        1\\1
    \end{bmatrix}\cY\begin{bmatrix}
        1&2&3\\
        1&2&3
    \end{bmatrix}}}{\cY\begin{bmatrix}
        1&2\\
        1&3
    \end{bmatrix}+\sqrt{-\cY\begin{bmatrix}
        1\\1
    \end{bmatrix}\cY\begin{bmatrix}
        1&2&3\\
        1&2&3
    \end{bmatrix}}},\qquad \mathrm{plus\ cyclic}\ (1)(2)\to(1)(3)\to(2)(3).
\end{align}
The basis integrals now become,
\begin{equation}\label{eq:Basis3Dtriangle}
    \begin{aligned}
        \mathcal{J}_1&=\frac{\mathcal{I}_1^{(1)}}{j_1}, &       \mathcal{J}_2&=\frac{\mathcal{I}_2^{(1)}}{j_2}, & \mathcal{J}_3&=\frac{\mathcal{I}_3^{(1)}}{j_3}, \\
        \mathcal{J}_{12}&=\frac{\epsilon\,\mathcal{I}_{12}^{(3)}}{j_{12}}, & \mathcal{J}_{13}&=\frac{\epsilon\,\mathcal{I}_{13}^{(3)}}{j_{13}}, &
        \mathcal{J}_{23}&=\frac{\epsilon\,\mathcal{I}_{23}^{(3)}}{j_{23}}, \\
        \mathcal{J}_{123}&=\frac{\epsilon\,\mathcal{I}_{123}^{(3)}}{j_{123}}, 
    \end{aligned}
\end{equation}
where the leading singularities are explicitly given by
\begin{equation}
    \begin{aligned}
        j_1^{-1}&=\sqrt{\frac{\cY\begin{bmatrix}
        1&3&4\\ 1&3&4
    \end{bmatrix}}{2}}=\sqrt{m_1^2}, &       j_2^{-1}&=\sqrt{\frac{\cY\begin{bmatrix}
        1&2&4\\ 1&2&4
    \end{bmatrix}}{2}}=\sqrt{m_2^2}, \\
    j_3^{-1}&=\sqrt{\frac{\cY\begin{bmatrix}
        1&2&3\\ 1&2&3
    \end{bmatrix}}{2}}=\sqrt{m_3^2}, &
        j_{12}^{-1}&=\sqrt{-2\cY\begin{bmatrix}
         4\\ 4
    \end{bmatrix}}=2\sqrt{p_1^2},\\
    j_{13}^{-1}&=\sqrt{-2\cY\begin{bmatrix}
         3\\ 3
    \end{bmatrix}}=2\sqrt{p_2^2},&
    j_{23}^{-1}&=\sqrt{-2\cY\begin{bmatrix}
         2\\ 2
    \end{bmatrix}}=2\sqrt{p_3^2},\\
    j_{123}^{-1}&=\sqrt{-2\cY\begin{bmatrix}
        1\\ 1
    \end{bmatrix}}=2\sqrt{-\Delta_{A\cap\newt(\F)}},
    \end{aligned}
\end{equation}
and putting it all together we get the differential equation matrix
\begin{equation}\label{eq:3DtriangleDEMatrix}
\widetilde{M}=\left(
\begin{array}{ccccccc}
 w_1 & 0 & 0 & 0 & 0 & 0 & 0 \\
 0 & w_2 & 0 & 0 & 0 & 0 & 0 \\
 0 & 0 & w_3 & 0 & 0 & 0 & 0 \\
 -w_{\text{1(2)}} & w_{\text{(1)2}} & 0 & w_{12} & 0 & 0 & 0 \\
 -w_{\text{1(3)}} & 0 & w_{\text{(1)3}} & 0 & w_{13} & 0 & 0 \\
 0 & -w_{\text{2(3)}} & w_{\text{(2)3}} & 0 & 0 & w_{23} & 0 \\
 w_{\text{1(2)(3)}} & -w_{\text{(1)2(3)}} & w_{\text{(1)(2)3}} & w_{\text{12(3)}} & -w_{\text{1(2)3}} & w_{\text{(1)23}} & w_{123} \\
\end{array}
\right).
\end{equation}
 As an independent check of our results, we may also compare the finite part of the $D_0=3$ triangle integral $\mathcal{I}_{123}^{(3)}$, first computed in~\cite{Nickel:1978ds}, with our prediction for its symbol based on eqs.~\eqref{eq:Basis3Dtriangle}-\eqref{eq:3DtriangleDEMatrix}, together with eq.~\eqref{eq:symbol}. In particular, our prediction reads,
 \begin{equation}
 \mathcal{I}_{123}^{(3)}\propto\frac{1}{\sqrt{-Y_3}}\log\frac{W_{(1)(2)3}W_{1(2)(3)}}{W_{(1)2(3)}}+\mathcal{O}(\epsilon)\,,
 \end{equation}
 where $Y_3$ denotes the triangle Cayley matrix with elements as shown in eq.~\eqref{eq:CayleyElements}, and $n=3$. Taking into account that in~\cite{Nickel:1978ds} the Cayley matrix has been defined with an extra $1/2$ overall factor, as well as rescaled to become dimensionless by dividing with the masses associated to each row and column, we indeed find agreement.
 
\subsection{Box graphs}\label{subsec:Boxes}
In our final example with full kinematic dependence, we present the alphabet for the box graph illustrated below. 
\begin{equation*}
    \centering
     \begin{tikzpicture}[baseline=-\the\dimexpr\fontdimen22\textfont2\relax]
    \begin{feynman}
    \vertex (a);
    \vertex [right = of a] (b);
    \vertex [below = of b] (c);
    \vertex [below = of a] (d);
    \vertex [above left = of a] (x){\(p_1\)};
    \vertex [above right = of b] (y){\(p_4\)};
    \vertex [below right = of c] (z){\(p_3\)};
    \vertex [below left = of d] (w){\(p_2\)};
    \diagram*{
        (x)--[fermion](a), (y)--[fermion](b), (z)--[fermion](c), (w)--[fermion](d), (a) -- [edge label=\(x_1\)] (b) -- [edge label=\(x_4\)] (c) --[edge label=\(x_3\)] (d) -- [edge label=\(x_2\)](a),
    };
    \end{feynman}
    \end{tikzpicture}
\end{equation*}
The two Symanzik polynomials of the box graph with generic massive kinematics are
\begin{align*}
    \U=&x_1+x_2+x_3+x_4,\\
    \F=&(m_1^2+m_2^2-p_1^2)x_1x_2+(m_1^2+m_3^2-s)x_1x_3+(m_1^2+m_4^3-p_4^2)x_1x_4\nonumber\\
       &+(m_2^2+m_3^2-p_2^2)x_2x_3+(m_2^2+m_4^2-t)x_2x_4+(m_3^2+m_4^2-p_3^2)x_3x_4\nonumber\\
       &+m_1^2x_1^2+m_2^2x_2^2+m_3^2x_3^2+m_4^2x_4^2
\end{align*}
and the modified Cayley matrix is given by
\begin{equation}
    \cY=\begin{pmatrix}
        0& 1& 1& 1& 1\\
        1& 2 m_1^2& m_1^2 + m_2^2 - p_1^2& m_1^2 + m_3^2 - s& m_1^2 + m_4^2 - p_4^2\\
  1& m_1^2 + m_2^2 - p_1^2& 2 m_2^2& m_2^2 + m_3^2 - p_2^2& m_2^2 + m_4^2 - t\\
  1& m_1^2 + m_3^2 - s& 
  m_2^2 + m_3^2 - p_2^2& 2 m_3^2& m_3^2 + m_4^2 - p_3^2\\ 
  1&  m_1^2 + m_4^2 - p_4^2& m_2^2 + m_4^2 - t& m_3^2 + m_4^2 - p_3^2& 2 m_4^2.
    \end{pmatrix}
\end{equation}
The symbol alphabet with generic massive kinematics contains 57 letters for $D_0$ even and 61 for $D_0$ odd. These letters are to large to show here but are provided in the auxiliary \ma\ file.

The $\widetilde{M}$ matrices for the case of even and odd $D_0$ are given in \eqref{eq:cde-even} and \eqref{eq:cde-odd}, respectively. Compared to the bubble and triangle examples, the box is the first case where we see that a graph only depends on its subgraphs with at most two legs removed. This is evident in the vanishing of the box-tadpole elements of the differential equation matrices $\widetilde{M}_{15,1},\ldots,\widetilde{M}_{15,4}$.
\newpage
\begin{landscape}
\scriptsize    \begin{equation}\label{eq:cde-even}
        \hspace*{-2cm}\left(
\begin{array}{ccccccccccccccc}
 w_1 & 0 & 0 & 0 & 0 & 0 & 0 & 0 & 0 & 0 & 0 & 0 & 0 & 0 & 0 \\
 0 & w_2 & 0 & 0 & 0 & 0 & 0 & 0 & 0 & 0 & 0 & 0 & 0 & 0 & 0 \\
 0 & 0 & w_3 & 0 & 0 & 0 & 0 & 0 & 0 & 0 & 0 & 0 & 0 & 0 & 0 \\
 0 & 0 & 0 & w_4 & 0 & 0 & 0 & 0 & 0 & 0 & 0 & 0 & 0 & 0 & 0 \\
 -w_{\text{1(2)}} & w_{\text{(1)2}} & 0 & 0 & w_{12} & 0 & 0 & 0 & 0 & 0 & 0 & 0 & 0 & 0 & 0 \\
 -w_{\text{1(3)}} & 0 & w_{\text{(1)3}} & 0 & 0 & w_{13} & 0 & 0 & 0 & 0 & 0 & 0 & 0 & 0 & 0 \\
 -w_{\text{1(4)}} & 0 & 0 & w_{\text{(1)4}} & 0 & 0 & w_{14} & 0 & 0 & 0 & 0 & 0 & 0 & 0 & 0 \\
 0 & -w_{\text{2(3)}} & w_{\text{(2)3}} & 0 & 0 & 0 & 0 & w_{23} & 0 & 0 & 0 & 0 & 0 & 0 & 0 \\
 0 & -w_{\text{2(4)}} & 0 & w_{\text{(2)4}} & 0 & 0 & 0 & 0 & w_{24} & 0 & 0 & 0 & 0 & 0 & 0 \\
 0 & 0 & -w_{\text{3(4)}} & w_{\text{(3)4}} & 0 & 0 & 0 & 0 & 0 & w_{34} & 0 & 0 & 0 & 0 & 0 \\
 -w_{\text{1(2)(3)}} & w_{\text{(1)(2)3}}+w_{\text{1(2)(3)}} & -w_{\text{(1)(2)3}} & 0 & -w_{\text{12(3)}} & w_{\text{1(2)3}} & 0 & -w_{\text{(1)23}} & 0 & 0 & w_{123} & 0 & 0 & 0 & 0 \\
 -w_{\text{1(2)(4)}} & w_{\text{(1)(2)4}}+w_{\text{1(2)(4)}} & 0 & -w_{\text{(1)(2)4}} & -w_{\text{12(4)}} & 0 & w_{\text{1(2)4}} & 0 & -w_{\text{(1)24}} & 0 & 0 & w_{124} & 0 & 0 & 0 \\
 -w_{\text{1(3)(4)}} & 0 & w_{\text{(1)(3)4}}+w_{\text{1(3)(4)}} & -w_{\text{(1)(3)4}} & 0 & -w_{\text{13(4)}} & w_{\text{1(3)4}} & 0 & 0 & -w_{\text{(1)34}} & 0 & 0 & w_{134} & 0 & 0 \\
 0 & -w_{\text{2(3)(4)}} & w_{\text{(2)(3)4}}+w_{\text{2(3)(4)}} & -w_{\text{(2)(3)4}} & 0 & 0 & 0 & -w_{\text{23(4)}} & w_{\text{2(3)4}} & -w_{\text{(2)34}} & 0 & 0 & 0 & w_{234} & 0 \\
 0 & 0 & 0 & 0 & -w_{\text{12(3)(4)}} & w_{\text{1(2)3(4)}} & -w_{\text{1(2)(3)4}} & -w_{\text{(1)23(4)}} & w_{\text{(1)2(3)4}} & -w_{\text{(1)(2)34}} & w_{\text{123(4)}} & -w_{\text{12(3)4}} & w_{\text{1(2)34}} & -w_{\text{(1)234}} & w_{1234} \\
\end{array}
\right)
    \end{equation}\normalsize
 \scriptsize   \begin{equation}\label{eq:cde-odd}
        \hspace*{-2.5cm}\left(
\begin{array}{ccccccccccccccc}
 w_1 & 0 & 0 & 0 & 0 & 0 & 0 & 0 & 0 & 0 & 0 & 0 & 0 & 0 & 0 \\
 0 & w_2 & 0 & 0 & 0 & 0 & 0 & 0 & 0 & 0 & 0 & 0 & 0 & 0 & 0 \\
 0 & 0 & w_3 & 0 & 0 & 0 & 0 & 0 & 0 & 0 & 0 & 0 & 0 & 0 & 0 \\
 0 & 0 & 0 & w_4 & 0 & 0 & 0 & 0 & 0 & 0 & 0 & 0 & 0 & 0 & 0 \\
 -w_{\text{1(2)}} & w_{\text{(1)2}} & 0 & 0 & w_{12} & 0 & 0 & 0 & 0 & 0 & 0 & 0 & 0 & 0 & 0 \\
 -w_{\text{1(3)}} & 0 & w_{\text{(1)3}} & 0 & 0 & w_{13} & 0 & 0 & 0 & 0 & 0 & 0 & 0 & 0 & 0 \\
 -w_{\text{1(4)}} & 0 & 0 & w_{\text{(1)4}} & 0 & 0 & w_{14} & 0 & 0 & 0 & 0 & 0 & 0 & 0 & 0 \\
 0 & -w_{\text{2(3)}} & w_{\text{(2)3}} & 0 & 0 & 0 & 0 & w_{23} & 0 & 0 & 0 & 0 & 0 & 0 & 0 \\
 0 & -w_{\text{2(4)}} & 0 & w_{\text{(2)4}} & 0 & 0 & 0 & 0 & w_{24} & 0 & 0 & 0 & 0 & 0 & 0 \\
 0 & 0 & -w_{\text{3(4)}} & w_{\text{(3)4}} & 0 & 0 & 0 & 0 & 0 & w_{34} & 0 & 0 & 0 & 0 & 0 \\
 w_{\text{1(2)(3)}} & -w_{\text{(1)2(3)}} & w_{\text{(1)(2)3}} & 0 & w_{\text{12(3)}} & -w_{\text{1(2)3}} & 0 & w_{\text{(1)23}} & 0 & 0 & w_{123} & 0 & 0 & 0 & 0 \\
 w_{\text{1(2)(4)}} & -w_{\text{(1)2(4)}} & 0 & w_{\text{(1)(2)4}} & w_{\text{12(4)}} & 0 & -w_{\text{1(2)4}} & 0 & w_{\text{(1)24}} & 0 & 0 & w_{124} & 0 & 0 & 0 \\
 w_{\text{1(3)(4)}} & 0 & -w_{\text{(1)3(4)}} & w_{\text{(1)(3)4}} & 0 & w_{\text{13(4)}} & -w_{\text{1(3)4}} & 0 & 0 & w_{\text{(1)34}} & 0 & 0 & w_{134} & 0 & 0 \\
 0 & w_{\text{2(3)(4)}} & -w_{\text{(2)3(4)}} & w_{\text{(2)(3)4}} & 0 & 0 & 0 & w_{\text{23(4)}} & -w_{\text{2(3)4}} & w_{\text{(2)34}} & 0 & 0 & 0 & w_{234} & 0 \\
 0 & 0 & 0 & 0 & -w_{\text{12(3)(4)}} & w_{\text{1(2)3(4)}} & -w_{\text{1(2)(3)4}} & -w_{\text{(1)23(4)}} & w_{\text{(1)2(3)4}} & -w_{\text{(1)(2)34}} & w_{\text{123(4)}} & -w_{\text{12(3)4}} & w_{\text{1(2)34}} & -w_{\text{(1)234}} & w_{1234} \\
\end{array}
\right)
    \end{equation}\normalsize
\end{landscape}
\paragraph{Massless off-shell box.}
In the limit $m_1,\ldots,m_4\to 0$ the symbol alphabet simplifies from 57 letters to 25 letters for even $D_0$. We have obtained these letters both from our limiting procedure and from the canonical differential equation directly as an independent check. Our limiting procedure only generates a spanning set of letters, using the provided \ma\ code one obtains 30 letters. By the discussion in Remark \ref{remark: 2 kallen triangle} about letters containing the K\"all\'en function, these 30 letters can be reduced to 25 independent letters. The reduced principal $A$-determinant in this case contains 12 factors:
\begin{align*}
    \Delta_A&=p_1^4 p_3^2 - p_1^2 p_2^2 p_3^2 + p_1^2 p_3^4 - p_1^2 p_2^2 p_4^2 + p_2^4 p_4^2 - 
 p_1^2 p_3^2 p_4^2 - p_2^2 p_3^2 p_4^2 + p_2^2 p_4^4 + p_1^2 p_2^2 s - 
 p_1^2 p_3^2 s \\
 &- p_2^2 p_4^2 s + p_3^2 p_4^2 s - p_1^2 p_3^2 t + p_2^2 p_3^2 t +
  p_1^2 p_4^2 t - p_2^2 p_4^2 t - p_1^2 s t - p_2^2 s t - p_3^2 s t - 
 p_4^2 s t + s^2 t + s t^2,\\
 \Delta_{A\cap\newt(\F)}&=p_1^4 p_3^4 - 2 p_1^2 p_2^2 p_3^2 p_4^2 + p_2^4 p_4^4 - 2 p_1^2 p_3^2 s t - 
 2 p_2^2 p_4^2 s t + s^2 t^2,\\
 \Delta_{A\cap x_1=0}&=p_2^4 - 2 p_2^2 p_3^2 + p_3^4 - 2 p_2^2 t - 2 p_3^2 t + t^2,\\
 \Delta_{A\cap x_2=0}&=p_3^4 - 2 p_3^2 p_4^2 + p_4^4 - 2 p_3^2 s - 2 p_4^2 s + s^2,\\
 \Delta_{A\cap x_3=0}&=p_1^4 - 2 p_1^2 p_4^2 + p_4^4 - 2 p_1^2 t - 2 p_4^2 t + t^2,\\
 \Delta_{A\cap x_4=0}&=p_1^4 - 2 p_1^2 p_2^2 + p_2^4 - 2 p_1^2 s - 2 p_2^2 s + s^2,\\
 \prod_{\mathrm{vertices}}\Delta_v&=stp_1^2p_2^2p_3^2p_4^2.
\end{align*}
\paragraph{Three off-shell legs.} If we in addition to taking the masses to zero also impose $p_4^2\to 0$ the symbol alphabet reduces to 18 letters for $D_0$ even. The reduced principal $A$-determinant contains eleven factors in this case
\begin{align*}
    \Delta_A&= p_1^2sp_2^2-p_1^2st+s^2t-sp_2^2t+st^2+p_1^4p_3^2-p_1^2sp_3^2-p_1^2p_2^2p_3^2-p_1^2tp_3^2-stp_3^2+p_2^2tp_3^2+p_1^2p_3^4,\\
    \Delta_{A\cap x_1=0}&=p_2^4+p_3^4+t^2-2p_2^2p_3^2-2p_2^2t-2p_3^2t,\\
    \Delta_{A\cap x_4=0}&=p_1^4+p_2^4+s^2-2p_1^2p_2^2-2p_1^2s-2p_2^2s,\\
    \Delta_{A\cap x_2=0}&=p_3^2-s,\\
    \Delta_{A\cap x_3=0}&=p_1^2-t,\\
    \Delta_{\Gamma}&=p_1^2p_3^2-st,\\
    \prod_{\mathrm{vertices}}\Delta_v&=stp_1^2p_2^2p_3^2.
\end{align*}
We have now reached a level of reduced kinematics such that the identification between discriminants, determinants and subgraphs discussed in Section \ref{subsec:OneLoopPricADet} breaks down. The face $\Gamma$ of $\newt(\G)$ is given by
\begin{equation}
    \Gamma=\conv\begin{pmatrix}
        1&0&1&0\\
        0&1&1&0\\
        1&0&0&1\\
        0&1&0&1
    \end{pmatrix}\,,    
\end{equation}
which should correspond to the box topology since all rows contain non-zero elements. This face actually corresponds to the box with only two off-shell external legs positioned at opposite corners, which also appeared as an example in the introduction. This is clearly not a subgraph of the box with three off-shell legs.

The mathematical origin of this subtlety is described in the following example.
\begin{example}\label{ex:discriminant_not_determinant}
Let $\F=c_1x_1x_3+c_2x_2x_4+c_3x_1x_2+c_4x_2x_3+c_5x_3x_4$ where $(c_1,c_2,c_3,c_4,c_5)=(-s,-t,-p_1^2,-p_2^2,-p_3^2)$, this is the $\F$-polynomial of the box with all internal masses zero and three external massive legs. The coefficient matrix of the Jacobian, $\jac(\F)$, is just the Cayley matrix 
\begin{equation}\label{ex:detY}
    Y=\begin{pmatrix}
    0&c_3&c_1&0\\
    c_3&0&c_4&c_2\\
    c_1&c_4&0&c_5\\
    0&c_2&c_5&0
    \end{pmatrix},\qquad \det(Y)=(c_1c_2-c_3c_5)^2\,,
\end{equation}
whose determinant is clearly reducible. This means that we expect the distinction between punctured affine space and algebraic torus to matter. To correctly calculate the disciminant we would work in the algebraic torus and compute\small\begin{align*}
    &\overline{\left\lbrace c \in \CC^5 \;|\; \frac{\partial \F}{\partial x_1}= \cdots =\frac{\partial \F}{\partial x_4}=0 {\; \rm has \; a\; solution\; for\; }x\in (\CC^*)^4 \right\rbrace} = \{ c\in \CC^5 \;|\; c_4=c_1c_2-c_3c_5=0\}\,,
\end{align*}\normalsize
which has codimension two, meaning that $\Delta_{A\cap\newt\F}(\F)=1$ per definition.
On the other hand, from linear algebra, we know that the determinant in \eqref{ex:detY} corresponds to working in the punctured affine space, where we obtain  \small\begin{align*}
    &\overline{\left\lbrace c \in \CC^5 \;|\; \frac{\partial \F}{\partial x_1}= \cdots =\frac{\partial \F}{\partial x_4}=0 {\; \rm has \; a\; solution\; for\; }x\in \CC^4 \setminus\{\mathbf{0}\}\right\rbrace} = \{ c\in \CC^5 \;|\; (c_1c_2-c_3c_5)^2=0\};
\end{align*}\normalsize note that the defining polynomial $(c_1c_2-c_3c_5)^2$ is  {\bf not} the desired discriminant.  
Hence working over the torus rather than the punctured affine space is essential in this example (precisely because $\det(Y)$ is reducible).
\end{example}
Despite these subtleties in the kinematic limits of individual discriminants, in Section \ref{sec: limiting procedure} we have provided strong evidence that the entire principal $A$-determinant does remain well-defined in these limits. In practice, therefore, they do not matter.
\subsection{Pentagon graphs and beyond}
The generic $n=5$ or pentagon graph shown below depends on 15 dimensionfull variables, and as we have mentioned, its principal $A$-determinant consists of 57 different Cayley and Gram determinants, one of each associated to the graph itself (leading Landau singularities of type I and II), and the rest to its subgraphs. For simplicity we will restrict the discussion to the case of even dimension $D_0$ of loop momenta, but the entire analysis may of course be repeated also for the odd case. By the process of refactorizing these in pairs as described in subsection \ref{subsec:SymbolLettersFormula}, we obtain a total of 166 letters, out of which 16 are genuinely new, and the rest may be obtained by relabeling the letters presented in the previous subsections, as they are associated to subgraphs with $n<5$. As functions of the masses and Mandelstam invariants, these letters in total contain 26 square roots. Instead of presenting lengthy formulas, we refer the reader to the ancillary file for this new result.
 \begin{equation*}
     \centering
     \begin{tikzpicture}[baseline=-\the\dimexpr\fontdimen22\textfont2\relax]
     \begin{feynman}
     \vertex (a);
     \vertex [below left = 1cm  of a] (b);
     \vertex [below left = 0.8cm of b] (c);
     \vertex [below = 1.9cm of b] (d);
     \vertex [below right = 1cm of a] (g);
     \vertex [below right = 0.8cm of g] (f);
     \vertex [below = 1.9cm of g] (e);
     \vertex [above = of a](p1){\(p_1\)};
     \vertex [left = of c](p2){\(p_2\)};
     \vertex [below left = of d](p3){\(p_3\)};
     \vertex [below right = of e](p4){\(p_4\)};
     \vertex [right = of f](p5){\(p_5\)};
     \diagram*{
         (a)-- [edge label'=\(x_2\)](b)--(c)-- [edge label'=\(x_3\)](d)-- [edge label'=\(x_4\)](e)-- [edge label'=\(x_5\)](f)--(g)-- [edge label'=\(x_1\)](a),
         (p1)--[fermion](a),(p2)--[fermion](c),(p3)--[fermion](d),(p4)--[fermion](e),(p5)--[fermion](f),
     };
     \end{feynman}
  \end{tikzpicture}
    \label{fig:poly feynman}
 \end{equation*}

Based on the evidence presented in subsection~\ref{sec: limiting procedure}, we expect that any other pentagon, where some of its masses or momenta have been identified or set to zero, may be obtained from the generic one by the limiting procedure we have described. In what follows, we will apply it to obtain further new results, as well as to check it against previously computed alphabets.

In particular, we will consider limits where all internal masses, as well as some of the external momenta have been set to zero. Let us start with the case of three offshell external legs, which thus now depends on 8 dimensionful variables. We distinguish two cases, based on whether the all three offshell legs are adjacent (`hard') or not (`easy'), and after taking the corresponding limits of the generic pentagon and eliminating multiplicative relations between letters, we arrive at 57 letters containing 7 square roots, and 54 letters containing 5 square roots, respectively. As far as we are aware of, these alphabets have not appeared in the literature before.

Next, we may continue the limiting process to similarly obtain pentagons where two external legs are offshell. Choosing for example the `easy' configuration where the offshell legs are not adjacent, for this 7-variable alphabet we obtain 40 letters depending on 3 square roots, and we have checked that it is indeed equivalent to the one previously computed in~\cite{He:2022tph}. Moving on to send another momentum-squared to zero, we land on the 6-variable alphabet of the massless pentagon with one offshell leg, which consists of 30 letters containing 2 square roots. We have also compared our alphabet to the result for the latter reported in~\cite{Abreu:2020jxa}, again finding perfect agreement.

Finally, let us also briefly discuss the $n=6$ or hexagon case.
As we have mentioned at the end of subsection~\ref{subsec:SymbolLettersFormula}, this integral requires us to be in $n>d=5$ dimensions of external kinematics for all distinct Mandelstam invariants to be algebraically independent, and hence also for the symbol letters to be multiplicatively independent. We will in particular be restricting our attention to the limit where all masses and momenta squared are set to zero, and to the letters exclusively associated to the hexagon graph, and not to its subgraphs. In this limit, the 15 square-root letters of the second type, eq.~\eqref{eq: letter two-contracted} reduce to 3 multiplicatively ones, which now only depend on one square root, whereas all 6 square-root letters of the first type, eq.~\eqref{eq: letter two-contracted}, remain multiplicatively independent. Together with the rational letter~\eqref{eq: rational letter}, we thus in total have 10 genuinely hexagon letters, and also here we establish their equivalence to their earlier determination in~\cite{Henn:2022ydo}. These checks further solidify the evidence provided in subsection~\ref{sec: limiting procedure} on the well-defined nature of the limiting process for principal $A$-determinants and symbol alphabets, and also support the correctness of the new limiting results we have obtained.

\section{Normality, Cohen-Macaulay, and generalized permutohedra}\label{sec: normality}

{In this section we study several mathematical properties of Feynman integrals. In subsection~\ref{subsec:normalityOneLoop}, we rigorously prove that the {\em Cohen-Macaulay} property holds for  a larger collection of one-loop Feynman integrals\footnote{More precisely, this is a property of the toric ideal associated to the Feynman integral, as defined in \eqref{eq:toricIdeal}.} than was previously known; for one loop integrals this generalizes previous results of \cite{walther2022feynman,TH22}. 
As discussed in the introduction, and detailed further below, the physical meaning of the Cohen-Macaulay property is that the number of master integrals for a given topology is independent of the spacetime dimension and of the generalized propagator powers. 

In subsection~\ref{sec:GP}, we prove that the Newton polytope of the second Symanzik polynomial, as defined in Section~\ref{section:LeePomRep}, is a {\em generalized  permutohedron} (GP) for a new class of Feynman integrals of arbitrary loop order. { The practical utility of this property, is that it facilitates new methods for fast Monte Carlo evaluation of Feynman integrals \cite{Borinsky:2020rqs,Borinsky:2023jdv}.}

The Cohen-Macaulay and GP property are also related to numerous other important properties in the context of toric geometry and the study of the polytopes associated to toric varieties. To better orient the reader, we summarize a selection of these properties and their relations in Figure~\ref{fig:NormCMGPEtc}. Note that it is in particular the stronger property of {\em normality}, which implies Cohen-Macaulay, that we will prove in subsection~\ref{subsec:normalityOneLoop}. It would be very interesting to understand any additional physical implications these properties have for Feynman integrals, for example with respect to their ultraviolet or infrared divergences, however we will not attempt this here.

In the rest of this preamble, for the sake of completeness, we briefly recall the definitions of the properties summarized in Figure~\ref{fig:NormCMGPEtc}, and further elaborate on the implications of the Cohen-Macaulay property for GKZ-hypergeometric systems and the associated Feynman integrals. Since, in this section, we are aiming for mathematical rigor, its content will inescapably be technical in nature. However, at the beginning of each subsection we will point the reader to the main results, and explain their physical significance.}

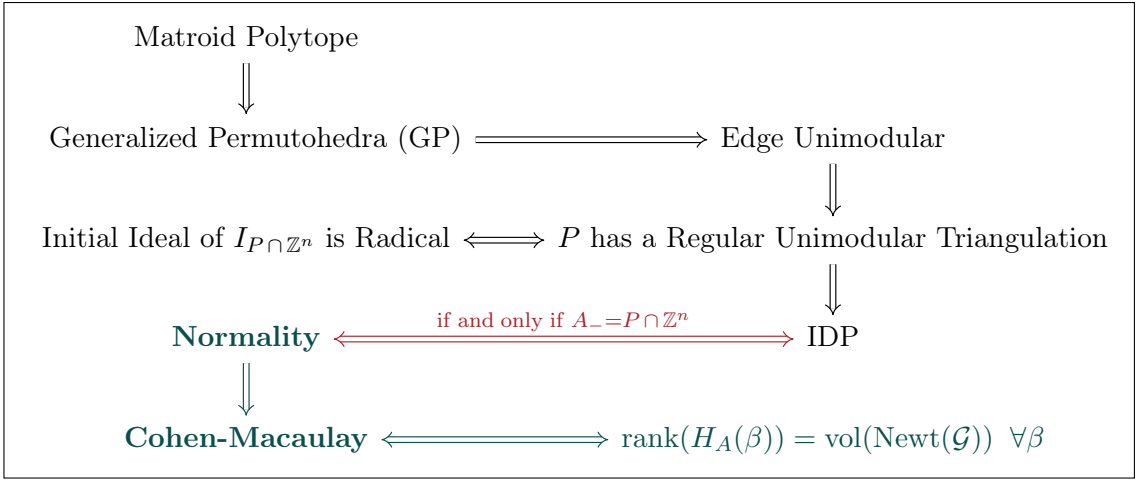
\begin{figure}[h]
\begin{tikzpicture}[mybox/.style={draw, inner sep=5pt}]
\node[mybox] (box){%
\begin{tikzcd}[arrows=Rightarrow]
 \text{Matroid Polytope}\arrow[d]  \\
 \text{~~Generalized Permutohedra (GP)} \arrow[r]  & \text{Edge Unimodular}\arrow[d] \\
  \text{Initial Ideal of $I_{P\,\cap \,\ZZ^n}$ is Radical} \arrow[r,Leftrightarrow] & P \text{~has a Regular Unimodular Triangulation}\arrow[d] \\
\color{darkTeal} \text{\bf Normality}  \arrow[r,Leftrightarrow,"\mathrm{if \; and \; only\; if\;}   A_-=P\,\cap\,\ZZ^n",brickRed]\arrow[d,darkTeal] &
  \text{IDP}   \\ \text{\bf \color{darkTeal}Cohen-Macaulay} \arrow[r,Leftrightarrow,darkTeal] 
 & \color{darkTeal}\rank(H_A(\beta))=\vol(\newt(\G))\;\;\forall\beta 
\end{tikzcd}};
\end{tikzpicture}
\caption{The diagram above considers the relationships between various properties of the hypergeometric system $H_A(\beta)$ arising from the Lee-Pomeransky polynomial $\G$ of a Feynman graph $G$, with associated semi-group $\NN A$ and toric ideal $I_A$ and the associated polytope $P=\conv(A_-)$. The properties in {\bf\color{darkTeal} bold teal colored text} are properties of the toric ideal $I_A$, the semi-group ring $\CC[\NN A]$ and the hypergeometic system $H_A(\beta)$; the black plain text properties are properties of the associated polytope. Definitions of a matroid polytope, the Integer Decomposition Property (IDP), and edge unimodularity are given in subsection \ref{sec:GP}. Note that the equivalence (in {\color{brickRed}red}) between Normality and IDP holds if and only if $ A_-=P\,\cap\,\ZZ^n$.  }\label{fig:NormCMGPEtc}
\end{figure}

Generalized hypergeometric systems { $H_A(\beta)$, as defined by the matrix in \eqref{eq:AMatrix} and vector in \eqref{eq:BetaVector}}, have many nice combinatorial and analytic properties. The dimension of the solutions space of $H_A(\beta)$, also referred to as the rank of $H_A(\beta)$, is what physicists would call the number of {master integrals}. 
More precisely, for most one-loop integrals the number of master integrals and the rank of $H_A(\beta)$ are exactly the same, for higher loop orders or for special kinematics the rank of the hypergeometric system $H_A(\beta)$ 
 only provides an upper bound on the number of master integrals. For generic $\beta$ it is a classical result by GKZ that the dimension of the solution space is given by the volume of the Newton polytope of $\G$, i.e. $\mathrm{rank}(H_A(\beta))=\vol(\newt(\G))=\vol(\conv(A_-))$. However, for actual physical calculations the vector $\beta$ is non-generic, for example the choice $\beta=(-D/2,-1,\ldots,-1)^T$ {of space-time dimension and generalized propagator powers}  is often used. In order for the equality between the rank and volume of $\newt(\G)$ to hold for every $\beta$ we need $I_A$ to be {\em Cohen-Macaulay}. Note that here, and in all other instances, by {\em volume} we mean {\em normalized volume}, that is we use the convention that  the standard simplex in $\RR^n$ has volume 1; that is our volume is $n!$ multiplied by the usual Euclidean volume in $\RR^n$. 

For our purposes we will define the Cohen-Macaulay property of toric ideals $I_A$ in terms of hypergeometric systems; by doing so we are employing a deep result of \cite{matusevich2005homological}. For a definition of the Cohen-Macaulay property for arbitrary polynomial ideals we refer the reader to the book \cite{bruns1998cohen}. We say the ideal $I_A$ in $\CC[\partial]$ and the semigroup ring $\CC[\NN A]\cong \CC[\partial]/I_A$ are {\em Cohen-Macaulay} if the associated hypergeometric system $H_A(\beta)$ is such that   $\rank(H_A(\beta))=\vol(\conv(A_-))$ for all $\beta$.
Hence in particular we have the following equivalence
\begin{equation}
    \rank(H_A(\beta))=\vol(\newt(\G))\ \forall\beta\Longleftrightarrow I_A\ \mathrm{is\ Cohen-Macaulay}.\label{eq:CMConstRankAllBeta}
\end{equation}
In subsection \ref{subsec:normalityOneLoop} we prove that for one-loop graphs the semi-group $\NN A$ is {\em normal} for almost every kinematic setup. We say the semi-group $\NN A$, and the associated semi-group ring $\CC[\NN A]$, are {\em normal} if $$\NN A=\ZZ A \cap \RR_{\geq 0} A.$$ Showing that $\NN A$ is normal will in turn imply that the associated toric ideal $I_A$ is Cohen-Macaulay by a result of \cite{hochster1972rings}; hence in particular the relation \eqref{eq:CMConstRankAllBeta} holds.

In the one loop case, with all internal and external masses non-zero and different we get the simple formula for the number of master integrals:
\begin{equation}
    \#(\mathrm{master\ integrals})=\vol(\newt(\G))=2^{{n}}-1.\label{eq:numMasterIntegrals1Loop}
\end{equation}
This formula has been quoted before, see e.g. \cite{Mizera:2021icv}.

The properties above, namely normality and the Cohen-Macaulay property, are properties of the semi-group $\NN A$ and the ring $\CC[\NN A]$, on the other hand the generalized  permutohedra property is a property of a polytope $P$. A polytope $P\subset\RR^n$ is a {\em generalized permutohedra (GP)} if and only if every edge is parallel to $\mathbf{e}_i-\mathbf{e}_j$ for some $i,j\in\{1,\ldots,n\}$, where the ${\bf e}_\ell$ denotes the standard basis vectors in $\RR^n$. When all lattice points in $P=\conv(A_-)$ are contained in $A_-$ then properties of the polytope, such as the GP property, have relations with those of the semi-group ring $\CC[\NN A]$; in particular in this case GP implies normality. Assuming $A_-=P\cap \ZZ^n$, another property which implies that $\CC[\NN A]$ is normal (and hence Cohen-Macaulay) is if there is some monomial order such that the {\em initial ideal} of $I_{A_-}$ is radical; more precisely  \cite[Corollary 8.9]{sturmfels1996grobner} tells us that this is equivalent to $P$ having a {\em regular unimodular triangulation}, which in turn implies normality. We now briefly recall some of these definitions. A regular triangulation of a polytope $P\subset \RR^n$ is called unimodular if it consists only of simplices with volume $1$  (recall our convention that a standard simplex has volume 1);  
for a definition of a regular triangulation see \cite[\S5.1]{ziegler2012lectures}. For a polynomial ideal $I$ with {\em Gr\"obner basis} $\{g_1, \dots, g_r \}$ in a polynomial ring with some monomial order $<$ the {\em initial ideal} is the ideal $in_<(I):=\langle in_<(g_1),\dots, in_<(g_r) \rangle$ where $in_<(f)$ is the monomial of a polynomial $f$ which is largest with respect to the ordering $<$, \cite[\S1.2]{michalek2021invitation}. We say an ideal $I$ in a polynomial ring $R$ is {\em radical} if $I=\sqrt{I}$ where  $\sqrt{I}:=\{ f\in R\;|\; f^\ell \in I \;{\rm for \; some\;} \ell \in \NN\}$. 
{We remind the reader that the key properties defined in the last several paragraphs, and their relations, may be found in Figure \ref{fig:NormCMGPEtc}.}

\subsection{Normality and Cohen-Macaulay for one-loop Feynman graphs}
\label{subsec:normalityOneLoop}


{ Throughout this subsection we assume that the Feynman graph $G$ under consideration is a one-loop Feynman graph, and employ the notations introduced in Section \ref{sec:FeynmanGGKZ} with Symanzik polynomials $\U,\F$ giving $\G=\U+\F$ and  $A= \{1\}\times A_-= \{1\}\times{\rm Supp}(\G)$ as above, see e.g.~\eqref{eq:AMatrix}. Note that taking $\G_h=\U x_0 +\F$ we have that the matrix ${\rm Supp}(\G_h)$ is obtained from $A$ by elementary row operations; hence without loss of generality we may (and will) assume ${\rm Supp}(\G_h)=A$.

The main result of this subsection is Theorem \ref{thm:Normal_1loop}, which proves that the semi-group $\NN A$ is normal if and only if we have that \begin{equation}
    p(F_{ij})^2-m_i^2-m_j^2= (p_i+\cdots + p_{j-1})^2-m_i^2-m_j^2\neq 0\label{eq:CMCondExplain}
\end{equation} for all edges $i,j$ where both $m_i\neq0$ and $m_j\neq0$; that is, the only cases where normality does not hold are when either all three terms on the right of the above equation are nonzero and the entire right-hand side vanishes, or when $p(F_{ij})^2=0$ and the two masses are equal to each other\footnote{We do not currently have a physical justification for why this is the case, but it would be interesting to address this in the future.}. Hence this condition is sufficient for the Cohen-Macaulay property to hold, see Figure \ref{fig:NormCMGPEtc}. Note that if only one of the masses are non-zero, cancellation is allowed and the Cohen-Macaulay property (and even more strongly, normality) will hold. We also wish to highlight that all individual terms in \eqref{eq:CMCondExplain} are allowed to be zero; this includes the case where all internal propagators are massless for any external kinematics, as illustrated in Corollary \ref{corollary_massless}.}

For Feynman graphs $G$ of any loop order, but under the assumption of generic (i.e.~non-zero) momenta, so that { the momentum flow $p(F)$ between the two connected components of any two-forest $F$ of $G$ is nonzero, the Cohen-Macaulay property has been proven} in the fully massive case  by \cite{TH22} and for {generic (but some times zero)} masses by \cite{walther2022feynman}.

Our proof in the one loop case is founded on a result of \cite{ohsugi1998normal}; we begin with the relevant definitions. To a  graph we may associate a matrix which catalogs which vertices in a graph are joined by an edge; note that in the discussion which follows this will be a {\em different } graph than the Feynman graph $G$. 

\begin{definition}[Edge Matrix] Let $H=(E,V)$ be a finite connected graph with vertex set $V=\{0,\ldots,d\}$. If $e=\{i,j\}$ is an edge of $H$ joining vertices $i$ and $j$ we define $\rho(e)\in\RR^{d+1}$ by $\rho(e)=\mathbf{e}_i+\mathbf{e}_j$ where $\mathbf{e}_i$ is the $i$th unit vector in $\RR^{d+1}$. Let $M$ be the matrix whose columns correspond to the finite set $\{\rho(e):\ e\in E\}$, then $M$ is called the \emph{edge matrix} of $H$ and the convex hull of $M$ is called the \emph{edge polytope}.
\label{def:Edge_Matrix}
\end{definition}

A result of \cite{ohsugi1998normal} will tells us that if we can associate a certain graph $H$ to $A$ then the semi-group $\NN A$ is normal, hence establishing the desired result. This result is based on verifying the following condition for a graph $H$.

\begin{definition}[Odd cycle condition]\label{def:oddCycle}
A cycle in a graph is called \emph{minimal} if it has no chord and it is said to be \emph{odd} if it is a cycle with odd length. A graph $H$ satisfies the \emph{odd cycle condition} if for two arbitrary minimal odd cycles $C$ and $C'$, either $C$ and $C'$ have a common vertex or there is an edge connecting a vertex of $C$ with a vertex of $C'$.

\end{definition}
We may now state a  result of \cite{ohsugi1998normal} which we will apply in Theorem \ref{thm:Normal_1loop} below. 
\begin{theorem}[Corollary 2.3, \cite{ohsugi1998normal}]\label{thm: odd-cycle}
 Let $B$ be the edge matrix of a graph $H$. The semi-group $\NN B$ is normal if and only if the graph $H$ satisfies the odd cycle condition.   
\end{theorem}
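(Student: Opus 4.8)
The plan is to derive both directions from the combinatorial structure of the ``fractional matching polytope'' of $H$. Write $M$ for the unsigned vertex--edge incidence matrix of $H$ (so the columns of $B$ are the vectors $\rho(e)$), and for $a\in\ZZ_{\geq 0}^{d+1}$ put $Q_a=\{w\in\RR_{\geq 0}^{E}\mid Mw=a\}$. I will use two elementary facts. First, every $\rho(e)$ has coordinate sum $2$, so every element of $\ZZ B$ has even coordinate sum, and $a\in\RR_{\geq 0}B$ is equivalent to $Q_a\neq\emptyset$. Second, when $H$ is connected and non-bipartite one has $\ZZ B=\{x\in\ZZ^{d+1}\mid\sum_i x_i\ \text{even}\}$: the alternating sum of the $\rho(e)$ around an odd cycle equals some $2\mathbf{e}_i$, which can be transported to every vertex along edges, and together with the $\rho(e)$ these generate the even-sum sublattice.

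For ``$\NN B$ normal $\Rightarrow$ odd cycle condition'' I would argue by contraposition. Suppose $H$ has minimal odd cycles $C_1,C_2$ with no common vertex and no edge joining them, and set $a=\mathbf{1}_{V(C_1)}+\mathbf{1}_{V(C_2)}$. Then $a\in\ZZ B$ (its coordinate sum $|V(C_1)|+|V(C_2)|$ is even and $H$ is non-bipartite) and $a\in\RR_{\geq 0}B$ (it equals $\tfrac12\sum_{e\in C_1}\rho(e)+\tfrac12\sum_{e\in C_2}\rho(e)$). But $a\notin\NN B$: in any representation $a=\sum_j\rho(e_j)$ each $e_j$ has both endpoints in $V(C_1)\cup V(C_2)$, and since $C_1,C_2$ are chordless and not joined by an edge, each $e_j$ lies in $E(C_1)\cup E(C_2)$; restricting to the edges inside $V(C_1)$ then forces a perfect matching of the odd cycle $C_1$, which is impossible. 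Hence $\NN B$ is not normal.

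For the converse, assume the odd cycle condition and let $a\in\ZZ B\cap\RR_{\geq 0}B$, with coordinate sum $2n$. The polytope $Q_a$ is nonempty and bounded, so pick a vertex $w$. By the classical vertex description -- any cycle of $\mathrm{supp}(w)$ other than a single odd cycle, and any second cycle in a component, carries a nonzero element of $\ker M$, contradicting extremality -- each component of $\mathrm{supp}(w)$ has at most one cycle, which is odd; $w$ is integral on all other edges, and on such a cycle $C$ one has $\sum_{e\in C}w_e\rho(e)=\mathbf{1}_{V(C)}+(\text{nonnegative integer combination of edges of }C)$, or $w|_C$ is already integral. Collecting terms, $a=b+\sum_{i=1}^{r}\mathbf{1}_{V(C_i)}$ with $b\in\NN B$ and the $C_i$ pairwise vertex-disjoint odd cycles. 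A chord-removal step makes each $C_i$ chordless in $H[V(C_i)]$ while only enlarging $b$ within $\NN B$: a chord $f$ of an odd cycle splits it into an odd cycle $C'$ and an even cycle $C''$ with $f\in E(C'')$, and $\mathbf{1}_{V(C)}=\mathbf{1}_{V(C')}+\sum_{e\in M''\setminus\{f\}}\rho(e)$, where $M''$ is the unique perfect matching of $C''$ through $f$; iterating terminates at a chordless odd cycle. Comparing coordinate sums, $2n\equiv\sum_i|V(C_i)|\equiv r\pmod 2$, so $r$ is even. Pair the (now minimal, pairwise disjoint) cycles; for each pair $(C,C')$ the odd cycle condition provides an edge $f=\{u,v\}$ with $u\in V(C)$, $v\in V(C')$, and then $\{f\}$ together with the perfect matchings of the even paths $C-u$ and $C'-v$ is a matching of $H$ realizing $\mathbf{1}_{V(C)}+\mathbf{1}_{V(C')}$ as an element of $\NN B$. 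Summing over pairs gives $a\in\NN B$. (When $H$ is bipartite both conditions hold trivially: there are no odd cycles, $\mathrm{supp}(w)$ is a forest, and $w$ is automatically integral.)

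The step I expect to be the main obstacle is the vertex-structure lemma for $Q_a$: establishing rigorously that the vertices of the fractional matching polytope are half-integral, with support a disjoint union of odd-unicyclic components and integral off the odd cycles, and carrying out the bookkeeping that isolates the genuinely half-integral cycles contributing the $\mathbf{1}_{V(C_i)}$ terms. This is standard in $b$-matching theory but must be set up carefully; once it is in place, the chord-removal reduction, the parity count forcing $r$ even, and the matching construction along the connecting edges are short, and they are the real combinatorial content of the equivalence.
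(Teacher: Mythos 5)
The paper does not prove this result; it is cited verbatim as Corollary~2.3 of Ohsugi--Hibi \cite{ohsugi1998normal}, so there is no in-text argument to compare against. Judged on its own merits, your proof is correct. The forward direction is the classical exceptional lattice point $a=\mathbf{1}_{V(C_1)}+\mathbf{1}_{V(C_2)}$, whose membership in $\ZZ B\cap\RR_{\ge 0}B$ but not $\NN B$ is established exactly as you do, via the parity obstruction to perfectly matching an odd cycle. The backward direction correctly reduces normality to a decomposition $a=b+\sum_i\mathbf{1}_{V(C_i)}$ coming from the half-integrality and support structure of a vertex of the fractional matching polytope $Q_a$, followed by the chord-removal reduction to minimal odd cycles, the parity count $r\equiv 0\pmod 2$, and the pairing construction using the connecting edge supplied by the odd cycle condition (common vertices being ruled out since your cycles are vertex-disjoint). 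I verified the chord-removal identity $\mathbf{1}_{V(C)}=\mathbf{1}_{V(C')}+\sum_{e\in M''\setminus\{f\}}\rho(e)$ and it holds; a tiny imprecision is the phrase ``with $f\in E(C'')$,'' since $f$ belongs to both $C'$ and $C''$, but this does not affect the argument since you only use the perfect matching of $C''$ through $f$.

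The route you take is genuinely different in flavor from Ohsugi--Hibi's. Their proof is a self-contained toric/combinatorial argument characterizing when an integer point of the cone decomposes into edge vectors, built up through explicit case analysis; you instead invoke the Balinski-type vertex structure theorem for the fractional $b$-matching polytope as a black box, which you are right to flag as the load-bearing technical input. Your approach is more conceptually transparent to anyone fluent in polyhedral matching theory and makes the source of the odd cycles in the obstruction visibly geometric (they are precisely the half-integral cycles at an LP vertex); Ohsugi--Hibi's is more elementary and self-contained, better suited to a toric audience that may not want to import that machinery. Both are valid, and the heavy lifting simply moves: in your version it is hidden in the vertex-structure lemma, in theirs in a direct Hilbert-basis-style decomposition.
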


{ Using the above result we now prove the main result of this Section, which gives a precise condition for when the semi-group $\NN A$ is normal for the matrix $A$ associated to a one loop Feynman graph; recall that normality of $\NN A$ implies the Cohen-Macaulay property holds, see Figure \ref{fig:NormCMGPEtc}.}
\begin{theorem}\label{thm:Normal_1loop}Let $\mathcal{G}_h=\mathcal{U}x_0+\mathcal{F}$ be the Symanzik polynomial of a one-loop Feynman diagram $G$. {Then the matrix $A=\supp(\mathcal{G}_h)$ is the edge matrix of a graph $H$ satisfying the odd cycle condition if and only if we have that \begin{equation}
    p(F_{ij})^2-m_i^2-m_j^2\neq 0\label{eq:CMCondExplain}
\end{equation} for all edges $i,j$ where both $m_i\neq0$ and $m_j\neq0$. Hence, in particular, the semi-group $\NN A$ is normal if and only if \eqref{eq:CMCondExplain} holds for all edges $i,j$ where both $m_i\neq0$ and $m_j\neq0$. }
\end{theorem}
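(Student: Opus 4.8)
The plan is to realize $A=\supp(\mathcal{G}_h)$ as the edge matrix of an explicit graph $H$ and then translate the odd-cycle condition of Theorem~\ref{thm: odd-cycle} into the arithmetic condition~\eqref{eq:CMCondExplain}. First I would write down the monomials of $\mathcal{G}_h=\mathcal{U}x_0+\mathcal{F}$ for a one-loop $n$-point graph: from $\mathcal{U}x_0=\sum_i x_0 x_i$ we get exponent vectors $\mathbf{e}_0+\mathbf{e}_i$, from the diagonal part $\sum_i m_i^2 x_i^2$ of $\mathcal{F}$ (when $m_i\neq 0$) we get $2\mathbf{e}_i$, and from the off-diagonal part $\sum_{i<j}(m_i^2+m_j^2-s_{ij})x_ix_j$ (when the coefficient is nonzero) we get $\mathbf{e}_i+\mathbf{e}_j$. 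Every such vector has coordinate sum $2$, so indeed each is of the form $\rho(e)=\mathbf{e}_a+\mathbf{e}_b$ (with a loop $2\mathbf{e}_i$ read as $a=b=i$). Hence $A$ is the edge matrix of the graph $H$ on vertex set $\{0,1,\ldots,n\}$ whose edges are: the "spokes" $\{0,i\}$ for every $i$; a loop at $i$ whenever $m_i\neq 0$; and the edge $\{i,j\}$ whenever $p(F_{ij})^2-m_i^2-m_j^2\neq 0$. (I should note the caveat already flagged in the excerpt that $\supp(\mathcal{G}_h)=\supp(\mathcal{G})$ up to elementary row operations, so working with $\mathcal{G}_h$ is legitimate; a loop $2\mathbf{e}_i$ at a vertex is allowed in Ohsugi–Hibi's setup and behaves like an odd cycle of length one.)

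Next I would analyze the minimal odd cycles of $H$. The vertex $0$ is connected to all of $\{1,\ldots,n\}$, so the key observation is that $0$ is adjacent to every other vertex; this makes the odd-cycle condition easy to satisfy \emph{except} when two minimal odd cycles both avoid vertex $0$ and also have no connecting edge. A loop at $i$ is a minimal odd cycle; if two vertices $i,j$ both carry loops ($m_i,m_j\neq 0$) then the condition requires an edge between $i$ and $j$, i.e. requires $p(F_{ij})^2-m_i^2-m_j^2\neq 0$ — this is exactly~\eqref{eq:CMCondExplain}. Conversely, I would check that whenever~\eqref{eq:CMCondExplain} holds, \emph{all} pairs of minimal odd cycles are linked: any minimal odd cycle either is a loop at some vertex with nonzero mass, or passes through $0$ (hence meets every spoke), or is a triangle $\{i,j,k\}$ among the off-diagonal edges; in each of these cases one uses the spokes through $0$ and the hypothesis~\eqref{eq:CMCondExplain} (which guarantees edges between loop-vertices) to produce either a common vertex or a connecting edge. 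Triangles not through $0$ share the vertex $0$ with any cycle through $0$... more carefully, two triangles among $\{1,\dots,n\}$ are always joined because each of their vertices is adjacent to $0$, and $0$ is adjacent to a vertex of the other — so the connecting edge exists. This establishes the "if and only if" for the odd cycle condition; the equivalence with normality of $\NN A$ is then immediate from Theorem~\ref{thm: odd-cycle}.

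The main obstacle I anticipate is the careful enumeration of \emph{minimal} (chordless) odd cycles of $H$ and verifying the linking condition for every pair, since $H$ has a rich edge set (spokes, loops, and a near-complete graph among $\{1,\ldots,n\}$) and one must be sure not to overlook a pair of short chordless odd cycles that fails to be connected precisely when~\eqref{eq:CMCondExplain} is violated. In particular the delicate direction is the "only if": I must exhibit, whenever $p(F_{ij})^2=m_i^2+m_j^2$ with $m_i,m_j\neq 0$, two minimal odd cycles (the loops at $i$ and at $j$) that have no common vertex and no connecting edge in $H$, which requires confirming that the edge $\{i,j\}$ is genuinely absent and that no chord forces one of the loops to be non-minimal — the latter being automatic since a single-vertex loop is always chordless. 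A secondary subtlety is the degenerate cases $n\le 2$ (tadpole, bubble), where the graph $H$ is small and the statement should be checked directly to confirm it is vacuous or trivially true; and the single-massive-propagator case mentioned after the theorem, where at most one loop is present so no two loop-vertices can fail to be connected, consistent with the claim that cancellation is then harmless.
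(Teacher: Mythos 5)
Your construction of $H$ (spokes from vertex $0$, a loop at $i$ whenever $m_i\neq 0$, an edge $\{i,j\}$ whenever the coefficient of $x_ix_j$ in $\mathcal{G}_h$ is nonzero) and your identification of the key obstruction -- two disjoint loops at $i$ and $j$ require the edge $\{i,j\}$, which is present precisely when $p(F_{ij})^2 - m_i^2 - m_j^2 \neq 0$ -- match the paper's proof exactly, and your ``only if'' direction is sound: a single-vertex loop is automatically chordless, so whenever both masses are nonzero and the coefficient vanishes, the two loops form a pair of minimal odd cycles with no common vertex and no connecting edge.

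The concrete gap is in your ``if'' direction. The claim that ``two triangles among $\{1,\dots,n\}$ are always joined because each of their vertices is adjacent to $0$, and $0$ is adjacent to a vertex of the other --- so the connecting edge exists'' is wrong. The odd cycle condition (Definition~\ref{def:oddCycle}) requires a \emph{direct} edge of $H$ joining a vertex of $C$ with a vertex of $C'$; the spokes $\{0,i\}$ and $\{0,a\}$ form a path of length two through $0$, but since $0$ is not a vertex of either triangle they do not constitute a connecting edge. The same issue arises for a loop at $i$ paired with a chordless odd cycle lying entirely inside $\{1,\dots,n\}$, and your enumeration also omits possible chordless odd cycles of length $\geq 5$ on $\{1,\dots,n\}$. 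For what it is worth, the paper's own proof is terse at precisely this point -- after adding the $\mathcal{F}_0$ edges it simply asserts that the odd cycle condition holds, without inspecting pairs of cycles that avoid vertex $0$ -- so your case split faithfully mirrors the paper's strategy, but your attempt to spell it out exposes that the sub-argument for cycles avoiding $0$ needs a genuine repair rather than the appeal to vertex $0$.
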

\begin{proof}
First consider the structure of the matrix $A=\supp(\mathcal{G}_h)$ for our one loop Feynman diagram $G$. Since we have exactly one loop in the diagram $G$ then the exponent vectors of $\U$, which correspond to the spanning trees of $G$, are obtained by removing exactly one edge from the loop in $G$, and the exponent records this removed edge giving an identity matrix obtained from the monomials of $\U$. In other words, this means that the exponents of $\U$ are the indicator vectors of the bases of the co-graphic matroid of $M^*(G)$ of $G$. The matrix obtained from $\U x_0$ is then an identity matrix with a row of ones added on top:
\begin{equation*}
    \begin{pmatrix}
    \begin{matrix}
    1&\cdots & 1
    \end{matrix}\\
     \mathbf{1}_{E}
    \end{pmatrix}
\end{equation*}
where $\mathbf{1}_E$ is the $E\times E$-identity matrix.

Hence the matrix $A$ is the edge matrix (as in Definition \ref{def:Edge_Matrix}) of a graph $H$. The part of $A$ arising from $\U$ gives $E+1$ vertices of $H$ and exactly one edge between the central vertex (corresponding to the exponent of $x_0$ in $\U x_0$). We now construct the graph $H$ in three steps; the first step adds the vertices and edges arising from   $\U x_0$, this step is illustrated in Figure \ref{fig:U_graph}.

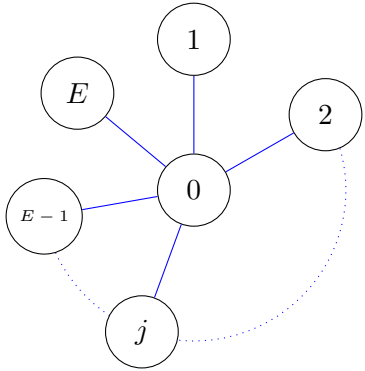
\begin{figure}[h]
    \centering
       \begin{tikzpicture}
\draw[blue,dotted] (250:2) arc (250:390:20mm);
   \draw[blue,dotted] (190:2) arc (190:250:20mm);
    \node[state,fill=white] (center) at (0,0) {$0$};
    \node[state,fill=white] (1) at (90 :2cm) {$1$};
    \draw[blue] (1) -- (center);
      \node[state,fill=white] (2) at (30 :2cm) {$2$};
    \draw[blue] (2) -- (center);
          \node[state,fill=white] (j) at (250 :2cm) {$j$};
    \draw[blue] (j) -- (center);
        \node[state,fill=white] (Em1) at (190 :2cm) {\tiny$E-1$};
    \draw[blue] (Em1) -- (center);
            \node[state,fill=white] (E) at (140 :2cm) {$E$};
    \draw[blue] (E) -- (center);
   \end{tikzpicture}
       \caption{The part of $H$ associated to  $\U x_0$. The vertex labels in the graph denote the variable subscript. }
    \label{fig:U_graph}
\end{figure}
Now consider the columns of $A$ arising from exponents of $\F_0$. The polynomial $\F_0$ has no monomials which contain $x_0$, hence the first entry of all columns of $A$ arising from $\F_0$ is 0. In our one loop diagram, to obtain a 2-forest we remove 2 edges, and the monomials in $\F_0$ record these two edges which have been removed, it follows these columns contain exactly two 1s. Hence the columns of $A$ arising from $\F_0$ contain two ones and a zero in the first entry. This will lead to connected edges between pairs of vertices on the circular arc. Each such connection will yield a new minimal odd three cycle, as illustrated in Figure \ref{fig:F0_graph}. It follows that the graph $H$ obtained by this addition will satisfy the odd-cycle condition (Definition \ref{def:oddCycle}).

\begin{figure}[h]
    \centering

       \begin{tikzpicture}
\draw[blue,dotted] (250:2) arc (250:390:20mm);
   \draw[blue,dotted] (190:2) arc (190:250:20mm);
    \node[state,fill=white] (center) at (0,0) {$0$};
    \node[state,fill=white] (1) at (90 :2cm) {$1$};
    \draw[blue] (1) -- (center);
      \node[state,fill=white] (2) at (30 :2cm) {$2$};
    \draw[blue] (2) -- (center);
          \node[state,fill=white] (i) at (320 :2cm) {$i$};
    \draw[blue] (i) -- (center);
              \node[state,fill=white] (j) at (250 :2cm) {$j$};
    \draw[blue] (j) -- (center);
        \draw[blue] (i) -- (j);
        \node[state,fill=white] (Em1) at (190 :2cm) {\tiny$E-1$};
    \draw[blue] (Em1) -- (center);
            \node[state,fill=white] (E) at (140 :2cm) {$E$};
    \draw[blue] (E) -- (center);
   \end{tikzpicture}
       \caption{The part of $H$ associated to  $\U x_0$ with edges added corresponding to monomials of $\F_0$. One such edge is illustrated connected the vertices $i$ and $j$ below.}    \label{fig:F0_graph}
\end{figure}
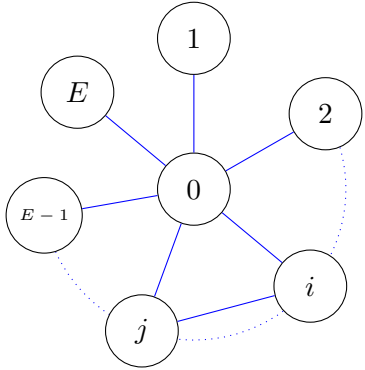
Now we consider the consequences of adding massive particles, that is edges with an associated mass in the Feynman graph $G$. In the polynomial $\mathcal{G}_h$ the addition of a massive edge in $G$ corresponds to the following product of polynomials
\begin{equation}\label{eq:massiveTerm}
    (x_1+x_2+\cdots+x_E)\cdot m_j^2x_j;
\end{equation}
this will contain the monomials $x_ix_j,\ i\neq j$ and $x_j^2$. {The square term corresponds to adding a loop at vertex $j$ of $H$. This odd-cycle is connected to vertex $0$ by a simple edge and thus at most separated by one edge from every odd-cycle corresponding to terms $x_ix_j,\ i\neq j$. Assume now that we have two internal masses, $m_i\neq0$ and $m_j\neq0$, then the loops they create at vertex $i$ resp. $j$ have to be connected by an edge for the odd-cycle condition to hold. This is true if and only if $\eqref{eq:CMCondExplain}$ holds, i.e.~if and only if the corresponding term in $\F_0+\F_m$ is non-vanishing.} 

\begin{figure}[h]
    \centering

       \begin{tikzpicture}
\draw[blue,dotted] (250:2) arc (250:390:20mm);
   \draw[blue,dotted] (190:2) arc (190:250:20mm);
    \node[state,fill=white] (center) at (0,0) {$0$};
    \node[state,fill=white] (1) at (90 :2cm) {$1$};
    \draw[blue] (1) -- (center);
      \node[state,fill=white] (2) at (10 :2cm) {$2$};
    \draw[blue] (2) -- (center);
          \node[state,fill=white] (i) at (320 :2cm) {$i$};
    \draw[blue] (i) -- (center);
              \node[state,fill=white] (j) at (235 :2cm) {$j$};
    \draw[blue] (j) -- (center);
\draw[blue] (j) to [out=180,in=210,looseness=11] (j);
        \draw[blue] (i) -- (j);
         \draw[blue] (1) -- (j);
         \draw[blue] (2) -- (j);
         \draw[blue] (E) -- (j);
          \draw[blue] (Em1) -- (j);
        \node[state,fill=white] (Em1) at (190 :2cm) {\tiny$E-1$};
    \draw[blue] (Em1) -- (center);
            \node[state,fill=white] (E) at (140 :2cm) {$E$};
    \draw[blue] (E) -- (center);
   \end{tikzpicture}
       \caption{The part of $H$ associated to $\U x_0$ with edges added from Equation (\ref{eq:massiveTerm}). The term $x_j^2$ correspond to the self loop and the other terms connect $j$ to every other vertex in $H$ satisfying \eqref{eq:CMCondExplain}.}
    \label{fig:mass_graph}
\end{figure}
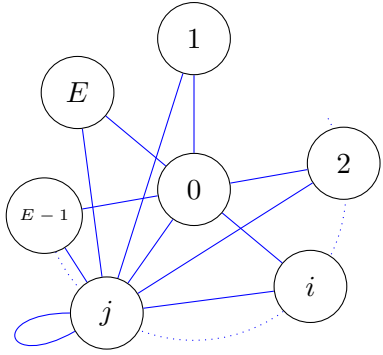

Since the graph $H$ satisfies the odd-cycle condition it has edge matrix $A$, the algebra $\NN A$
is normal by Theorem \ref{thm: odd-cycle}.
\end{proof}
We now illustrate this result on an example which is guaranteed to be normal, and hence Cohen-Macaulay, by the result of Theorem \ref{thm:Normal_1loop} but for which the earlier results of \cite{TH22} and \cite{walther2022feynman} do not apply. 

\begin{example}\label{ex: new normal}
The arguments in \cite{TH22,walther2022feynman} rest on the assumption that for every proper subset $V'\subset~V_\mathrm{ext}$ we have $\left(\sum_{v\in V'}p_v\right)^2\neq 0$. For on-shell massless diagrams this assumption fails, e.g. since $p_v^2=0$ for every $v\in V_\mathrm{ext}$. This means for example that the on-shell massless box-diagram with homogeneous Lee-Pomeransky polynomial 
    \begin{equation}
        \G_h=x_0(x_1+x_2+x_3+x_4)-sx_1x_3-tx_2x_4
    \end{equation}
is not covered by any of the previous results but is still normal due to Theorem \ref{thm:Normal_1loop}. The edge graph associated to $\G_h$ is shown in Figure \ref{fig: odd-cycle example} which clearly satisfy the odd-cycle condition implying, by Theorem \ref{thm:Normal_1loop}, that the semi-group $\NN A$ is normal, and hence the Cohen-Macaulay property holds (see also Figure \ref{fig:NormCMGPEtc}). {The number of master integrals can thus be calculated simply as $\mathrm{vol}(\newt(\G_h))=3$ which corresponds to the box integral itself along with the $s$- and $t$-channel bubble integrals.}
    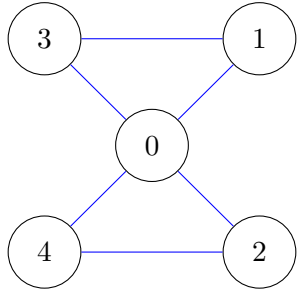
\begin{figure}[h]
    \centering
       \begin{tikzpicture}
    \node[state,fill=white] (center) at (0,0) {$0$};
    \node[state,fill=white] (1) at (45 :2cm) {$1$};
    \draw[blue] (1) -- (center);
      \node[state,fill=white] (3) at (135 :2cm) {$3$};
    \draw[blue] (3) -- (center);
          \node[state,fill=white] (4) at (225 :2cm) {$4$};
    \draw[blue] (4) -- (center);
        \node[state,fill=white] (2) at (315 :2cm) {$2$};
    \draw[blue] (2) -- (center);
    \draw[blue] (1) -- (3);
    \draw[blue] (2) -- (4);
   \end{tikzpicture}
       \caption{The edge graph in Example \ref{ex: new normal}; as before the vertex labels denote variable subscripts.}
    \label{fig: odd-cycle example}
\end{figure}
\end{example}
We especially note the following corollary of Theorem \ref{thm:Normal_1loop}.
\begin{corollary}\label{corollary_massless}
    If $G$ is a one-loop Feynman graph with $m_e=0$ for all edges $e\in E$, i.e. $\F_m=0$, or at most one edge has a non-zero mass. Then $\NN A$ is normal for all possible external kinematics.
\end{corollary}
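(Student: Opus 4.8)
The plan is to obtain this immediately from Theorem \ref{thm:Normal_1loop}. That theorem characterizes normality of $\NN A$ by the requirement that $p(F_{ij})^2-m_i^2-m_j^2\neq 0$ for every pair of \emph{distinct} edges $i,j$ with $m_i\neq 0$ \emph{and} $m_j\neq 0$. The crucial point is that under either hypothesis of the corollary this index set of pairs is empty: when $\F_m=0$ we have $m_e=0$ for all $e\in E$, so no edge carries a nonzero mass at all; and when at most one edge carries a mass there is no pair of distinct edges that both carry a mass. In either case the universally quantified condition \eqref{eq:CMCondExplain} holds vacuously, independently of the values of the external momenta, so the ``if'' direction of Theorem \ref{thm:Normal_1loop} gives that $\NN A$ is normal for all external kinematics.

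Concretely, I would (i) recall the statement of Theorem \ref{thm:Normal_1loop} and emphasize that its hypothesis only constrains pairs $\{i,j\}$ with both masses nonzero; (ii) split into the two cases $\F_m=0$ and exactly one massive edge, in each case noting that the set of such pairs is empty; and (iii) conclude normality, hence the Cohen-Macaulay property via Figure \ref{fig:NormCMGPEtc}.

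I do not expect any real obstacle here, since all the work is already done in the proof of Theorem \ref{thm:Normal_1loop}. If one preferred a self-contained argument, one could instead re-examine the graph $H$ constructed in that proof: with $\F_m=0$ every minimal odd cycle of $H$ is a triangle through the central vertex $0$ (coming from a monomial of $\F_0$), so any two such cycles already share the vertex $0$ and the odd cycle condition of Definition \ref{def:oddCycle} holds trivially; in the one-massive case one further self-loop is attached at a single vertex $j$, and since $j$ is joined to $0$ by an edge this new odd cycle is within distance one of every other odd cycle, so the odd cycle condition still holds and Theorem \ref{thm: odd-cycle} then yields normality. Such a remark could be appended, but the one-line deduction from Theorem \ref{thm:Normal_1loop} suffices.
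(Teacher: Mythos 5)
Your one-line reduction to Theorem \ref{thm:Normal_1loop} --- noting that the condition \eqref{eq:CMCondExplain} is quantified over pairs of edges both carrying nonzero mass, an empty index set under either hypothesis, so it holds vacuously for all external kinematics --- is exactly the reasoning the paper intends; the corollary carries no separate proof precisely because it is immediate from the theorem. As a minor note on your optional self-contained remark: minimal odd cycles of $H$ need not all pass through vertex $0$ (a triangle $\{i,j,k\}\subset\{1,\ldots,n\}$ with all three corresponding $\F_0$-edges present is chordless and avoids $0$), so that sketch would need a little repair to cover such cycles; but since your direct appeal to Theorem \ref{thm:Normal_1loop} already does all the work, this is immaterial and the remark is best omitted.
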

Even though we have primarily discussed the Cohen-Macaulay property, the result above as well as the papers \cite{TH22,walther2022feynman} focus on proving normality of $\NN A$ which is a stronger criteria, see Figure \ref{fig:NormCMGPEtc}. Going to special kinematics one may find Feynman integrals with the Cohen-Macaulay property but which are not normal as well as two-loop integrals where even the Cohen-Macaulay property fails \cite{walther2022feynman}.    
\subsection{Generalized permutohedra}\label{sec:GP}
 We again employ the notations introduced in Section \ref{sec:FeynmanGGKZ} with Symanzik polynomials $\U,\F$ giving $\G=\U+\F$. { The main contributions of this subsection are Proposition~\ref{prop: massive GP} and Theorem~\ref{thm: massive path}. Reinterpreting earlier results in the literature, in essence they demonstrate that the polytope $\newt(\F)$\footnote{It is well-known that $\newt(\U)$ is always GP, see the discussion after Proposition \ref{prop: IDP}} of a Feynman graph of arbitrary order is a {generalized  permutohedron (GP)} if:
 1) all internal propagators are massive for any external kinematics in the former case; 2) every vertex can be connected to an external vertex by a path of propagators that are all massive, and the graph is one-particle and one-vertex irreducible, with all external momenta offshell/massive in the latter case}. { This enlarges the class of integrals previously known to be GP~\cite{Schultka:2018nrs}, as we will also review in what follows.}
 
 The \emph{permutohedron} is a classical polytope with many special properties, for example, it is a simple zonotope, the monotone path polytope of a cube \cite{ziegler2012lectures} and the secondary polytope of a triangular prism $\Delta(1,2)\times\Delta(1,n)$, see \cite{gelfand2008discriminants}. More recently the \emph{generalized permutohedron} (GP) was introduced by Postnikov \cite{Postnikov2005}, and it was shown by Aguiar and Ardila that these polytopes are universal combinatorial representatives for a vast class of Hopf monoids \cite{Aguiar2017}. {In the physical context generalized permutohedra have facilitated new methods for fast Monte Carlo evaluation of Feynman integrals \cite{Borinsky:2020rqs,Borinsky:2023jdv}.}

 Simply put, a generalized permutohedron is any polytope whose normal fan is a coarsening of a permutohedron's normal fan. In the context of establishing normality of a polytope or semi-group, the following classification is more useful.
\begin{theorem}[{\cite[Theorem 12.3]{Aguiar2017}}]\label{thm: GP}
    A polytope $P\subset\RR^n$ is a generalized permutohedra if and only if every edge is parallel to $\mathbf{e}_i-\mathbf{e}_j$ for some $i,j\in\{1,\ldots,n\}$.
\end{theorem}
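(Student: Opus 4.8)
The plan is to prove this via normal fans, translating the edge condition on $P$ into a condition on the walls of its normal fan $N(P)$. By definition (the paragraph preceding Theorem~\ref{thm: GP}), a polytope $P\subset\RR^n$ is a generalized permutohedron precisely when $N(P)$ coarsens the \emph{braid fan} $\mathcal{B}_n$, the normal fan of the standard permutohedron $\conv\{(\sigma(1),\ldots,\sigma(n)):\sigma\in S_n\}$, whose maximal cones are the closed chambers of the central hyperplane arrangement $\{H_{ij}:1\le i<j\le n\}$, where $H_{ij}=\{x\in\RR^n:x_i=x_j\}=(\mathbf{e}_i-\mathbf{e}_j)^\perp$. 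Since $\mathcal{B}_n$ has lineality line $\RR\mathbf{1}$, any $P$ whose edge directions all lie in $\mathbf{1}^\perp$ automatically has affine span contained in a translate of $\mathbf{1}^\perp$, which is exactly the setting in which the coarsening condition can hold, so no generality is lost.

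First I would record the face--cone dictionary for $N(P)$: a face $F$ of $P$ corresponds to a cone $\sigma_F\in N(P)$ with $\dim F+\dim\sigma_F=n$ (dimensions taken in $\RR^n$, so the common lineality is counted in $\dim\sigma_F$). In particular an edge $E$ of $P$ with direction vector $d$ corresponds to a \emph{wall} (codimension-one cone) $\sigma_E$ of $N(P)$, and $\mathrm{span}(\sigma_E)$ is precisely the hyperplane $d^\perp$ of functionals constant along $E$. Hence $\sigma_E\subseteq H_{ij}$ holds if and only if $d^\perp=H_{ij}=(\mathbf{e}_i-\mathbf{e}_j)^\perp$, that is, if and only if $d$ is parallel to $\mathbf{e}_i-\mathbf{e}_j$.

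Second, I would prove the general lemma that a complete polyhedral fan $\mathcal{F}$ in $\RR^n$ coarsens the fan $\mathcal{A}$ of a central hyperplane arrangement if and only if every wall of $\mathcal{F}$ is contained in some hyperplane of $\mathcal{A}$. The ``only if'' direction is immediate: in a coarsening every cone of $\mathcal{F}$, hence every wall, is a union of cones of $\mathcal{A}$. For ``if'', argue by contradiction: if some chamber $C$ of $\mathcal{A}$ is not contained in a single maximal cone of $\mathcal{F}$, then---$\mathrm{int}(C)$ being connected and remaining connected after deleting all faces of $\mathcal{F}$ of codimension two or more---one can join two points of $\mathrm{int}(C)$ lying in distinct maximal cones of $\mathcal{F}$ by a path meeting only walls of $\mathcal{F}$; such a wall then meets $\mathrm{int}(C)$ and, by hypothesis, lies in a hyperplane of $\mathcal{A}$, contradicting that $\mathrm{int}(C)$ avoids all hyperplanes of $\mathcal{A}$.

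Applying the lemma with $\mathcal{F}=N(P)$ and $\mathcal{A}=\mathcal{B}_n$ and combining with the dictionary yields the theorem: $P$ is a generalized permutohedron $\iff$ $N(P)$ coarsens $\mathcal{B}_n$ $\iff$ every wall of $N(P)$ lies in some $H_{ij}$ $\iff$ every edge of $P$ is parallel to some $\mathbf{e}_i-\mathbf{e}_j$. I expect the main obstacle to be the ``if'' direction of the fan lemma---specifically, making precise that a chamber of $\mathcal{A}$ split by $N(P)$ is split along a genuine codimension-one face of $N(P)$ rather than along a lower-dimensional face---together with the routine but necessary checking of the characterisation for degenerate $P$ (points and segments) and the bookkeeping of the lineality line. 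Alternatively, one could deduce the statement from the description of generalized permutohedra as signed Minkowski sums of faces of the standard simplex, reading the edge directions off that description, but the normal-fan argument is the most self-contained.
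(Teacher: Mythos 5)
The paper does not prove this statement; it cites Theorem 12.3 of Aguiar--Ardila without supplying an argument, so there is no ``paper's proof'' to compare against. Your normal-fan argument is the standard proof of this equivalence (it is essentially the one in the cited reference and goes back to Postnikov), and it is correct in outline. The two places you flag as needing care at the end are indeed the right ones to spell out: in the ``only if'' direction of your fan lemma, a wall $w$ being a union of cones of $\mathcal{A}$ does not by itself put $w$ inside a single hyperplane---you should add that $\mathrm{relint}(w)$ is connected and $(n-1)$-dimensional, so the codimension-one $\mathcal{A}$-cones covering it all span $\mathrm{span}(w)$ and hence lie in the same $H_{ij}$; and in the ``if'' direction you should justify that $\mathrm{int}(C)$ minus the codimension-$\geq 2$ skeleton of $N(P)$ is connected, which follows from the removed set being a finite union of linear pieces of codimension at least two. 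Your face--cone dictionary, including the remark that the lineality of $N(P)$ is absorbed into $\dim\sigma_F$ so that $\mathrm{span}(\sigma_E)=d^\perp$ for an edge $E$ with direction $d$, is stated correctly, and the degenerate cases (points, segments) are handled once one notes that a point's normal fan is the single cone $\RR^n$ and a segment's normal fan has exactly one wall $d^\perp$. With these details filled in, the lemma and the dictionary combine exactly as you describe to give the theorem.
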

\noindent A {\em matroid polytope} is the convex hull of the indicator vectors of all bases of matroid; note these vectors have entries 0 or 1 only. We will say that a polytope has the GP property if Theorem \ref{thm: GP} is satisfied. This especially means that every matroid polytope \cite{GGMS,GelfandSerganova} has the GP property and that every polytope with the GP property is \emph{edge-unimodular}, i.e., the matrix of edge-directions is unimodular.

As used in this paper, normality of a set of lattice points $A=\{1\}\times A_-\subset\ZZ^{n+1}$ is a property of the semi-group $\NN A$ while the GP property is associated to a polytope. Even if $P=\conv(A_-)$, there is a priori no direct connection between the two properties, however, if $A_-$ is the full set of lattice points in $P$, i.e.,
\begin{equation}\label{eq: AlatticePoints}
    A_-=P\cap\ZZ^n
\end{equation}
then GP implies normality. This is because every polytope with the GP property also has the \emph{integer decomposition property} (IDP):
\begin{definition}[Integer decomposition property]
    A polytope $P\subset\RR^n$ is said to have the \emph{integer decomposition property} (IDP) if for every $k\in\ZZ_{>0}$ it satisfies
    \begin{equation}
        kP\cap\ZZ^n=P\cap\ZZ^n+(k-1)P\cap\ZZ^n.
    \end{equation}
\end{definition}
By a result of Howard (see, \cite{Howard2007}, cf. \cite{OWR}) every edge-unimodular polytope, and therefore especially every GP, has the IDP property. The significance of the IDP property in our setting is that it is  equivalent to $A$ being normal if $A_-=P\cap\ZZ^n$.

\begin{proposition}\label{prop: IDP}
Let $P\subset\RR^n$ be a polytope and $A_-=P\cap\ZZ^n$, then $P$ has the IDP if and only if $A=\{1\}\times A_-$ is normal.
\end{proposition}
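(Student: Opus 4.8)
The plan is to analyze the semi-group $\NN A$ in the grading by the first coordinate: the \emph{degree} of a lattice point $(k,v)\in\ZZ^{n+1}$ is $k$, and since every column of $A=\{1\}\times A_-$ has first coordinate $1$, each of $\NN A$, $\ZZ A$ and $\RR_{\geq 0}A$ is graded by degree. Consequently the identity $\NN A=\ZZ A\cap\RR_{\geq 0}A$ defining normality holds if and only if it holds in every degree $k\geq 0$. I would then identify the degree-$k$ pieces: that of $\NN A$ is $\{k\}\times B_k$, where $B_k:=A_-+\dots+A_-$ ($k$ summands) is the $k$-fold sumset; that of $\RR_{\geq 0}A$ is $\{k\}\times kP$, using that $P=\conv(A_-)$ is convex so $kP=P+\dots+P$; and that of $\ZZ A$ is $\{k\}\times L_k$ for the affine sublattice $L_k:=kp_0+\ZZ\langle a-p_0 : a\in A_-\rangle$ with $p_0\in A_-$ fixed (note $B_k\subseteq L_k\subseteq\ZZ^n$). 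Hence normality is equivalent to $B_k=kP\cap L_k$ for all $k\geq 0$.

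Next I would recast the IDP condition in the same language. Since $A_-=P\cap\ZZ^n$ and $kP=P+\dots+P$, an easy induction on $k$ shows that $P$ has the IDP if and only if $kP\cap\ZZ^n=B_k$ for every $k\geq 0$: the base case is $P\cap\ZZ^n=A_-=B_1$, and in the inductive step $kP\cap\ZZ^n=(P\cap\ZZ^n)+((k-1)P\cap\ZZ^n)=A_-+B_{k-1}=B_k$, with the converse reading the chain of equalities backwards. Comparing the two reformulations, the IDP condition is $B_k=kP\cap\ZZ^n$ and the normality condition is $B_k=kP\cap L_k$, differing only in the lattice. Because $B_k\subseteq kP$ and $B_k\subseteq L_k\subseteq\ZZ^n$, one always has the squeeze $B_k\subseteq kP\cap L_k\subseteq kP\cap\ZZ^n$; so if $P$ has the IDP then $kP\cap\ZZ^n=B_k$ collapses the squeeze and forces $kP\cap L_k=B_k$, i.e.\ $\NN A$ is normal. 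This is the implication relevant for the GP/edge-unimodular discussion preceding the proposition.

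For the converse, normality gives $kP\cap L_k=B_k$, and to recover the IDP it remains to prove $kP\cap\ZZ^n\subseteq L_k$ for all $k$, i.e.\ that every lattice point of every dilate $kP$ already lies in the affine lattice generated by the columns of $A$ at level $k$. I would reduce this to the level-$1$ statement by the standard normalization in which the ambient lattice $\ZZ^{n+1}$ is replaced by $\ZZ A$ itself (equivalently, one re-coordinatizes so that $\supp(\mathcal{G}_h)$ spans $\ZZ^{n+1}$, which is the tacit convention throughout this subsection); after this reduction $L_k=\{k\}\times\ZZ^n$ and the inclusion is immediate, while at level $1$ it is precisely the hypothesis $A_-=P\cap\ZZ^n$. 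I expect the main obstacle to be exactly this step: verifying carefully that passing from $\ZZ^n$ to $\ZZ A$ does not change the content of the IDP condition, so that the level-$1$ hypothesis really does propagate to all dilates. Once that is in place, assembling the two halves is immediate from the degree-graded equivalences of the first two paragraphs.
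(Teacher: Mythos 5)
Your graded decomposition and the squeeze $B_k\subseteq kP\cap L_k\subseteq kP\cap\ZZ^n$ give a clean and correct treatment of the forward direction, and the overall structure of your argument matches the paper's own (terser, element-by-element) proof. For the converse you have correctly located the genuine issue: one must show $kP\cap\ZZ^n\subseteq L_k$. The problem is that the re-coordinatization you propose does not establish this, and the worry you flag at the end is warranted. Replacing $\ZZ^{n+1}$ by $\ZZ A$ replaces the lattice $\ZZ^n$ in the IDP condition with the possibly sparser affine lattice generated by $A_-$, which genuinely weakens the condition. In fact, without a further hypothesis the statement is false: take $P=\conv\{(0,0,0),(1,0,0),(0,1,0),(1,1,2)\}$ (Reeve's tetrahedron with height $2$), whose only lattice points are its four vertices, so $A_-=P\cap\ZZ^3$. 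The four columns of $A$ are $\ZZ$-linearly independent, so $\NN A$ is a free commutative monoid and hence normal ($\ZZ A\cap\RR_{\geq 0}A=\NN A$), yet $(1,1,1)\in 2P\cap\ZZ^3$ has odd last coordinate and thus is not a sum of two elements of $A_-$, so the IDP fails.

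What closes the gap is not a change of coordinates but the standing GKZ hypothesis, stated at the opening of the paper's discussion of $A$-discriminants and used implicitly in its two-line proof: $A$ is required to $\ZZ$-generate the ambient lattice, i.e.\ $\ZZ A=\ZZ^{n+1}$, equivalently $A_-$ generates $\ZZ^n$ as an affine lattice. Under that assumption $L_k=\{k\}\times\ZZ^n$ identically, so the inclusion you need is immediate and your argument is complete. The correction is therefore small: drop the re-coordinatization (and the worry about whether the IDP survives it) and instead invoke the spanning hypothesis explicitly, noting that it is tacit both in the proposition's statement and in the paper's own proof.
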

\begin{proof}
    Assume $P$ has the IDP, then for every integer $k>0$ we have that $a\in kP\cap\ZZ^n$ implies there exists $a_1,\ldots,a_k\in A_-$ such that $a=a_1+\cdots+a_k$. By the construction of $A$ we may choose a basis for $\ZZ A$ such that the first coordinate is $1$ and the remaining entries are a basis for $\ZZ A_-$. An arbitrary point in $\ZZ A\cap\RR_{\ge 0}A\subset\ZZ^{n+1}$ has the form $(k,a)$ for some integer $k>0$ and where $a\in \ZZ A_-$, but $A_-=P\cap \ZZ^n$, so $a\in kP$ and the IDP implies that $a=a_1+\cdots+a_k$ for some $a_1,\cdots,a_k\in A_-$ and thus $(k,a)=(1,a_1)+\cdots+(1,a_k)$. Therefore normality is proven.

    Now, assume that $\NN A$ is normal, that is, we can write every element $(k,a)\in\RR_{\ge 0}A\cap\ZZ A$ as $(k,a)=(1,a_1)+\cdots+(1,a_k)$ for $a_1,\ldots,a_k\in A_-$; this directly implies the IDP.
\end{proof}

By definition $\newt(\U)$ is the matroid polytope of the dual matroid to the Feynman graph, meaning that not only does $\newt(\U)$ have the GP property but also $\supp(\U)=\newt(\U)\cap\ZZ^{{n}}$ so the semi-group generated by $\supp(\U)$ is normal. 

Properties connected to the $\F$-polynomial are much more intricate as they are always dependent on the kinematic setup and not just on the underlying graph, however, some general statements are known. For example, Schultka proved in \cite{Schultka:2018nrs} that $\newt(\F)$ is a GP in the Euclidean regime with generic kinematics. Here we provide a slight generalization:
\begin{proposition}\label{prop: massive GP}
    Assume $m_e\neq 0$ for all $e\in E$, then $\newt(\F)$ is a GP for all possible choices of external kinematics.
\end{proposition}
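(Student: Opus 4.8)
The plan is to analyze the Newton polytope $\newt(\F)$ edge by edge, using the characterization in Theorem~\ref{thm: GP}: it suffices to show that every edge of $\newt(\F)$ is parallel to some $\e_i - \e_j$. First I would recall the structure of $\F = \F_m + \F_0$ when all internal masses are nonzero. The mass term $\F_m = \U\sum_{e} m_e^2 x_e$ contributes, after expansion, all monomials of the form $x_e x_{e'}$ with $e \neq e'$ (whenever $e'$ lies outside some spanning tree, which for a connected graph means all pairs, since $\U$ contains all the $|E|-L$ element subsets complementary to spanning trees — more carefully, $\U\cdot x_e$ ranges over spanning-tree complements times $x_e$) together with the squares $x_e^2$ for every $e\in E$. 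The key observation is that with all masses nonzero, the support of $\F$ already contains every $x_e^2$ and, via $\F_m$, a large supply of degree-two square-free monomials; the support of $\F_0$ only adds further such square-free monomials $x_e x_{e'}$ (those indexed by spanning two-forests), all of which are already potentially present from $\F_m$ or are of the same combinatorial type. Thus $\supp(\F)$ is contained in $\{\e_i + \e_j : 1\le i \le j \le |E|\}$, i.e.\ in the second hypersimplex-type lattice (the dilate $2\Delta$ intersected with a sublattice), and in fact $\supp(\F) \supseteq \{2\e_i : i\}$.

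Next I would invoke the relevant lattice-geometric fact: any polytope whose vertex set is contained in $\{\e_i+\e_j\}\subset \RR^n$ and which contains all the points $2\e_i$ automatically has all edges parallel to differences $\e_i-\e_j$. The reason is that $\{\e_i+\e_j\}$ is the vertex set of (a coordinate-shifted copy of) the second hypersimplex / the matroid polytope of the rank-two uniform-plus-loops matroid; the edges of any $0/1$-or-$2$-on-the-diagonal polytope supported there are of the form $(\e_i+\e_k)-(\e_j+\e_k) = \e_i-\e_j$ or $2\e_i - (\e_i+\e_j) = \e_i-\e_j$. So I would either cite this directly or give the short argument: take an edge of $\newt(\F)$ with endpoints $u,v\in\supp(\F)$; both have coordinate sum $2$, so $u-v$ has coordinate sum $0$ and all entries in $\{-2,-1,0,1,2\}$; the only primitive such vectors that can be edge directions of a polytope whose vertices have all entries in $\{0,1,2\}$ with total $2$ are, after checking the few cases, the $\e_i-\e_j$. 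The presence of the squares $2\e_i$ is what rules out the polytope degenerating in a way that would permit, say, a direction like $\e_i+\e_j-\e_k-\e_\ell$; but actually coordinate-sum-zero combined with entries bounded by the ambient box already forces this once one notes a genuine edge connects two vertices differing in exactly the minimal way.

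A cleaner route, which I would actually prefer to carry out, is to compare directly with Schultka's result \cite{Schultka:2018nrs}: he proves $\newt(\F)$ is GP for generic (Euclidean) kinematics, which in particular includes a configuration where all masses are nonzero and all kinematic invariants are generic. The point of the present generalization is that the \emph{combinatorial type} of $\newt(\F)$, hence its edge directions, can only specialize (faces can merge, vertices can disappear) as we move to non-generic external kinematics while keeping all $m_e\neq 0$ — and crucially, with all masses nonzero, no monomial of $\supp(\F)$ is ever \emph{lost}, because the coefficient of each $x_ex_{e'}$ monomial in $\F_m$ is $m_e^2$ or $m_{e'}^2$ (or a sum of such) which stays nonzero regardless of the external data, and these dominate/supplement the $\F_0$ contributions. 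So $\supp(\F)$ is constant in external kinematics (equal to the fully generic support) as long as all masses are on; a polytope with fixed vertex set has fixed edge directions; hence the GP property, being a statement purely about edge directions, persists. I would phrase the proof as: (i) show $\supp(\F)$ is independent of external kinematics when all $m_e\neq 0$, identifying it with the support in the generic case; (ii) conclude $\newt(\F)$ is the same polytope as in the generic case up to nothing — it is literally the same polytope — and apply \cite{Schultka:2018nrs}, or alternatively apply Theorem~\ref{thm: GP} to the explicit support described above.

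The main obstacle I anticipate is step (i): one must be careful that $\supp(\F_m)$ really does contain $x_e x_{e'}$ for \emph{every} pair $e\neq e'$ (and every $x_e^2$), not just for pairs appearing in spanning-tree complements; this requires that for any edge $e'$ there is a spanning tree of $G$ avoiding $e'$ and another spanning tree whose complement contains $e$ — equivalently that $G$ has no bridges in the relevant sense, or more precisely one must track exactly which $x_ex_{e'}$ arise. If $G$ has a bridge $e'$, then $x_{e'}$ divides $\U$ and the analysis of which quadratic monomials appear changes; one then has to check that the potential loss of some square-free monomials from $\F_m$ is compensated by $\F_0$ or that the resulting smaller support still yields a GP. I expect this bookkeeping — ensuring the support is exactly the hypersimplex-type set claimed, uniformly in the kinematics, for arbitrary (possibly bridge-containing, one-particle-reducible) graphs — to be where the real work lies, and it is presumably dispatched by the same spanning-tree/spanning-forest combinatorics already used in defining the Symanzik polynomials, together with the observation that in $\F_m = \U\sum_e m_e^2 x_e$ every monomial of $\U$ gets multiplied by every $x_e$, so $\supp(\F_m) = \{\,\alpha + \e_e : \alpha \in \supp(\U),\ e\in E\,\}$, which one checks equals $\{\e_i+\e_j : i,j\}\setminus(\text{pairs of coloops})$ plus all $2\e_i$ — and this set is manifestly of the required hypersimplicial type.
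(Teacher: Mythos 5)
Your proposal has a genuine gap: the first route, in which you place all of $\supp(\F)$ inside the degree-two lattice $\{\e_i+\e_j\}$ and analyze edge directions by bounding coordinates in $\{0,1,2\}$ with total $2$, is only valid for one-loop graphs. The statement is claimed (and used) in the paper for Feynman graphs of arbitrary loop order, where $\F$ has degree $L+1$ and $\U$ has degree $L$; at two loops already, $\supp(\F)$ lives on the hyperplane of coordinate-sum~3 and the ``hypersimplicial'' analysis does not apply. Your second route, via Schultka's generic-kinematics result together with an argument that the support of $\F$ is independent of external kinematics, is closer in spirit but you do not carry it through, and the worries about bridges that you raise there are a red herring: the observation that any spanning two-forest is a spanning tree minus one edge gives $\supp(\F_0)\subseteq\supp(\F_m)$ immediately whenever all masses are nonzero, at any loop order.

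What you are missing is the clean finishing move, which is both shorter and works uniformly in the loop number. You do actually write down the key intermediate fact, namely
\[
\supp(\F_m) \;=\; \supp(\U) + \{\e_e \;:\; e\in E\}\,,
\]
which immediately gives $\newt(\F)=\newt(\F_m)=\newt(\U)+\Delta(1,n)$, the Minkowski sum of the matroid polytope $\newt(\U)$ (for the dual matroid of $G$) and the standard simplex. Both summands are generalized permutohedra, and the GP property is closed under Minkowski addition (\cite{Aguiar2017}), which concludes the proof. This is precisely how the paper argues. In short: you identified the Minkowski-sum structure of $\supp(\F_m)$ but then specialized to degree two and attempted to classify edge directions by hand, rather than invoking closure of GP under Minkowski addition, which is the point that makes the argument go through in full generality.
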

The proof of this statement is contained in \cite{TH22} but not in reference to the GP property. We state it here for completeness.
\begin{proof}
    When all internal masses are non-zero all vertices of $\newt(F)$ must always come from $\U\cdot\sum_{e\in E}m_e^2x_e\subset\F$, i.e.
    \begin{equation}
        \newt(\F)=\newt\left(\U\cdot\sum_{e\in E}m_e^2x_e\right)=\newt\left(\U\right)+\newt\left(\sum_{e\in E}m_e^2x_e\right).
    \end{equation}
    Since $\newt\left(\sum_{e\in E}m_e^2x_e\right)$ is just the standard simplex $\Delta(1,{n})$, which is a GP, and $\newt(\U)$ is a GP, this means that $\newt(\F)$ is a GP since the GP property is closed under Minkowski addition \cite{Aguiar2017}. 
\end{proof}

Another case is contained in \cite{TH22}, assume all internal masses are zero, i.e. $m_e=0$ for all $e\in E$, and that every two-forest of $G$ comes with a non-zero coefficient. This last assumption means that $V=V_{\mathrm{ext}}$ and that $p(V')^2\neq 0$ for all $V'\subset V$ where $p(V')=\sum_{v\in V'}p_v$. For this setup $\newt(\F)$ is a matroid polytope and hence a GP.

This result was generalized by Walther in \cite{walther2022feynman} where he managed to remove the assumption $V_{\mathrm{ext}}=V$. He showed that $\newt(\F)$ is a matroid polytope, and hence GP, if $m_e=0$ for all $e\in E$ and $p(V')^2\neq 0$ for all $V'\subset V_{\mathrm{ext}}$. The assumption placed on the external momenta essentially means that they behave as Euclidean vectors in the sense that (i) $p_v^2\neq 0$ for all $v\in V_{\mathrm{ext}}$
and (ii) $(p_v+p_u)^2\neq0$ for all $u,v\in V_{\mathrm{ext}}$. Neither of these two assumptions are true in the general Minkowski setting.

As long as the equality
\begin{equation}
    \newt(\F)=\newt\left(\U\cdot\sum_{e\in E}m_e^2x_e\right)
\end{equation}
holds for a Feynman graph (with at least one $m_e\neq 0$), it is clear that $\newt(\F)$ is a GP by the arguments in the proof of Proposition \ref{prop: massive GP}. This can be rephrased to the statement, that a sufficient condition for $\newt(\F)$ to be a GP is that $\supp(\F_0)\subseteq\supp(\F_m)$. Since the terms in $\F_0$ come from the two-forests of the graph, one way this can be true is if $\F_m$ contains terms from all two-forests. {This is guaranteed if every vertex of $G$ is connected to an external vertex by a massive path, i.e. a path of consecutive edges all with non-zero mass.}

\begin{theorem}[Theorem 4.5 in \cite{walther2022feynman}]\label{thm: massive path} 
Let $G$ be a one-particle irreducible and one-vertex irreducible Feynman graph such that no cancellation between $\F_0$ and $\F_m$ occurs and $p(V')\neq 0$ for all $V'\subset V_{\mathrm{ext}}$. Then every term in $\F_0$ also appears in $\F_m$ (i.e. $\supp(\F_0)\subseteq\supp(\F_m)$) if and only if every $v\in V$ has a massive path to an external vertex $v'\in V_{\mathrm{ext}}$. 
\end{theorem}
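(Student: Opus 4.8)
The plan is to prove the biconditional statement by analyzing the combinatorics of spanning two-forests of $G$ in relation to massive edges. Recall that $\F_0 = -\sum_{F} p(F)^2 \prod_{e\notin F} x_e$, where the sum runs over spanning two-forests, and $\F_m = \U \cdot \sum_{e\in E} m_e^2 x_e$. By the hypothesis that no cancellation occurs between $\F_0$ and $\F_m$, the monomial support $\supp(\F)$ equals $\supp(\F_0)\cup\supp(\F_m)$, and $\supp(\F_0)\subseteq\supp(\F_m)$ is equivalent to $\newt(\F)=\newt(\F_m)=\newt(\U)+\Delta(1,n)$, which we already know (via Proposition~\ref{prop: massive GP} and the closure of the GP property under Minkowski sums) forces $\newt(\F)$ to be a GP. So the analytic content reduces entirely to the claimed graph-theoretic equivalence. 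First I would set up a dictionary: a monomial $\prod_{e\notin T}x_e$ of $\U$ is indexed by a spanning tree $T$, equivalently by the complementary edge set $E\setminus T$ (a basis of the cographic matroid); a monomial of $\F_0$ is indexed by a spanning two-forest $F=(T_1,T_2)$, equivalently by $E\setminus F$; and a monomial of $\F_m$ is of the form $x_e\cdot\prod_{e'\notin T}x_{e'}$ for a spanning tree $T$ and a massive edge $e$, so its exponent vector is the indicator of $(E\setminus T)\cup\{e\}$ (with a $2$ in slot $e$ if $e\notin T$).

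For the ``if'' direction, assume every vertex $v\in V$ has a massive path to some external vertex. Given a spanning two-forest $F=(T_1,T_2)$, I want to realize its monomial as one coming from $\F_m$, i.e.\ find a massive edge $e$ and a spanning tree $T\supseteq$ (the appropriate subforest) such that $(E\setminus T)\cup\{e\} = E\setminus F$ as multisets. The natural strategy: one of the two components, say $T_2$, after being joined to $T_1$ by some edge $e$, yields a spanning tree; I need $e$ to be massive. Here is where I would use one-particle and one-vertex irreducibility together with the massive-path hypothesis: irreducibility controls which two-forests actually arise with nonzero momentum coefficient, and the massive-path condition guarantees that at least one of the components of the two-forest touches the massive subgraph in a way that lets me pick the connecting edge $e$ to be massive. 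More carefully, I would argue that for each two-forest $F$, at least one component contains a vertex whose massive path to $V_{\mathrm{ext}}$ stays within that component up to the ``cut'' edge, so that the cut edge—or an edge on the massive path near the cut—can serve as the required massive $e$. This matching of monomials is the crux.

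For the ``only if'' (contrapositive) direction, suppose some vertex $v_0\in V$ has no massive path to any external vertex. Let $S$ be the set of vertices reachable from $v_0$ using only massive edges together with $v_0$ itself; by assumption $S\cap V_{\mathrm{ext}}=\emptyset$ and every edge leaving $S$ is massless (internal-to-$S$ edges may be massive or massless, but no massive edge crosses $\partial S$). The idea is to exhibit a spanning two-forest $F$ whose monomial cannot appear in $\F_m$: take $F=(T_1,T_2)$ where $T_2$ is a spanning tree of the induced subgraph on $S$ and $T_1$ a spanning tree of the rest, so that $E\setminus F$ contains all edges between $S$ and its complement plus the cycle-closing edges; since all boundary edges are massless and $\F_m$-monomials require the ``extra'' squared edge to be massive, one checks that no choice of massive edge and spanning tree reproduces this exponent vector—hence the monomial of $F$ lies in $\supp(\F_0)\setminus\supp(\F_m)$. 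I would need irreducibility and $p(V')\neq 0$ to ensure this particular two-forest actually has a nonzero coefficient in $\F_0$ (so it genuinely appears).

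The main obstacle I anticipate is the ``if'' direction bookkeeping: precisely matching each two-forest monomial to a massive-edge-decorated spanning-tree monomial, and verifying that the irreducibility hypotheses are exactly what is needed to make the massive path reach the right component and supply a usable massive edge at the cut. This is essentially the combinatorial heart of Walther's Theorem~4.5, and the cleanest route is likely to cite and adapt his argument directly rather than reprove it from scratch, since it is stated earlier in the excerpt as a known result; the contribution here is the reinterpretation in terms of the GP property via Proposition~\ref{prop: massive GP}.
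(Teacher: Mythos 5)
The paper does not prove this statement at all: it is imported verbatim as ``Theorem~4.5 in~\cite{walther2022feynman}'' and stated without a proof environment, so the intended treatment is precisely what you arrive at in your last sentence---cite Walther's result. There is therefore no in-paper proof to compare against, and your closing instinct is the correct one.

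That said, since you did sketch an argument, a few remarks on it. Your dictionary is right and is worth making precise: a two-forest monomial $\prod_{e\notin F}x_e$ of $\F_0$ lies in $\supp(\F_m)$ if and only if $F\cup\{e\}$ is a spanning tree for some massive edge $e$ with $e\in T:=F\cup\{e\}$, i.e.\ if and only if a massive edge joins the two components $T_1,T_2$ of $F$ (the case $e\notin T$ produces an $x_e^2$ and cannot match a squarefree $\F_0$ monomial). Both directions then reduce to that cut-edge criterion, which is cleaner than what you wrote. Your ``if'' paragraph leaves the crucial step as a hand-wave: the massive-path hypothesis by itself does not obviously produce a massive edge crossing an arbitrary two-forest cut, since the massive path from a given vertex may remain entirely inside one component (and if the vertex in question is itself external the hypothesis is vacuous for it). Making this work is exactly where the 1PI/1VI hypotheses and the nonvanishing of $p(V')$ must enter, and those are asserted rather than used in your sketch. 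In the ``only if'' direction, note also that the induced subgraph on $V\setminus S$ need not be connected (2-connectivity of $G$ does not prevent a multi-vertex removal from disconnecting it), so your $T_1$ may be a forest rather than a tree; the construction of the bad two-forest needs more care, e.g.\ by contracting $S$ and picking a spanning tree whose deletion of one boundary edge isolates $S$ on one side. None of these are fatal to the strategy, but they are precisely the places Walther's own argument must do work, and the right move in this paper's context is to cite, as you and the authors do.
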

\noindent {A direct consequence is that $\newt(\F)$ is a GP for every Feynman graph satisfying Proposition \ref{prop: massive GP} or Theorem~\ref{thm: massive path}. In addition to the GP property being a desirable property, see for example the discussion at the beginning of this section, GP also implies the Cohen-Macaulay property holds, see Figure \ref{fig:NormCMGPEtc}, and hence that the number of master integrals may be calculated from the volume of the associated polytope independent of generalized propagator powers and space time dimension.}

\section{Conclusions and outlook}\label{sec: outlook}
In this paper we have recast the problem of determining the symbol alphabet of a polylogarithmic Feynman integral as the question of factorizing its principal $A$-determinant, which encodes its Landau singularities and may be obtained independently of the standard procedure of its analytic evaluation. We have primarily studied one-loop Feynman integrals. Our main results are the formulas for their symbol alphabet \eqref{eq: letter one-contracted}-\eqref{eq: rational letter} and canonical differential equations \eqref{eq: CDE D0+E even}-\eqref{eq: CDE D0+E odd} together with a \ma\ code for their automatic evaluation, as well as the proof that normality, and hence the Cohen-Macaulay property, holds in Theorem \ref{thm:Normal_1loop}. These results are complimented with the limiting procedure for specialized kinematics in subsection \ref{sec: limiting procedure}, also implemented in \ma, and a discussion of the generalized permutohedron in subsection \ref{sec:GP}.

While the main focus of this paper is on one-loop graphs we are also optimistic that the approach to obtaining the symbol letters via the principal $A$-determinant described in this note will also apply in more generality. 
To this end we conclude with a simple example of a two-loop graph where the principal $A$-determinant gives the symbol alphabet. We consider the slashed box with two different choices for one off-shell leg.
\begin{equation*}
    \centering
     \begin{tikzpicture}[baseline=-\the\dimexpr\fontdimen22\textfont2\relax]
    \begin{feynman}
    \vertex (a);
    \vertex [right = of a] (b);
    \vertex [below = of b] (c);
    \vertex [below = of a] (d);
    \vertex [above left = of a] (x){\(p_1\)};
    \vertex [above right = of b] (y){\(p_4\)};
    \vertex [below right = of c] (z){\(p_3\)};
    \vertex [below left = of d] (w){\(p_2\)};
    \diagram*{
        (x)--[fermion](a), (y)--[fermion](b), (z)--[fermion](c), (w)--[fermion](d), (a) -- [edge label=\(x_1\)] (b) -- [edge label=\(x_4\)] (c) --[edge label=\(x_3\)] (d) -- [edge label=\(x_2\)](a), (d)--[edge label=\(x_5\)] (b)
    };
    \end{feynman}
    \end{tikzpicture}
\end{equation*}
The first Symanzik polynomial is independent of the kinematics and is therefore the same in all the following different cases:
\begin{equation}
    \U=x_1x_3+x_1x_4+x_1x_5+x_2x_3+x_2x_4+x_2x_5+x_3x_5+x_4x_5.
\end{equation}

The simplest one-mass case is the off-shell leg being connected to the internal diagonal, e.g. $p_2^2\neq 0$ while $p_1^2=p_3^2=p_4^2=0$. This gives the $\F$-polynomial
\begin{equation}
    \F=-sx_1x_3x_5-tx_2x_4x_5-p_2^2x_2x_3x_5.
\end{equation}
The Newton polytope $\newt(\F)$ is not a generalized permutohedron (GP) and $\newt(\G)$ is not edge-unimodular, however, $I_A$ has a radical initial ideal and therefore $\newt(\G)$ has a regular unimodular triangulation so $\NN A$ is normal. At one-loop and with generic kinematics this means that the number of master integrals equals the volume of $\newt(\G)$. This is no longer true at two loops, since Gale duality is no longer enough to fix all coefficients in $\U$ to be one, as required for Feynman integrals. The physically interesting case now is a restriction ideal of $H_A(\beta)$.  In physical variables (and with all coefficients in $\U$ one) we obtain that the reduced principal $A$-determinant is:
\begin{equation*}
    \widetilde{E_A}(\G)=(p_2^2-s-t)(p_2^2-s)(p_2^2-t)stp_2^2.
\end{equation*}
With the three variables $z_1=s/p_2^2,\ z_2=t/p_2^2$ and $z_3$ satisfying $z_1+z_2+z_3=1$ we get
\begin{equation}
    \widetilde{E_A}(\G)\propto z_3(1-z_1)(1-z_2)z_1z_2.
\end{equation}
These five factors constitute all but one letter in the symbol alphabet for two-dimensional harmonic polylogarithms~\cite{Gehrmann:2000zt}, known to be the appropriate class of functions for describing all four-point two-loop master integrals with one offshell leg, and hence also the slashed box integrals discussed here. 

The other one-mass configuration has $\F$-polynomial
\begin{equation*}
    \F=-sx_1x_3x_5-tx_2x_4x_5-p_1^2x_1x_2(x_3+x_4+x_5).
\end{equation*}
Again $\newt(\F)$ is not GP and $\newt(\G)$ is not edge-unimodular, however, $I_A$ has a radical initial ideal and therefore $\newt(\G)$ has a regular unimodular triangulation. So $\NN A$ is normal, where $A$ is the support of $\G$,
\begin{equation}
    A=\begin{pmatrix}
        1&1&1&1&1&1&1&1&1&1&1&1&1\\
        1&1&1&0&0&0&0&0&1&0&1&1&1\\
        0&0&0&1&1&1&0&0&0&1&1&1&1\\
        1&0&0&1&0&0&1&0&1&0&1&0&0\\
        0&1&0&0&1&0&0&1&0&1&0&1&0\\
        0&0&1&0&0&1&1&1&1&1&0&0&1
    \end{pmatrix}.
\end{equation}
As before we use the physically relevant setup and work in physical variables. The reduced principal $A$-determinant is:
\begin{equation*}
    \widetilde{E_A}(\G)=(p_1^2-t)(p_1^2-s)(p_1^2-s-t)(s+t)stp_1^2.
\end{equation*}
With the same change of variables as before (but with $p_2$ taking the role of $p_1$) we get
\begin{equation}
    \widetilde{E_A}(\G)\propto (1-z_2)(1-z_1)z_3(1-z_3)z_1z_2,
\end{equation}
which is the {\em full symbol alphabet for the two-dimensional harmonic polylogarithms}.

At this level the discriminants which needed to be calculated are quite sizeable. For example, not only does the full $A$-discriminant in this case have degree 14, but the discriminant corresponding to the face
\begin{equation}
    A\cap\Gamma=\begin{pmatrix}
        1&1&1&1&1&1&1&1&1&1\\
        1&1&0&0&0&0&1&0&1&1\\
        0&0&1&1&0&0&0&1&1&1\\
        1&0&1&0&1&0&1&0&1&0\\
        0&1&0&1&0&1&0&1&0&1\\
        0&0&0&0&1&1&1&1&0&0
    \end{pmatrix}
\end{equation}
has degree 20.
Even though computational complexity grows quickly at higher loops, and we have the fact that the physically interesting ideals will be restriction ideals of $H_A(\beta)$, this nontrivial two-loop example of the two-loop slashed box integral with one offshell leg not only shows that its principal $A$-determinant may still be computed directly; but also that it yields the full symbol alphabet of the integral in question.

\acknowledgments

We have benefited from stimulating discussions with Ekta Chaubey, Einan Gardi, Sebastian Mizera, Ben Page and Simone Zoia. GP and FT acknowledge
support from the Deutsche Forschungsgemeinschaft under Germany’s Excellence Strategy – EXC 2121 “Quantum Universe” – 390833306.
While preparing this work Martin Helmer was partially supported by the Air Force Office of Scientific Research under award
number FA9550-22-1-0462.

\bibliography{Ref}

\providecommand{\href}[2]{#2}\begingroup\raggedright\begin{thebibliography}{100}

\bibitem{Weinberg:1995mt}
S.~Weinberg, \emph{{The Quantum theory of fields. Vol. 1: Foundations}}.
\newblock Cambridge University Press, 6, 2005.

\bibitem{Bern:2019nnu}
Z.~Bern, C.~Cheung, R.~Roiban, C.-H. Shen, M.~P. Solon and M.~Zeng,
  \emph{{Scattering Amplitudes and the Conservative Hamiltonian for Binary
  Systems at Third Post-Minkowskian Order}},
  \href{https://doi.org/10.1103/PhysRevLett.122.201603}{\emph{Phys. Rev. Lett.}
  {\bfseries 122} (2019) 201603},
  [\href{https://arxiv.org/abs/1901.04424}{{\ttfamily 1901.04424}}].

\bibitem{Dlapa:2022lmu}
C.~Dlapa, G.~K\"alin, Z.~Liu, J.~Neef and R.~A. Porto, \emph{{Radiation
  Reaction and Gravitational Waves at Fourth Post-Minkowskian Order}},
  \href{https://doi.org/10.1103/PhysRevLett.130.101401}{\emph{Phys. Rev. Lett.}
  {\bfseries 130} (2023) 101401},
  [\href{https://arxiv.org/abs/2210.05541}{{\ttfamily 2210.05541}}].

\bibitem{zinn2021quantum}
J.~Zinn-Justin, \emph{Quantum field theory and critical phenomena}, vol.~171.
\newblock Oxford University Press, 2021.

\bibitem{Borowka:2017idc}
S.~Borowka, G.~Heinrich, S.~Jahn, S.~P. Jones, M.~Kerner, J.~Schlenk et~al.,
  \emph{{pySecDec: a toolbox for the numerical evaluation of multi-scale
  integrals}}, \href{https://doi.org/10.1016/j.cpc.2017.09.015}{\emph{Comput.
  Phys. Commun.} {\bfseries 222} (2018) 313--326},
  [\href{https://arxiv.org/abs/1703.09692}{{\ttfamily 1703.09692}}].

\bibitem{Smirnov:2021rhf}
A.~V. Smirnov, N.~D. Shapurov and L.~I. Vysotsky, \emph{{FIESTA5: Numerical
  high-performance Feynman integral evaluation}},
  \href{https://doi.org/10.1016/j.cpc.2022.108386}{\emph{Comput. Phys. Commun.}
  {\bfseries 277} (2022) 108386},
  [\href{https://arxiv.org/abs/2110.11660}{{\ttfamily 2110.11660}}].

\bibitem{Borinsky:2020rqs}
M.~Borinsky, \emph{{Tropical Monte Carlo quadrature for Feynman integrals}},
  \href{https://doi.org/10.4171/AIHPD/158}{\emph{Ann. Inst. Henri Poincar\'e
  Comb. Phys. Interact.} (8, 2023) },
  [\href{https://arxiv.org/abs/2008.12310}{{\ttfamily 2008.12310}}].

\bibitem{Borinsky:2023jdv}
M.~Borinsky, H.~J. Munch and F.~Tellander, \emph{{Tropical Feynman integration
  in the Minkowski regime}},
  \href{https://arxiv.org/abs/2302.08955}{{\ttfamily 2302.08955}}.

\bibitem{Lee:2013mka}
R.~N. Lee, \emph{{LiteRed 1.4: a powerful tool for reduction of multiloop
  integrals}}, \href{https://doi.org/10.1088/1742-6596/523/1/012059}{\emph{J.
  Phys. Conf. Ser.} {\bfseries 523} (2014) 012059},
  [\href{https://arxiv.org/abs/1310.1145}{{\ttfamily 1310.1145}}].

\bibitem{Smirnov:2019qkx}
A.~V. Smirnov and F.~S. Chuharev, \emph{{FIRE6: Feynman Integral REduction with
  Modular Arithmetic}},
  \href{https://doi.org/10.1016/j.cpc.2019.106877}{\emph{Comput. Phys. Commun.}
  {\bfseries 247 } (2020) 106877},
  [\href{https://arxiv.org/abs/1901.07808}{{\ttfamily 1901.07808}}].

\bibitem{vonManteuffel:2012np}
A.~von Manteuffel and C.~Studerus, \emph{{Reduze 2 - Distributed Feynman
  Integral Reduction}},  \href{https://arxiv.org/abs/1201.4330}{{\ttfamily
  1201.4330}}.

\bibitem{Maierhofer:2018gpa}
P.~Maierh\"ofer and J.~Usovitsch, \emph{{Kira 1.2 Release Notes}},
  \href{https://arxiv.org/abs/1812.01491}{{\ttfamily 1812.01491}}.

\bibitem{Panzer:2014caa}
E.~Panzer, \emph{{Algorithms for the symbolic integration of hyperlogarithms
  with applications to Feynman integrals}},
  \href{https://doi.org/10.1016/j.cpc.2014.10.019}{\emph{Comput. Phys. Commun.}
  {\bfseries 188} (2015) 148--166},
  [\href{https://arxiv.org/abs/1403.3385}{{\ttfamily 1403.3385}}].

\bibitem{Duhr:2019tlz}
C.~Duhr and F.~Dulat, \emph{{PolyLogTools \textemdash{} polylogs for the
  masses}}, \href{https://doi.org/10.1007/JHEP08(2019)135}{\emph{JHEP}
  {\bfseries 08} (2019) 135},
  [\href{https://arxiv.org/abs/1904.07279}{{\ttfamily 1904.07279}}].

\bibitem{Prausa:2017ltv}
M.~Prausa, \emph{{epsilon: A tool to find a canonical basis of master
  integrals}}, \href{https://doi.org/10.1016/j.cpc.2017.05.026}{\emph{Comput.
  Phys. Commun.} {\bfseries 219} (2017) 361--376},
  [\href{https://arxiv.org/abs/1701.00725}{{\ttfamily 1701.00725}}].

\bibitem{Gituliar:2017vzm}
O.~Gituliar and V.~Magerya, \emph{{Fuchsia: a tool for reducing differential
  equations for Feynman master integrals to epsilon form}},
  \href{https://doi.org/10.1016/j.cpc.2017.05.004}{\emph{Comput. Phys. Commun.}
  {\bfseries 219} (2017) 329--338},
  [\href{https://arxiv.org/abs/1701.04269}{{\ttfamily 1701.04269}}].

\bibitem{Meyer:2017joq}
C.~Meyer, \emph{{Algorithmic transformation of multi-loop master integrals to a
  canonical basis with CANONICA}},
  \href{https://doi.org/10.1016/j.cpc.2017.09.014}{\emph{Comput. Phys. Commun.}
  {\bfseries 222} (2018) 295--312},
  [\href{https://arxiv.org/abs/1705.06252}{{\ttfamily 1705.06252}}].

\bibitem{Dlapa:2020cwj}
C.~Dlapa, J.~Henn and K.~Yan, \emph{{Deriving canonical differential equations
  for Feynman integrals from a single uniform weight integral}},
  \href{https://doi.org/10.1007/JHEP05(2020)025}{\emph{JHEP} {\bfseries 05}
  (2020) 025}, [\href{https://arxiv.org/abs/2002.02340}{{\ttfamily
  2002.02340}}].

\bibitem{Lee:2020zfb}
R.~N. Lee, \emph{{Libra: A package for transformation of differential systems
  for multiloop integrals}},
  \href{https://doi.org/10.1016/j.cpc.2021.108058}{\emph{Comput. Phys. Commun.}
  {\bfseries 267} (2021) 108058},
  [\href{https://arxiv.org/abs/2012.00279}{{\ttfamily 2012.00279}}].

\bibitem{Belitsky:2022gba}
A.~V. Belitsky, A.~V. Smirnov and V.~A. Smirnov, \emph{{MB tools reloaded}},
  \href{https://doi.org/10.1016/j.nuclphysb.2022.116067}{\emph{Nucl. Phys. B}
  {\bfseries 986} (2023) 116067},
  [\href{https://arxiv.org/abs/2211.00009}{{\ttfamily 2211.00009}}].

\bibitem{Jantzen:2012mw}
B.~Jantzen, A.~V. Smirnov and V.~A. Smirnov, \emph{{Expansion by regions:
  revealing potential and Glauber regions automatically}},
  \href{https://doi.org/10.1140/epjc/s10052-012-2139-2}{\emph{Eur. Phys. J. C}
  {\bfseries 72} (2012) 2139},
  [\href{https://arxiv.org/abs/1206.0546}{{\ttfamily 1206.0546}}].

\bibitem{Peraro:2019svx}
T.~Peraro, \emph{{FiniteFlow: multivariate functional reconstruction using
  finite fields and dataflow graphs}},
  \href{https://doi.org/10.1007/JHEP07(2019)031}{\emph{JHEP} {\bfseries 07}
  (2019) 031}, [\href{https://arxiv.org/abs/1905.08019}{{\ttfamily
  1905.08019}}].

\bibitem{Landau:1959fi}
L.~D. Landau, \emph{{On analytic properties of vertex parts in quantum field
  theory}},
  \href{https://doi.org/10.1016/B978-0-08-010586-4.50103-6}{\emph{Nucl. Phys.}
  {\bfseries 13} (1959) 181--192}.

\bibitem{Cutkosky:1960sp}
R.~E. Cutkosky, \emph{{Singularities and discontinuities of Feynman
  amplitudes}}, \href{https://doi.org/10.1063/1.1703676}{\emph{J. Math. Phys.}
  {\bfseries 1} (1960) 429--433}.

\bibitem{Eden:1966dnq}
R.~J. Eden, P.~V. Landshoff, D.~I. Olive and J.~C. Polkinghorne, \emph{{The
  analytic S-matrix}}.
\newblock Cambridge Univ. Press, Cambridge, 1966.

\bibitem{Regge1968}
T.~Regge, \emph{{Algebraic Topology Methods in the Theory of Feynman
  Relativistic Amplitudes}},  in \emph{{Battelle rencontres - 1967 lectures in
  mathematics and physics}}, (New York, NY), pp.~433--458, W. A. Benjamin,
  1968.

\bibitem{Kashiwara1977}
M.~Kashiwara and T.~Kawai, \emph{Holonomic systems of linear differential
  equations and {F}eynman integrals},
  \href{https://doi.org/10.2977/prims/1195196602}{\emph{Publ. Res. Inst. Math.
  Sci.} {\bfseries 12} (1976/77) 131--140}.

\bibitem{Kashiwara1978}
M.~Kashiwara, T.~Kawai and T.~\={O}shima, \emph{A study of {F}eynman integrals
  by micro-differential equations}, {\emph{Comm. Math. Phys.} {\bfseries 60}
  (1978) 97--130}.

\bibitem{Gelfand1989}
I.~M. Gel'fand, A.~V. Zelevinski\u{\i} and M.~M. Kapranov, \emph{Hypergeometric
  functions and toric varieties},
  \href{https://doi.org/10.1007/BF01078777}{\emph{Funktsional. Anal. i
  Prilozhen.} {\bfseries 23} (1989) 12--26}.

\bibitem{delaCruz:2019skx}
L.~de~la Cruz, \emph{{Feynman integrals as A-hypergeometric functions}},
  \href{https://doi.org/10.1007/JHEP12(2019)123}{\emph{JHEP} {\bfseries 12}
  (2019) 123}, [\href{https://arxiv.org/abs/1907.00507}{{\ttfamily
  1907.00507}}].

\bibitem{Klausen:2019hrg}
R.~P. Klausen, \emph{{Hypergeometric Series Representations of Feynman
  Integrals by GKZ Hypergeometric Systems}},
  \href{https://doi.org/10.1007/JHEP04(2020)121}{\emph{JHEP} {\bfseries 04}
  (2020) 121}, [\href{https://arxiv.org/abs/1910.08651}{{\ttfamily
  1910.08651}}].

\bibitem{Klemm:2019dbm}
A.~Klemm, C.~Nega and R.~Safari, \emph{{The $l$-loop Banana Amplitude from GKZ
  Systems and relative Calabi-Yau Periods}},
  \href{https://doi.org/10.1007/JHEP04(2020)088}{\emph{JHEP} {\bfseries 04}
  (2020) 088}, [\href{https://arxiv.org/abs/1912.06201}{{\ttfamily
  1912.06201}}].

\bibitem{Ananthanarayan:2022ntm}
B.~Ananthanarayan, S.~Banik, S.~Bera and S.~Datta, \emph{{FeynGKZ: A
  Mathematica package for solving Feynman integrals using GKZ hypergeometric
  systems}}, \href{https://doi.org/10.1016/j.cpc.2023.108699}{\emph{Comput.
  Phys. Commun.} {\bfseries 287} (2023) 108699},
  [\href{https://arxiv.org/abs/2211.01285}{{\ttfamily 2211.01285}}].

\bibitem{Kotikov:1990kg}
A.~V. Kotikov, \emph{{Differential equations method: New technique for massive
  Feynman diagrams calculation}},
  \href{https://doi.org/10.1016/0370-2693(91)90413-K}{\emph{Phys. Lett. B}
  {\bfseries 254} (1991) 158--164}.

\bibitem{Remiddi:1997ny}
E.~Remiddi, \emph{{Differential equations for Feynman graph amplitudes}},
  \href{https://doi.org/10.1007/BF03185566}{\emph{Nuovo Cim. A} {\bfseries 110}
  (1997) 1435--1452}, [\href{https://arxiv.org/abs/hep-th/9711188}{{\ttfamily
  hep-th/9711188}}].

\bibitem{Gehrmann:1999as}
T.~Gehrmann and E.~Remiddi, \emph{{Differential equations for two loop four
  point functions}},
  \href{https://doi.org/10.1016/S0550-3213(00)00223-6}{\emph{Nucl. Phys. B}
  {\bfseries 580} (2000) 485--518},
  [\href{https://arxiv.org/abs/hep-ph/9912329}{{\ttfamily hep-ph/9912329}}].

\bibitem{Henn:2013pwa}
J.~M. Henn, \emph{{Multiloop integrals in dimensional regularization made
  simple}}, \href{https://doi.org/10.1103/PhysRevLett.110.251601}{\emph{Phys.
  Rev. Lett.} {\bfseries 110} (2013) 251601},
  [\href{https://arxiv.org/abs/1304.1806}{{\ttfamily 1304.1806}}].

\bibitem{Chen1977}
K.~T. Chen, \emph{Iterated path integrals},
  \href{https://doi.org/10.1090/S0002-9904-1977-14320-6}{\emph{Bull. Amer.
  Math. Soc.} {\bfseries 83} (1977) 831--879}.

\bibitem{goncharov2001multiple}
A.~B. Goncharov, \emph{{Multiple polylogarithms and mixed Tate motives}},
  \href{https://arxiv.org/abs/math/0103059}{{\ttfamily math/0103059}}.

\bibitem{Goncharov2005}
A.~B. Goncharov, \emph{Galois symmetries of fundamental groupoids and
  noncommutative geometry},
  \href{https://doi.org/10.1215/S0012-7094-04-12822-2}{\emph{Duke Math. J.}
  {\bfseries 128} (2005) 209--284}.

\bibitem{Chetyrkin:1981qh}
K.~G. Chetyrkin and F.~V. Tkachov, \emph{{Integration by Parts: The Algorithm
  to Calculate beta Functions in 4 Loops}},
  \href{https://doi.org/10.1016/0550-3213(81)90199-1}{\emph{Nucl. Phys. B}
  {\bfseries 192} (1981) 159--204}.

\bibitem{Goncharov:2010jf}
A.~B. Goncharov, M.~Spradlin, C.~Vergu and A.~Volovich, \emph{{Classical
  Polylogarithms for Amplitudes and Wilson Loops}},
  \href{https://doi.org/10.1103/PhysRevLett.105.151605}{\emph{Phys. Rev. Lett.}
  {\bfseries 105} (2010) 151605},
  [\href{https://arxiv.org/abs/1006.5703}{{\ttfamily 1006.5703}}].

\bibitem{Abreu:2020jxa}
S.~Abreu, H.~Ita, F.~Moriello, B.~Page, W.~Tschernow and M.~Zeng,
  \emph{{Two-Loop Integrals for Planar Five-Point One-Mass Processes}},
  \href{https://doi.org/10.1007/JHEP11(2020)117}{\emph{JHEP} {\bfseries 11}
  (2020) 117}, [\href{https://arxiv.org/abs/2005.04195}{{\ttfamily
  2005.04195}}].

\bibitem{Dennen:2015bet}
T.~Dennen, M.~Spradlin and A.~Volovich, \emph{{Landau Singularities and
  Symbology: One- and Two-loop MHV Amplitudes in SYM Theory}},
  \href{https://doi.org/10.1007/JHEP03(2016)069}{\emph{JHEP} {\bfseries 03}
  (2016) 069}, [\href{https://arxiv.org/abs/1512.07909}{{\ttfamily
  1512.07909}}].

\bibitem{Prlina:2018ukf}
I.~Prlina, M.~Spradlin and S.~Stanojevic, \emph{{All-loop singularities of
  scattering amplitudes in massless planar theories}},
  \href{https://doi.org/10.1103/PhysRevLett.121.081601}{\emph{Phys. Rev. Lett.}
  {\bfseries 121} (2018) 081601},
  [\href{https://arxiv.org/abs/1805.11617}{{\ttfamily 1805.11617}}].

\bibitem{Collins:2020euz}
J.~Collins, \emph{{A new and complete proof of the Landau condition for pinch
  singularities of Feynman graphs and other integrals}},
  \href{https://arxiv.org/abs/2007.04085}{{\ttfamily 2007.04085}}.

\bibitem{Mizera:2021fap}
S.~Mizera, \emph{{Crossing symmetry in the planar limit}},
  \href{https://doi.org/10.1103/PhysRevD.104.045003}{\emph{Phys. Rev. D}
  {\bfseries 104} (2021) 045003},
  [\href{https://arxiv.org/abs/2104.12776}{{\ttfamily 2104.12776}}].

\bibitem{Klausen:2021yrt}
R.~P. Klausen, \emph{{Kinematic singularities of Feynman integrals and
  principal A-determinants}},
  \href{https://doi.org/10.1007/JHEP02(2022)004}{\emph{JHEP} {\bfseries 02}
  (2022) 004}, [\href{https://arxiv.org/abs/2109.07584}{{\ttfamily
  2109.07584}}].

\bibitem{Mizera:2021icv}
S.~Mizera and S.~Telen, \emph{{Landau discriminants}},
  \href{https://doi.org/10.1007/JHEP08(2022)200}{\emph{JHEP} {\bfseries 08}
  (2022) 200}, [\href{https://arxiv.org/abs/2109.08036}{{\ttfamily
  2109.08036}}].

\bibitem{Hannesdottir:2021kpd}
H.~S. Hannesdottir, A.~J. McLeod, M.~D. Schwartz and C.~Vergu,
  \emph{{Implications of the Landau equations for iterated integrals}},
  \href{https://doi.org/10.1103/PhysRevD.105.L061701}{\emph{Phys. Rev. D}
  {\bfseries 105} (2022) L061701},
  [\href{https://arxiv.org/abs/2109.09744}{{\ttfamily 2109.09744}}].

\bibitem{Correia:2021etg}
M.~Correia, A.~Sever and A.~Zhiboedov, \emph{{Probing multi-particle unitarity
  with the Landau equations}},
  \href{https://doi.org/10.21468/SciPostPhys.13.3.062}{\emph{SciPost Phys.}
  {\bfseries 13} (2022) 062},
  [\href{https://arxiv.org/abs/2111.12100}{{\ttfamily 2111.12100}}].

\bibitem{Hannesdottir:2022xki}
H.~S. Hannesdottir, A.~J. McLeod, M.~D. Schwartz and C.~Vergu,
  \emph{{Constraints on Sequential Discontinuities from the Geometry of
  On-shell Spaces}},  \href{https://arxiv.org/abs/2211.07633}{{\ttfamily
  2211.07633}}.

\bibitem{Lippstreu:2022bib}
L.~Lippstreu, M.~Spradlin and A.~Volovich, \emph{{Landau Singularities of the
  7-Point Ziggurat I}},  \href{https://arxiv.org/abs/2211.16425}{{\ttfamily
  2211.16425}}.

\bibitem{Berghoff:2022mqu}
M.~Berghoff and E.~Panzer, \emph{{Hierarchies in relative Picard-Lefschetz
  theory}},  \href{https://arxiv.org/abs/2212.06661}{{\ttfamily 2212.06661}}.

\bibitem{Jiang:2023qnl}
X.~Jiang and L.~L. Yang, \emph{{The recursive structure of Baikov
  representations I: generics and application to symbology}},
  \href{https://arxiv.org/abs/2303.11657}{{\ttfamily 2303.11657}}.

\bibitem{Weinzierl:2022eaz}
S.~Weinzierl, \emph{{Feynman Integrals}}.
\newblock 2022,
  \href{https://doi.org/10.1007/978-3-030-99558-4}{10.1007/978-3-030-99558-4}.

\bibitem{Lee2013}
R.~N. Lee and A.~A. Pomeransky, \emph{{Critical points and number of master
  integrals}}, \href{https://doi.org/10.1007/JHEP11(2013)165}{\emph{JHEP}
  {\bfseries 11} (2013) 165},
  [\href{https://arxiv.org/abs/1308.6676}{{\ttfamily 1308.6676}}].

\bibitem{Nilsson2013}
L.~Nilsson and M.~Passare, \emph{Mellin transforms of multivariate rational
  functions}, \href{https://doi.org/10.1007/s12220-011-9235-7}{\emph{J. Geom.
  Anal.} {\bfseries 23} (2013) 24--46}.

\bibitem{Berkesch2014}
C.~Berkesch, J.~Forsg{\aa}rd and M.~Passare, \emph{Euler-{M}ellin integrals and
  {$A$}-hypergeometric functions},
  \href{https://doi.org/10.1307/mmj/1395234361}{\emph{Michigan Math. J.}
  {\bfseries 63} (2014) 101--123}.

\bibitem{Gelfand1986}
I.~M. Gel'fand, \emph{General theory of hypergeometric functions}, {\emph{Dokl.
  Akad. Nauk SSSR} {\bfseries 288} (1986) 14--18}.

\bibitem{Gelfand1987}
I.~M. Gel'fand, M.~I. Graev and A.~V. Zelevinski\u{\i}, \emph{Holonomic systems
  of equations and series of hypergeometric type}, {\emph{Dokl. Akad. Nauk
  SSSR} {\bfseries 295} (1987) 14--19}.

\bibitem{Gelfand1990}
I.~M. Gel'fand, M.~M. Kapranov and A.~V. Zelevinsky, \emph{Generalized {E}uler
  integrals and {$A$}-hypergeometric functions},
  \href{https://doi.org/10.1016/0001-8708(90)90048-R}{\emph{Adv. Math.}
  {\bfseries 84} (1990) 255--271}.

\bibitem{Gelfand1993}
I.~M. Gelfand, M.~M. Kapranov and A.~V. Zelevinsky, \emph{A correction to the
  paper ``hypergeometric functions and toric varieties''},
  \href{https://doi.org/10.1007/BF01078854}{\emph{Funct. Anal. Appl.}
  {\bfseries 27} (1993) 295--295}.

\bibitem{adolphson1994hypergeometric}
A.~Adolphson, \emph{Hypergeometric functions and rings generated by monomials},
  \href{https://doi.org/10.1215/S0012-7094-94-07313-4}{\emph{Duke Math. J.}
  {\bfseries 73} (1994) 269--290}.

\bibitem{gelfand2008discriminants}
I.~M. Gel'fand, M.~Kapranov and A.~Zelevinsk\u{\i}, \emph{Discriminants,
  resultants, and multidimensional determinants}.
\newblock Springer Science \& Business Media, 2008.

\bibitem{cattani2006three}
E.~Cattani, ``Three lectures on hypergeometric functions.''
  \url{https://people.math.umass.edu/~cattani/hypergeom_lectures.pdf}.

\bibitem{saito2013grobner}
M.~Saito, B.~Sturmfels and N.~Takayama, \emph{Gr\"{o}bner deformations of
  hypergeometric differential equations}, vol.~6 of \emph{Algorithms and
  Computation in Mathematics}.
\newblock Springer-Verlag, Berlin, 2000,
  \href{https://doi.org/10.1007/978-3-662-04112-3}{10.1007/978-3-662-04112-3}.

\bibitem{eisenbud1996binomial}
D.~Eisenbud and B.~Sturmfels, \emph{Binomial ideals},
  \href{https://doi.org/10.1215/S0012-7094-96-08401-X}{\emph{Duke Math. J.}
  {\bfseries 84} (1996) 1--45}.

\bibitem{Polkinghorne:1960a}
J.~C. Polkinghorne and G.~R. Screaton, \emph{{The Analytic Properties of
  Perturbation Theory}}, \href{https://doi.org/10.1007/BF02860252}{\emph{I.
  Nuovo Cim.} {\bfseries 15} (1960) 289--300}.

\bibitem{Polkinghorne:1960b}
J.~C. Polkinghorne and G.~R. Screaton, \emph{{The Analytic Properties of
  Perturbation Theory}}, \href{https://doi.org/10.1007/BF02860197}{\emph{I.
  Nuovo Cim.} {\bfseries 16} (1960) 925--931}.

\bibitem{helmer2018nearest}
M.~Helmer and B.~Sturmfels, \emph{Nearest points on toric varieties},
  \href{https://doi.org/10.7146/math.scand.a-101478}{\emph{Math. Scand.}
  {\bfseries 122} (2018) 213--238}.

\bibitem{CLO}
D.~A. Cox, J.~Little and D.~O'Shea, \emph{Ideals, varieties, and algorithms}.
\newblock Undergraduate Texts in Mathematics. Springer, Cham, {Fourth}~ed.,
  2015,
  \href{https://doi.org/10.1007/978-3-319-16721-3}{10.1007/978-3-319-16721-3}.

\bibitem{Lee:2014ioa}
R.~N. Lee, \emph{{Reducing differential equations for multiloop master
  integrals}}, \href{https://doi.org/10.1007/JHEP04(2015)108}{\emph{JHEP}
  {\bfseries 04} (2015) 108},
  [\href{https://arxiv.org/abs/1411.0911}{{\ttfamily 1411.0911}}].

\bibitem{Lee:2017oca}
R.~N. Lee and A.~A. Pomeransky, \emph{{Normalized Fuchsian form on Riemann
  sphere and differential equations for multiloop integrals}},
  \href{https://arxiv.org/abs/1707.07856}{{\ttfamily 1707.07856}}.

\bibitem{Fomin1}
S.~Fomin and A.~Zelevinsky, \emph{Cluster algebras. {I}. {F}oundations},
  \href{https://doi.org/10.1090/S0894-0347-01-00385-X}{\emph{J. Amer. Math.
  Soc.} {\bfseries 15} (2002) 497--529}.

\bibitem{Chicherin:2020umh}
D.~Chicherin, J.~M. Henn and G.~Papathanasiou, \emph{{Cluster algebras for
  Feynman integrals}},
  \href{https://doi.org/10.1103/PhysRevLett.126.091603}{\emph{Phys. Rev. Lett.}
  {\bfseries 126} (2021) 091603},
  [\href{https://arxiv.org/abs/2012.12285}{{\ttfamily 2012.12285}}].

\bibitem{He:2021eec}
S.~He, Z.~Li and Q.~Yang, \emph{{Kinematics, cluster algebras and Feynman
  integrals}},  \href{https://arxiv.org/abs/2112.11842}{{\ttfamily
  2112.11842}}.

\bibitem{He:2022tph}
S.~He, J.~Liu, Y.~Tang and Q.~Yang, \emph{{The symbology of Feynman integrals
  from twistor geometries}},
  \href{https://arxiv.org/abs/2207.13482}{{\ttfamily 2207.13482}}.

\bibitem{Golden:2013xva}
J.~Golden, A.~B. Goncharov, M.~Spradlin, C.~Vergu and A.~Volovich,
  \emph{{Motivic Amplitudes and Cluster Coordinates}},
  \href{https://doi.org/10.1007/JHEP01(2014)091}{\emph{JHEP} {\bfseries 01}
  (2014) 091}, [\href{https://arxiv.org/abs/1305.1617}{{\ttfamily 1305.1617}}].

\bibitem{Drummond:2017ssj}
J.~Drummond, J.~Foster and O.~G\"urdo\u{g}an, \emph{{Cluster Adjacency
  Properties of Scattering Amplitudes in $N=4$ Supersymmetric Yang-Mills
  Theory}}, \href{https://doi.org/10.1103/PhysRevLett.120.161601}{\emph{Phys.
  Rev. Lett.} {\bfseries 120} (2018) 161601},
  [\href{https://arxiv.org/abs/1710.10953}{{\ttfamily 1710.10953}}].

\bibitem{Papathanasiou:2022lan}
G.~Papathanasiou, \emph{{The SAGEX review on scattering amplitudes Chapter 5:
  Analytic bootstraps for scattering amplitudes and beyond}},
  \href{https://doi.org/10.1088/1751-8121/ac7e8e}{\emph{J. Phys. A} {\bfseries
  55} (2022) 443006}, [\href{https://arxiv.org/abs/2203.13016}{{\ttfamily
  2203.13016}}].

\bibitem{Scott}
J.~S. Scott, \emph{Grassmannians and cluster algebras},
  \href{https://doi.org/10.1112/S0024611505015571}{\emph{Proc. London Math.
  Soc. (3)} {\bfseries 92} (2006) 345--380}.

\bibitem{Melrose:1965kb}
D.~B. Melrose, \emph{{Reduction of Feynman diagrams}},
  \href{https://doi.org/10.1007/BF02832919}{\emph{Nuovo Cim.} {\bfseries 40}
  (1965) 181--213}.

\bibitem{Dirac:1936fq}
P.~A.~M. Dirac, \emph{{Wave equations in conformal space}},
  \href{https://doi.org/10.2307/1968455}{\emph{Annals Math.} {\bfseries 37}
  (1936) 429--442}.

\bibitem{Abreu:2017ptx}
S.~Abreu, R.~Britto, C.~Duhr and E.~Gardi, \emph{{Cuts from residues: the
  one-loop case}}, \href{https://doi.org/10.1007/JHEP06(2017)114}{\emph{JHEP}
  {\bfseries 06} (2017) 114},
  [\href{https://arxiv.org/abs/1702.03163}{{\ttfamily 1702.03163}}].

\bibitem{DAndrea}
C.~D'Andrea and M.~Sombra, \emph{The {C}ayley-{M}enger determinant is
  irreducible for {$n\geq 3$}},
  \href{https://doi.org/10.1007/s11202-005-0007-0}{\emph{Sibirsk. Mat. Zh.}
  {\bfseries 46} (2005) 90--97}.

\bibitem{aomoto_1977}
K.~Aomoto, \emph{{Analytic structure of Schl\"{a}fli function}},
  \href{https://doi.org/10.1017/S0027763000017839}{\emph{Nagoya Math. J.}
  {\bfseries 68} (1977) 1–16}.

\bibitem{hirota_2004}
R.~Hirota, \emph{The Direct Method in Soliton Theory}.
\newblock Cambridge Tracts in Mathematics. Cambridge University Press, 2004,
  \href{https://doi.org/10.1017/CBO9780511543043}{10.1017/CBO9780511543043}.

\bibitem{Davydychev:1998fk}
A.~I. Davydychev and R.~Delbourgo, \emph{{Geometrical approach to the
  evaluation of multileg Feynman diagrams}}, {\emph{Acta Phys. Polon. B}
  {\bfseries 29} (1998) 2891--2899},
  [\href{https://arxiv.org/abs/hep-th/9806248}{{\ttfamily hep-th/9806248}}].

\bibitem{Mitev:2018kie}
V.~Mitev and Y.~Zhang, \emph{{SymBuild: a package for the computation of
  integrable symbols in scattering amplitudes}},
  \href{https://arxiv.org/abs/1809.05101}{{\ttfamily 1809.05101}}.

\bibitem{Tarasov:1996br}
O.~V. Tarasov, \emph{{Connection between Feynman integrals having different
  values of the space-time dimension}},
  \href{https://doi.org/10.1103/PhysRevD.54.6479}{\emph{Phys. Rev. D}
  {\bfseries 54} (1996) 6479--6490},
  [\href{https://arxiv.org/abs/hep-th/9606018}{{\ttfamily hep-th/9606018}}].

\bibitem{Lee:2009dh}
R.~N. Lee, \emph{{Space-time dimensionality D as complex variable: Calculating
  loop integrals using dimensional recurrence relation and analytical
  properties with respect to D}},
  \href{https://doi.org/10.1016/j.nuclphysb.2009.12.025}{\emph{Nucl. Phys. B}
  {\bfseries 830} (2010) 474--492},
  [\href{https://arxiv.org/abs/0911.0252}{{\ttfamily 0911.0252}}].

\bibitem{Abreu:2017mtm}
S.~Abreu, R.~Britto, C.~Duhr and E.~Gardi, \emph{{Diagrammatic Hopf algebra of
  cut Feynman integrals: the one-loop case}},
  \href{https://doi.org/10.1007/JHEP12(2017)090}{\emph{JHEP} {\bfseries 12}
  (2017) 090}, [\href{https://arxiv.org/abs/1704.07931}{{\ttfamily
  1704.07931}}].

\bibitem{Chen:2022fyw}
J.~Chen, C.~Ma and L.~L. Yang, \emph{{Alphabet of one-loop Feynman integrals
  *}}, \href{https://doi.org/10.1088/1674-1137/ac6e37}{\emph{Chin. Phys. C}
  {\bfseries 46} (2022) 093104},
  [\href{https://arxiv.org/abs/2201.12998}{{\ttfamily 2201.12998}}].

\bibitem{Spradlin:2011wp}
M.~Spradlin and A.~Volovich, \emph{{Symbols of One-Loop Integrals From Mixed
  Tate Motives}}, \href{https://doi.org/10.1007/JHEP11(2011)084}{\emph{JHEP}
  {\bfseries 11} (2011) 084},
  [\href{https://arxiv.org/abs/1105.2024}{{\ttfamily 1105.2024}}].

\bibitem{Arkani-Hamed:2017ahv}
N.~Arkani-Hamed and E.~Y. Yuan, \emph{{One-Loop Integrals from Spherical
  Projections of Planes and Quadrics}},
  \href{https://arxiv.org/abs/1712.09991}{{\ttfamily 1712.09991}}.

\bibitem{Bourjaily:2019exo}
J.~L. Bourjaily, E.~Gardi, A.~J. McLeod and C.~Vergu, \emph{{All-mass $n$-gon
  integrals in $n$ dimensions}},
  \href{https://doi.org/10.1007/JHEP08(2020)029}{\emph{JHEP} {\bfseries 08}
  (2020) 029}, [\href{https://arxiv.org/abs/1912.11067}{{\ttfamily
  1912.11067}}].

\bibitem{Baikov:1996iu}
P.~A. Baikov, \emph{{Explicit solutions of the multiloop integral recurrence
  relations and its application}},
  \href{https://doi.org/10.1016/S0168-9002(97)00126-5}{\emph{Nucl. Instrum.
  Meth. A} {\bfseries 389} (1997) 347--349},
  [\href{https://arxiv.org/abs/hep-ph/9611449}{{\ttfamily hep-ph/9611449}}].

\bibitem{Flieger:2022xyq}
W.~Flieger and W.~J. Torres~Bobadilla, \emph{{Landau and leading singularities
  in arbitrary space-time dimensions}},
  \href{https://arxiv.org/abs/2210.09872}{{\ttfamily 2210.09872}}.

\bibitem{Abreu:2017enx}
S.~Abreu, R.~Britto, C.~Duhr and E.~Gardi, \emph{{Algebraic Structure of Cut
  Feynman Integrals and the Diagrammatic Coaction}},
  \href{https://doi.org/10.1103/PhysRevLett.119.051601}{\emph{Phys. Rev. Lett.}
  {\bfseries 119} (2017) 051601},
  [\href{https://arxiv.org/abs/1703.05064}{{\ttfamily 1703.05064}}].

\bibitem{Remiddi:1999ew}
E.~Remiddi and J.~A.~M. Vermaseren, \emph{{Harmonic polylogarithms}},
  \href{https://doi.org/10.1142/S0217751X00000367}{\emph{Int. J. Mod. Phys. A}
  {\bfseries 15} (2000) 725--754},
  [\href{https://arxiv.org/abs/hep-ph/9905237}{{\ttfamily hep-ph/9905237}}].

\bibitem{Nickel:1978ds}
B.~G. Nickel, \emph{{Evaluation of Simple Feynman Graphs}},
  \href{https://doi.org/10.1063/1.523697}{\emph{J. Math. Phys.} {\bfseries 19}
  (1978) 542--548}.

\bibitem{Henn:2022ydo}
J.~M. Henn, A.~Matija\v{s}i\'c and J.~Miczajka, \emph{{One-loop hexagon
  integral to higher orders in the dimensional regulator}},
  \href{https://doi.org/10.1007/JHEP01(2023)096}{\emph{JHEP} {\bfseries 01}
  (2023) 096}, [\href{https://arxiv.org/abs/2210.13505}{{\ttfamily
  2210.13505}}].

\bibitem{walther2022feynman}
U.~Walther, \emph{{On Feynman graphs, matroids, and GKZ-systems}},
  \href{https://doi.org/10.1007/s11005-022-01614-2}{\emph{Lett. Math. Phys.}
  {\bfseries 112} (2022) 120},
  [\href{https://arxiv.org/abs/2206.05378}{{\ttfamily 2206.05378}}].

\bibitem{TH22}
F.~Tellander and M.~Helmer, \emph{{Cohen-Macaulay Property of Feynman
  Integrals}}, \href{https://doi.org/10.1007/s00220-022-04569-6}{\emph{{Commun.
  Math. Phys.}} {\bfseries 399} (2022) 1021--1037},
  [\href{https://arxiv.org/abs/2108.01410}{{\ttfamily 2108.01410}}].

\bibitem{matusevich2005homological}
L.~F. Matusevich, E.~Miller and U.~Walther, \emph{Homological methods for
  hypergeometric families},
  \href{https://doi.org/10.1090/S0894-0347-05-00488-1}{\emph{J. Amer. Math.
  Soc.} {\bfseries 18} (2005) 919--941}.

\bibitem{bruns1998cohen}
W.~Bruns and H.~J. Herzog, \emph{{Cohen-Macaulay Rings}}, vol.~39 of
  \emph{Cambridge Studies in Advanced Mathematics}.
\newblock Cambridge University Press, 2~ed., 1998,
  \href{https://doi.org/10.1017/CBO9780511608681}{10.1017/CBO9780511608681}.

\bibitem{hochster1972rings}
M.~Hochster, \emph{Rings of invariants of tori, {C}ohen-{M}acaulay rings
  generated by monomials, and polytopes},
  \href{https://doi.org/10.2307/1970791}{\emph{Ann. of Math. (2)} {\bfseries
  96} (1972) 318--337}.

\bibitem{sturmfels1996grobner}
B.~Sturmfels, \emph{Gr\"{o}bner bases and convex polytopes}, vol.~8 of
  \emph{University Lecture Series}.
\newblock American Mathematical Society, Providence, RI, 1996,
  \href{https://doi.org/10.1090/ulect/008}{10.1090/ulect/008}.

\bibitem{ziegler2012lectures}
G.~M. Ziegler, \emph{Lectures on polytopes}, vol.~152.
\newblock Springer Science \& Business Media, 2012.

\bibitem{michalek2021invitation}
M.~Micha\l~ek and B.~Sturmfels, \emph{Invitation to nonlinear algebra},
  vol.~211 of \emph{Graduate Studies in Mathematics}.
\newblock American Mathematical Society, Providence, RI, 2021,
  \href{https://doi.org/10.1090/gsm/211}{10.1090/gsm/211}.

\bibitem{ohsugi1998normal}
H.~Ohsugi and T.~Hibi, \emph{Normal polytopes arising from finite graphs},
  \href{https://doi.org/10.1006/jabr.1998.7476}{\emph{J. Algebra} {\bfseries
  207} (1998) 409--426}.

\bibitem{Schultka:2018nrs}
K.~Schultka, \emph{{Toric geometry and regularization of Feynman integrals}},
  \href{https://arxiv.org/abs/1806.01086}{{\ttfamily 1806.01086}}.

\bibitem{Postnikov2005}
A.~Postnikov, \emph{Permutohedra, associahedra, and beyond},
  \href{https://doi.org/10.1093/imrn/rnn153}{\emph{Int. Math. Res. Not.} (2009)
  1026--1106}.

\bibitem{Aguiar2017}
M.~Aguiar and F.~Ardila, \emph{Hopf monoids and generalized permutahedra},
  \href{https://arxiv.org/abs/1709.07504}{{\ttfamily 1709.07504}}.

\bibitem{GGMS}
I.~M. Gel'fand, R.~M. Goresky, R.~D. MacPherson and V.~V. Serganova,
  \emph{Combinatorial geometries, convex polyhedra, and {S}chubert cells},
  \href{https://doi.org/10.1016/0001-8708(87)90059-4}{\emph{Adv. in Math.}
  {\bfseries 63} (1987) 301--316}.

\bibitem{GelfandSerganova}
I.~M. Gel'fand and V.~V. Serganova, \emph{Combinatorial geometries and the
  strata of a torus on homogeneous compact manifolds}, {\emph{Uspekhi Mat.
  Nauk} {\bfseries 42} (1987) 107--134, 287}.

\bibitem{Howard2007}
B.~J. Howard, \emph{Matroids and geometric invariant theory of torus actions on
  flag spaces}, \href{https://doi.org/10.1016/j.jalgebra.2006.11.014}{\emph{J.
  Algebra} {\bfseries 312} (2007) 527--541}.

\bibitem{OWR}
B.~J. Howard, \emph{Edge unimodular polytopes in mini-{W}orkshop: {P}rojective
  {N}ormality of {S}mooth {T}oric {V}arieties},  vol.~4, pp.~2283--2319.
\newblock 2007.

\bibitem{Gehrmann:2000zt}
T.~Gehrmann and E.~Remiddi, \emph{{Two loop master integrals for gamma*
  ---\ensuremath{>} 3 jets: The Planar topologies}},
  \href{https://doi.org/10.1016/S0550-3213(01)00057-8}{\emph{Nucl. Phys. B}
  {\bfseries 601} (2001) 248--286},
  [\href{https://arxiv.org/abs/hep-ph/0008287}{{\ttfamily hep-ph/0008287}}].

\end{thebibliography}\endgroup
\end{document}